\documentclass[reqno, eucal]{amsart}
\usepackage{amsmath,amssymb,amsthm,amscd}
\usepackage[mathscr]{eucal}
\usepackage{epic,eepic}
\usepackage{psfrag}

\setlength{\oddsidemargin}{0.5cm}
\setlength{\evensidemargin}{0.6cm}
\setlength{\textwidth}{15cm}
\setlength{\textheight}{22.5cm}

\numberwithin{equation}{section}

\newtheorem{theorem}{Theorem}
\newtheorem{lemma}[theorem]{Lemma}

\newtheorem{proposition}[theorem]{Proposition}

\theoremstyle{definition}

\newtheorem{remark}[theorem]{Remark}


\newcommand{\ot}{\otimes}

\newcommand{\Z}{{\mathbb Z}}

\newcommand{\C}{{\mathbb C}}
\newcommand{\Q}{{\mathbb Q}}
\newcommand{\Rm}{{\mathscr R}}
\newcommand{\ket}[1]{|#1\rangle}


\begin{document}
\title[Tetrahedron equation and quantum $R$ matrices]
{Tetrahedron equation and quantum $\boldsymbol{R}$ matrices\\
for $\boldsymbol{q}$-oscillator representations of \\
$\boldsymbol{U_q(A^{(2)}_{2n})}$, 
$\boldsymbol{U_q(C^{(1)}_{n})}$ and 
$\boldsymbol{U_q(D^{(2)}_{n+1})}$}

\author{Atsuo Kuniba}
\email{atsuo@gokutan.c.u-tokyo.ac.jp}
\address{Institute of Physics, Graduate School of Arts and Sciences,
University of Tokyo, Komaba, Tokyo 153-8902, Japan}

\author{Masato Okado}
\email{okado@sci.osaka-cu.ac.jp}
\address{Department of Mathematics, Osaka City University, 
3-3-138, Sugimoto, Sumiyoshi-ku, Osaka, 558-8585, Japan}

\maketitle

\vspace{1cm}
\begin{center}{\bf Abstract}\end{center}
\vspace{0.2cm}

The intertwiner of the quantized coordinate ring $A_q(sl_3)$
is known to yield a solution to the tetrahedron equation.
By evaluating their $n$-fold composition with
special boundary vectors 
we generate series of solutions to the Yang-Baxter equation.
Finding their origin in conventional
quantum group theory is a clue to the link between
two and three dimensional integrable systems.
We identify them with the quantum $R$ matrices
associated with the 
$q$-oscillator representations of $U_q(A^{(2)}_{2n})$, 
$U_q(C^{(1)}_n)$ and $U_q(D^{(2)}_{n+1})$.

\section{Introduction}\label{sec:1}

The tetrahedron equation \cite{Zam80} is a generalization of 
the Yang-Baxter equation \cite{Bax} and 
serves as a key to the quantum integrability in three dimension (3d).
Among its many formulations
the homogeneous version of vertex type has the form
\begin{align*}
\Rm_{1,2,4}\Rm_{1,3,5}\Rm_{2,3,6}\Rm_{4,5,6}=
\Rm_{4,5,6}\Rm_{2,3,6}\Rm_{1,3,5}\Rm_{1,2,4},
\end{align*}
where $\Rm$ is a linear operator 
on the tensor cube of some vector space $F$.
The equality holds in $\mathrm{End}(F^{\otimes 6})$ 
where the indices indicate the 
components on which each $\Rm$ acts nontrivially.
We call a solution to the tetrahedron equation a 3d $R$.

In the tetrahedron equation one sees that 
if the spaces $4,5$ and $6$  are evaluated away
appropriately, it reduces to the Yang-Baxter equation:
\begin{align*}
\Rm_{1,2}\Rm_{1,3}\Rm_{2,3}=
\Rm_{2,3}\Rm_{1,3}\Rm_{1,2}.
\end{align*}
By now, algebraic background of the Yang-Baxter equation
has been well understood by the representation theory of quantum groups and
their generalizations. 
Thus the following problem arises.
Given a 3d $R$, find a prescription to reduce it to a solution of the Yang-Baxter
equation and clarify its context in the framework of quantum group theory.   
It is a clue to the connection between integrability in two and three dimensions.

In this paper we present the solution of the problem
for the distinguished example of 3d $R$ $\Rm$ \cite{KV, BS}.
It acts on the tensor cube of the $q$-oscillator Fock space $F$ and 
possesses several remarkable features related to the quantized coordinate ring 
$A_q(sl_3)$, the PBW bases of the
nilpotent subalgebra of $U_q(sl_3)$ and so on.
See Section \ref{ss:3dr} and Appendix \ref{app:3dR} 
for more accounts on the $\Rm$.

Our prescription for the reduction is parallel with the 
earlier work concerning 3d $L$ operator \cite{KS}.
Namely we take matrix elements of the tetrahedron equation 
by using certain vectors in the 4, 5 and 6 th components in $F^{\otimes 6}$.
These vectors contain a spectral parameter $z$  
and serve as special boundary conditions in the context of the 
3d lattice model associated with the $\Rm$.
In fact the tetrahedron equation itself admits a 
straightforward extension to the $n$-site situation 
(see (\ref{TEn})) for which the reduction works equally.
In this way the single 3d $R$ $\Rm$ yields infinite series of solutions of the 
Yang-Baxter equation labeled by $n$ and the boundary vectors. 
Up to an overall scalar they are rational functions of $q$ and 
the (multiplicative) spectral parameter $z$, leading to 
integrable 2d vertex models having local states in $F^{\otimes n}$.

Our main result is Theorem \ref{th:main}, which identifies 
these solutions of the Yang-Baxter equation
with the quantum $R$ matrices for the 
$q$-oscillator representations of 
$U_q(D^{(2)}_{n+1})$, $U_q(A^{(2)}_{2n})$ and $U_q(C^{(1)}_n)$
on $F^{\otimes n}$ depending on the choice of the boundary vectors.
Namely the solutions coincide with the 
intertwiner of the tensor products up to an overall scalar.
Actually Theorem \ref{th:main} has also 
guided us to introduce the $q$-oscillator representation themselves.
For type $A_n^{(1)}$ or $C_n$ they were introduced in
\cite{Ha} using a $q$-analogue
of the Weyl algebra.
Apart from complementing the latter to $C^{(1)}_n$,  
the $q$-oscillator representations for type 
$D^{(2)}_{n+1}$ and $A^{(2)}_{2n}$ in this paper
containing $B_n$ as a classical part seem to be new.
An intriguing feature of them is that 
the quantum parameter $q$ cannot be specialized to be 1.
A similar singularity at $q=1$ has been known for 
the unitary representations of 
non-compact real forms of $U_q(sl_2)$ \cite{MMNNSU,P}.
However our case has another distinctive aspect that 
the action of some weight generators $k_j$ acquire the factor $i=\sqrt{-1}$
besides a power of $q$.

We have done the task of determining 
the spectral decomposition of the associated new quantum 
$R$ matrices which consists of 
infinitely many irreducible components.
It provides the information 
complementary with the explicit formula (\ref{sabij}).

This work is motivated by several preceding results.
A relation between the tetrahedron equation and quantum $R$ matrices goes back, for
example, to \cite{S:1997,KV:2000}.
In \cite{BS}, the reduction was made for 
the same $\Rm$ by taking the {\em trace} and 
the consequent solution to the Yang-Baxter equation 
was announced to be the direct sum of the 
quantum $R$ matrices for the symmetric tensor representations of 
$U_q(A^{(1)}_{n-1})$.
We have summarized it in Appendix \ref{app:A}
for comparison.
A further result on the trace reduction is available
for $n=2$ \cite{BMS2}.
In \cite{KS}, the reduction based on the 
same boundary vectors as this paper was studied for the
$n$-product of the 3d $L$ operator \cite{BS}.
The result was identified with the quantum $R$ matrices for the 
spin representations of 
$U_q(B^{(1)}_{n}), U_q(D^{(1)}_{n})$ \cite{O}
and $U_q(D^{(2)}_{n+1})$.
See Remark \ref{re:dyn} and \cite[Remark 7.2]{KS}
for the comparison of these quantum affine algebras and 
those captured in this paper.
A notable fact is that the boundary vectors specify  
the end shape of the Dynkin diagram 
of the relevant affine Lie algebras. 
In our previous paper \cite{KO3}, 
the reduction using the same boundary vectors 
was applied to the {\em single} $\Rm$ 
and the result was identified with 
the quantum $R$ matrices for $q$-oscillator representations of 
the {\em rank one} quantum affine algebras
$U_q(A^{(1)}_1)$ and $U_q(A^{(2)}_2)$.
The present paper contains these results as the $n=1$ case 
by regarding $U_q(D^{(2)}_2)$ and $U_q(C^{(1)}_1)$ 
as $U_q(A^{(1)}_1)$ appropriately.
We note that a more general problem of studying the 
{\em mixture} of 3d $R$ and $L$ operators has been formulated 
and the simplest case has been worked out in \cite[Section 5]{KO3}.

The outline of the paper is as follows.
In Section \ref{sec:3dR} we recall the prescription \cite{KS,KO3}
to generate solutions to the Yang-Baxter equation 
from a solution to the tetrahedron equation 
using boundary vectors.
We then apply it to the 3d $R$ (\ref{Rex}) 
acting on the tensor cube $F^{\otimes 3}$ of the Fock space of
$q$-oscillators.
There are two boundary vectors leading to
the four families of solutions $S^{s,t}(z) \;(s,t=1,2)$ 
of the Yang-Baxter equation (\ref{sact})--(\ref{sabij}). 
They correspond to vertex models on planar square lattice whose local states 
range over $F^{\otimes n}$.

In Section \ref{sec:R} we introduce the $q$-oscillator representations of 
the Drinfeld-Jimbo quantum affine algebras 
$U_q(D^{(2)}_{n+1})$, $U_q(A^{(2)}_{2n})$ and $U_q(C^{(1)}_n)$.
Their tensor product decomposes into a direct sum of infinitely many 
irreducible submodules with respect to the classical part 
$U_q(B_n)$ or $U_q(C_n)$.
The spectral decomposition of the $R$ matrices is
done in Section \ref{ss:sv}  and  \ref{ss:sd}, 
although this part is not used in the rest. 

In Section \ref{sec:main}
we give our main result Theorem \ref{th:main}.
It identifies the solutions $S^{s,t}(z)$ of the Yang-Baxter equation 
with the quantum $R$ matrices for $q$-oscillator representations.
Depending on the choice of the boundary vectors,   
$U_q(D^{(2)}_{n+1})$, $U_q(A^{(2)}_{2n})$ and $U_q(C^{(1)}_n)$ cases
are covered.
The correspondence 
between the boundary vectors and the Dynkin diagrams of the 
relevant affine Lie algebras in Remark \ref{re:dyn}
is parallel but not identical to the earlier 
observation in \cite[Remark 7.2]{KS} concerning 3d $L$ operators.
Our proof of Theorem \ref{th:main} is done by using the  
characterization of the quantum $R$ matrices without recourse to 
their explicit forms.
It implies that the
commutativity with the $q$-oscillator representation of $U_q$ 
is embedded into intertwining relations of the quantized
coordinate ring $A_q$ through the evaluation by 
boundary vectors.

Appendix \ref{app:3dR} contains a brief guide to 
the 3d $R$ $\Rm$ from the representation theory of 
the quantized coordinate ring $A_q(sl_3)$ \cite{KV}. 
All the lemmas necessary for the proof of Theorem \ref{th:main}
are prepared.

Appendix \ref{app:A} is an exposition of the $U_q(A^{(1)}_{n-1})$ case
in the setting of this paper.
It is relevant to the trace reduction of the tetrahedron equation \cite{BS}.

Throughout the paper we assume that $q$ is generic and 
use the following notations:
\begin{align*}
&(z;q)_m = \prod_{k=1}^m(1-z q^{k-1}),\;\;
(q)_m = (q; q)_m,\;\;
\binom{m}{k}_{\!\!q}= \frac{(q)_m}{(q)_k(q)_{m-k}},\\
&[m]=[m]_q = \frac{q^m-q^{-m}}{q-q^{-1}}, \;\;
[m]_q! = \prod_{k=1}^m [k]_q,\;\;
{m\brack k} = \frac{[m]!}{[k]![m-k]!},
\end{align*}
where the both $q$-binomials are to be understood as zero
unless $0 \le k \le m$.

\section{Solutions of the Yang-Baxter equation from 3d $R$}\label{sec:3dR}
\subsection{General scheme}\label{ss:gs}
Let  $F$ be a vector space and 
$\Rm \in \mathrm{End}(F^{\otimes 3})$.
Consider the tetrahedron equation:
\begin{align}\label{TE}
\Rm_{1,2,4}\Rm_{1,3,5}\Rm_{2,3,6}\Rm_{4,5,6}=
\Rm_{4,5,6}\Rm_{2,3,6}\Rm_{1,3,5}\Rm_{1,2,4},
\end{align}
which is an equality in $\mathrm{End}(F^{\otimes 6})$.
Here $\Rm_{i,j,k}$ acts as $\Rm$ on the 
$i,j,k$ th components from the left in the 
tensor product $F^{\otimes 6}$.

Let us recall the prescription which produces  
an infinite family of solutions to the Yang-Baxter equation 
from a solution to the tetrahedron equation based on 
special boundary vectors \cite{KS}.
First we regard (\ref{TE}) as a one-site relation,
and extend it to the $n$-site version rather straightforwardly.
Let $\overset{\alpha_i}{F},
\overset{\beta_i}{F},
\overset{\gamma_i}{F}$ be the copies of $F$,
where $\alpha_i, \beta_i$ and $\gamma_i\,(i=1,\ldots, n)$
are just labels and not parameters.
Renaming the spaces $1,2,3$ by them,
we have 
$\Rm_{\alpha_i, \beta_i, 4}
\Rm_{\alpha_i, \gamma_i, 5}
\Rm_{\beta_i, \gamma_i, 6}\Rm_{4,5,6}=
\Rm_{4,5,6}
\Rm_{\beta_i, \gamma_i, 6}
\Rm_{\alpha_i, \gamma_i, 5}
\Rm_{\alpha_i, \beta_i, 4}$
for each $i$.
Thus for any $i$ one can carry $\Rm_{4,5,6}$ through  
$\Rm_{\alpha_i, \beta_i, 4}
\Rm_{\alpha_i, \gamma_i, 5}
\Rm_{\beta_i, \gamma_i, 6}$ to the left 
converting it to the reverse order product
$\Rm_{\beta_i, \gamma_i, 6}
\Rm_{\alpha_i, \gamma_i, 5}
\Rm_{\alpha_i, \beta_i, 4}$.
Repeating this $n$ times leads to the relation
\begin{equation}\label{TEn}
\begin{split}
&\bigl(\Rm_{\alpha_1, \beta_1, 4}
\Rm_{\alpha_1, \gamma_1, 5}
\Rm_{\beta_1, \gamma_1, 6}\bigr)
\cdots 
\bigl(\Rm_{\alpha_n, \beta_n, 4}
\Rm_{\alpha_n, \gamma_n, 5}
\Rm_{\beta_n, \gamma_n, 6}\bigr)\Rm_{4,5,6}\\
&=
\Rm_{4,5,6}
\bigl(\Rm_{\beta_1, \gamma_1, 6}
\Rm_{\alpha_1, \gamma_1, 5}
\Rm_{\alpha_1, \beta_1, 4}
\bigr)
\cdots 
\bigl(\Rm_{\beta_n, \gamma_n, 6}
\Rm_{\alpha_n, \gamma_n, 5}
\Rm_{\alpha_n, \beta_n, 4}
\bigr).
\end{split}
\end{equation}
This is an equality in 
$\mathrm{End}(\overset{\boldsymbol \alpha}{F}\otimes 
\overset{\boldsymbol\beta}{F}\otimes 
\overset{\boldsymbol\gamma}{F}\otimes 
\overset{4}{F}\otimes 
\overset{5}{F}\otimes 
\overset{6}{F})$,
where 
${\boldsymbol\alpha}=(\alpha_1,\ldots, \alpha_n)$
is the array of labels and 
$\overset{\boldsymbol \alpha}{F}= 
\overset{\alpha_1}{F}\otimes \cdots \otimes 
\overset{\alpha_n}{F}\,(= F^{\otimes n})$.
The notations  
$\overset{\boldsymbol\beta}{F}$ and $\overset{\boldsymbol\gamma}{F}$
should be understood similarly.
The argument so far is just a 3d analogue of the simple 
fact in 2d that a single $RLL = LLR$ relation for a local $L$ operator 
implies a similar relation for the $n$-site monodromy matrix
in the quantum inverse scattering method.

Now we turn to the special boundary vectors.
Suppose we have a vector
$|\chi_s(x)\rangle \in F$
depending on a variable $x$ such that 
its tensor product
\begin{equation}\label{cv}
|\chi_s(x,y)\rangle
=|\chi_s(x)\rangle \otimes |\chi_s(xy)\rangle
\otimes |\chi_s(y)\rangle
\in F\otimes F\otimes F
\end{equation}
satisfies the relation
\begin{equation}\label{RX}
\Rm |\chi_s(x,y)\rangle   = |\chi_s(x,y)\rangle.
\end{equation}
The index $s$ is put to distinguish possibly more than one such vectors.
Suppose there exist vectors in the dual space
\begin{equation*}
\langle \chi_s(x,y)|
=\langle \chi_s(x)| \otimes
\langle \chi_s(xy)| \otimes
\langle \chi_s(y)|
\in F^*\otimes F^*\otimes F^*
\end{equation*}
having the similar property
\begin{equation}\label{XR}
\langle \chi_s(x,y) |\Rm =\langle \chi_s(x,y) | .
\end{equation}
Then evaluating (\ref{TEn}) between 
$\langle \chi_s(x,y) |$ and $|\chi_t(1,1)\rangle$\footnote{
This could be chosen as $|\chi_t(x',y')\rangle$ in general. 
However in all the examples studied in this paper, such a freedom is 
absorbed into $(x,y)$.},
one encounters the object
\begin{align}\label{sdef}
S_{\boldsymbol{\alpha, \beta}}(z)=\varrho^{s,t}(z)\langle \chi_s(z)|
\Rm_{\alpha_1, \beta_1, 3}
\Rm_{\alpha_2, \beta_2, 3}\cdots
\Rm_{\alpha_n, \beta_n, 3}
|\chi_t(1)\rangle
\in \mathrm{End}
(\overset{\boldsymbol\alpha}{F}\otimes \overset{\boldsymbol\beta}{F}),
\end{align}
where the scalar $\varrho^{s,t}(z)$ is inserted to control the normalization.
The composition of $\Rm$ and 
matrix elements are taken with respect to the space signified by $3$.
Plainly one may write it as
$S(z) \in \mathrm{End}(F^{\otimes n} \otimes F^{\otimes n})$
removing the dummy labels.
Remember that $S(z)$ of course depends on $s$ and $t$ 
although they have been temporarily suppressed in the notation.
It follows from (\ref{TEn}), (\ref{RX}) and (\ref{XR}) that 
$S(z)$ satisfies the Yang-Baxter equation:
\begin{align}\label{sybe}
S_{\boldsymbol{\alpha, \beta}}(x)
S_{\boldsymbol{\alpha, \gamma}}(xy)
S_{\boldsymbol{\beta,\gamma}}(y)
=
S_{\boldsymbol{\beta,\gamma}}(y)
S_{\boldsymbol{\alpha, \gamma}}(xy)
S_{\boldsymbol{\alpha, \beta}}(x)
\in \mathrm{End}(
\overset{\boldsymbol\alpha}{F}\otimes
\overset{\boldsymbol\beta}{F}\otimes
\overset{\boldsymbol\gamma}{F}).
\end{align}
This fact holds for each choice of $(s,t)$.
See \cite[Section 5]{KO3} for a further 
generalization of the procedure to deduce  
solutions to the Yang-Baxter equation by
mixing more than one kind of solutions 
to the tetrahedron equation.

\subsection{\mathversion{bold}3d $R$ and boundary vectors}\label{ss:3dr}
Let us proceed to a concrete realization of the above scheme 
considered in this paper.
We will always take $F$ to be an 
infinite dimensional space 
$F = \bigoplus_{m\ge 0}\Q(q)|m\rangle$ with a generic parameter $q$.
The dual space will be denoted by 
$F^\ast = \bigoplus_{m\ge 0}\Q(q)\langle m|$
with the bilinear pairing 
\begin{align}\label{dup}
\langle m |n\rangle = (q^2)_m\delta_{m,n}.
\end{align}

The solution $\Rm$ of the tetrahedron equation 
we are concerned with is the one 
obtained as the intertwiner of the quantum coordinate ring 
$A_q(sl_3)$ \cite{KV}\footnote{
The formula for it on p194 in \cite{KV} contains a misprint unfortunately.
Eq. (\ref{Rex}) here is a correction of it.},  
which was also found 
from a quantum geometry consideration in a different gauge including 
square roots \cite{BS, BMS}. 
They were shown to be essentially the same object 
and to constitute the solution of the 3d reflection equation in \cite{KO1}.
It can also be identified with the transition matrix
of the PBW bases of the nilpotent subalgebra of $U_q(sl_3)$ \cite{S,KOY}. 
Here we simply call it 3d $R$. 
It is given by
\begin{align}
&\Rm(|i\rangle \otimes |j\rangle \otimes |k\rangle) = 
\sum_{a,b,c\ge 0} \Rm^{a,b,c}_{i,j,k}
|a\rangle \otimes |b\rangle \otimes |c\rangle,\label{Rabc}\\
&\Rm^{a,b,c}_{i,j,k} =\delta^{a+b}_{i+j}\delta^{b+c}_{j+k}
\sum_{\lambda+\mu=b}(-1)^\lambda
q^{i(c-j)+(k+1)\lambda+\mu(\mu-k)}
\frac{(q^2)_{c+\mu}}{(q^2)_c}
\binom{i}{\mu}_{\!\!q^2}
\binom{j}{\lambda}_{\!\!q^2},\label{Rex}
\end{align}
where $\delta^m_{n}=\delta_{m,n}$ just to save the space.
The sum (\ref{Rex}) is over $\lambda, \mu \ge 0$ 
satisfying $\lambda+\mu=b$, which is also bounded by the 
condition $\mu\le i$ and $\lambda \le j$.
The formula (\ref{Rex}) is taken from \cite[eq.(2.20)]{KO1}.
The fact that 
\begin{align}\label{cl0}
\Rm^{a,b,c}_{i,j,k}=0\;\; \text{unless} \;\;(a+b,b+c)=(i+j,j+k)
\end{align}
plays an important role and will be refereed to as {\em conservation law}.
Further properties of $\Rm$ have been summarized in Appendix \ref{app:3dR}.
It is natural to depict (\ref{Rabc}) as follows:
\[
\begin{picture}(80,45)(-40,-22)

\put(0,0){\vector(0,1){15}}\put(0,0){\line(0,-1){15}}
\put(-2,19){$\scriptstyle{b}$} \put(-2,-22){$\scriptstyle{j}$}

\put(0,0){\line(3,1){20}}\put(0,0){\vector(-3,-1){20}}
\put(22,6){$\scriptstyle{k}$} \put(-28,-9){$\scriptstyle{c}$}

\put(0,0){\line(-3,1){20}}\put(0,0){\vector(3,-1){20}}
\put(23,-9){$\scriptstyle{a}$}\put(-27,6){$\scriptstyle{i}$}

\end{picture}
\] 

Let us turn to the vectors $|\chi_s(z)\rangle$ and 
$\langle \chi_s(z)|$ in (\ref{cv})--(\ref{XR}).
We use two such vectors obtained in \cite{KS}.
In the present notation they read
\begin{align}
&|\chi_1(z)\rangle = \sum_{m\ge 0}\frac{z^m}{(q)_m}|m\rangle,\quad
|\chi_2(z)\rangle = \sum_{m\ge 0}\frac{z^m}{(q^4)_m}|2m\rangle,
\label{xk}\\
&\langle \chi_1(z)| = \sum_{m\ge 0}\frac{z^m}{(q)_m}\langle m|,\quad
\langle \chi_2(z)| = \sum_{m\ge 0}\frac{z^m}{(q^4)_m}\langle 2m|.
\label{xb}
\end{align}

\subsection{\mathversion{bold}Solution $S^{s,t}(z)$ to the Yang-Baxter equation}\label{ss:S}
We define the four families of solutions to the Yang-Baxter equation
$S(z) = S^{s,t}(z)= S^{s,t}(z,q)$ $(s,t=1,2)$ by the formula
(\ref{sdef}) by substituting  (\ref{Rex}), (\ref{xk}) and (\ref{xb})
into it.
Each family consists of the solutions 
labeled with $n \in \Z_{\ge 1}$. 
They are the matrices acting on $F^{\otimes n} \otimes F^{\otimes n}$ 
whose elements are given by \cite[Remark 1]{KO3}
\begin{align}
&S^{s,t}(z)\bigl(|{\bf i}\rangle \otimes |{\bf j}\rangle\bigr)
= \sum_{{\bf a},{\bf b}}
S^{s,t}(z)^{{\bf a},{\bf b}}_{{\bf i},{\bf j}}
|{\bf a}\rangle \otimes |{\bf b}\rangle,\label{sact}\\
&S^{s,t}(z)^{{\bf a},{\bf b}}_{{\bf i},{\bf j}}
=\varrho^{s,t}(z)\!\!\!\sum_{c_0, \ldots, c_n\ge 0} 
\frac{z^{c_0}(q^2)_{sc_0}}{(q^{s^2})_{c_0}(q^{t^2})_{c_n}}
\Rm^{a_1, b_1, sc_0}_{i_1, j_1, c_1}
\Rm^{a_2, b_2, c_1}_{i_2, j_2, c_2}\cdots
\Rm^{a_{n\!-\!1}, b_{n\!-\!1}, c_{n\!-\!2}}_{i_{n\!-\!1}, j_{n\!-\!1}, c_{n\!-\!1}}
\Rm^{a_n, b_n, c_{n\!-\!1}}_{i_n, j_n, tc_n},\label{sabij}
\end{align}
where $|{\bf a}\rangle = 
|a_1\rangle \otimes \cdots \otimes |a_n\rangle \in F^{\otimes n}$
for ${\bf a} = (a_1,\ldots, a_n) \in (\Z_{\ge 0})^n$, etc.
The factor $(q^2)_{sc_0}$ originates in (\ref{dup}).
By Applying (\ref{rtb}) to (\ref{sabij}) it is straightforward to show
\begin{align}\label{s21}
S^{t,s}(z)^{{\bf a},{\bf b}}_{{\bf i},{\bf j}}/\varrho^{t,s}(z)
=\left(\prod_{r=1}^n\frac{z^{\frac{1}{t}j_r}(q^2)_{i_r}(q^2)_{j_r}}
{z^{\frac{1}{t}b_r}(q^2)_{a_r}(q^2)_{b_r}}\right)
S^{s,t}(z^{\frac{s}{t}})^{\overline{\bf i},
\overline{\bf j}}_{\overline{\bf a},\overline{\bf b}}
/\varrho^{s,t}(z^{\frac{s}{t}}),
\end{align}
where $\overline{\bf a} = (a_n,\ldots, a_1)$ is the reverse array of  
${\bf a} = (a_1,\ldots, a_n)$ and similarly for 
$\overline{\bf b}, \overline{\bf i}$ and $\overline{\bf j}$.
Henceforth we shall only consider
$S^{1,1}(z), S^{1,2}(z)$ and $S^{2,2}(z)$ in the rest of the paper.
The matrix element (\ref{sabij}) is depicted as follows:
\[
\begin{picture}(200,75)(-100,-40)
\put(3,1){\vector(-3,-1){73}}

\put(-115,-27){$\scriptstyle{\langle \chi_s(z) |}$}

\put(-83,-27){$\scriptstyle{sc_0}$}
 
\put(-48,-16){\vector(0,1){16}}\put(-48,-16){\line(0,-1){16}}
\put(-48,-16){\vector(3,-1){16}} \put(-48,-16){\line(-3,1){16}}
\put(-51,3){$\scriptstyle{b_1}$}
\put(-72,-10){$\scriptstyle{i_1}$}
\put(-31,-26){$\scriptstyle{a_1}$}
\put(-50,-39){$\scriptstyle{j_1}$}

\put(-30,-15){$\scriptstyle{c_1}$}

\put(-15,-5){\vector(0,1){13}}\put(-15,-5){\line(0,-1){13}}
\put(-15,-5){\vector(3,-1){13}}\put(-15,-5){\line(-3,1){13}}
\put(-36,0){$\scriptstyle{i_2}$}
\put(-18,11){$\scriptstyle{b_2}$}
\put(-17,-25){$\scriptstyle{j_2}$}
\put(0,-13){$\scriptstyle{a_2}$}

\put(2,-4){$\scriptstyle{c_2}$}

\multiput(5.1,1.7)(3,1){7}{.} 
\put(6,2){
\put(21,7){\line(3,1){30}}
\put(36,12){\vector(0,1){12}}\put(36,12){\line(0,-1){12}}
\put(36,12){\vector(3,-1){12}}\put(36,12){\line(-3,1){12}}
\put(15,16){$\scriptstyle{i_n}$}
\put(50,5){$\scriptstyle{a_n}$}
\put(33,27){$\scriptstyle{b_n}$}
\put(34,-7){$\scriptstyle{j_n}$}

\put(17,1){$\scriptstyle{c_{n-1}}$}

\put(53,17){$\scriptstyle{tc_n}$}
}
 
\put(75,19){$\scriptstyle{|\chi_t(1) \rangle}$}
 
 \end{picture}
 \]

Due to (\ref{cl0}), 
$S^{s,t}(z)$ also obeys 
the conservation law
\begin{align}\label{claw}
S^{s,t}(z)^{{\bf a},{\bf b}}_{{\bf i},{\bf j}}=0\;\;
\text{unless}\;\;
{\bf a}+{\bf b} = {\bf i} + {\bf j}.
\end{align}
Due to the factor $\delta^{b+c}_{j+k}$ in (\ref{Rex}),
the sum (\ref{sabij}) is constrained by the 
$n$ conditions
$b_1+sc_0 = j_1+c_1, \ldots, b_n+c_{n-1}=j_n+tc_n$.
Therefore it is actually a {\em single} sum.
For $(s,t)=(2,2)$, they further enforce
a parity constraint
\begin{align}\label{d22}
S^{2,2}(z)^{{\bf a},{\bf b}}_{{\bf i},{\bf j}}=0
\;\;
\text{unless}\;\;
|{\bf a}| \equiv |{\bf i}|,\;\; |{\bf b}| \equiv |{\bf j}| \;\mod 2,
\end{align}
where $|{\bf a}|=a_1+\cdots + a_n$, etc.
Thus we have a direct sum decomposition
\begin{align}
S^{2,2}(z)& 
= S^{+,+}(z)\oplus S^{+,-}(z)\oplus S^{-,+}(z)\oplus S^{-,-}(z),
\label{22pm}\\
S^{\epsilon_1,\epsilon_2}(z)
& \in \mathrm{End}
\bigl((F^{\otimes n})_{\epsilon_1}\otimes 
(F^{\otimes n})_{\epsilon_2}\bigr),
\qquad
(F^{\otimes n})_{\pm} =
\bigoplus_{{\bf a} \in (\Z_{\ge 0})^n,\,
(-1)^{|{\bf a}|}=\pm 1}\Q(q)|{\bf a}\rangle.\label{fpm}
\end{align} 
We dare allow the coexistence of somewhat confusing notations 
$S^{s,t}(z)$ and $S^{\epsilon_1,\epsilon_2}(z)$
expecting that they can be properly distinguished from the context.
(A similar warning applies to $\varrho^{s,t} (z)$ in the sequel.)

We choose the normalization factors as
\begin{align}\label{rst}
\varrho^{1,1} (z)= 
\frac{(z; q)_\infty}{(-zq; q)_\infty},\;\;
\varrho^{1,2} (z)= 
\frac{(z^2; q^2)_\infty}{(-z^2q; q^2)_\infty},\;\;
\varrho^{\epsilon_1,\epsilon_2}(z)= 
\Bigl(\frac{(z; q^4)_\infty}{(zq^2; q^4)_\infty}
\Bigr)^{\epsilon_1\epsilon_2},
\end{align}
which agrees with \cite[eq.(2.22)]{KO3} for $n=1$ case.
Then the matrix elements of $S^{1,1}(z), S^{1,2}(z)$ 
and $S^{\epsilon_1,\epsilon_2}(z)$ are 
rational functions of $q$ and $z$.

\subsection{Examples}\label{ss:ex}
Let us demonstrate the calculations of the matrix elements
$S^{s,t}(z)^{{\bf a},{\bf b}}_{{\bf i},{\bf j}}$ (\ref{sabij}) 
on simple examples.
We pick a few simple matrix elements derivable from (\ref{Rex})
and (\ref{rtb}): 
\begin{align*}
\Rm^{a,0,c}_{i,j,k}&=q^{ik}\delta^{a}_{i+j}\delta^c_{j+k},\quad
\Rm^{a,b,c}_{i,0,k}= q^{ac}
\frac{(q^2)_i(q^2)_k}{(q^2)_a(q^2)_b(q^2)_c}
\delta^{a+b}_{i}\delta^{b+c}_{k},
\quad \Rm^{1,1,k}_{1,1,k}=1\!-\!(1\!+\!q^2)q^{2k},\\
\Rm^{a,b,c}_{0,j,k}&=(-1)^bq^{b(k+1)}\binom{j}{b}_{\!\!\!q^2}
\delta^{a+b}_{j}\delta^{b+c}_{j+k},
\quad
\Rm^{0,b,c}_{i,j,k}=
(-1)^jq^{j(c+1)}\frac{(q^2)_k}{(q^2)_c}
\delta^{b}_{i+j}\delta^{b+c}_{j+k}.
\end{align*}
Using them we find
\begin{align}
&S^{1,1}(z) (|{\bf 0}\rangle \otimes |{\bf 0}\rangle)=
S^{1,2}(z) (|{\bf 0}\rangle \otimes |{\bf 0}\rangle)=
S^{+,+}(z) (|{\bf 0}\rangle \otimes |{\bf 0}\rangle)=
|{\bf 0}\rangle \otimes |{\bf 0}\rangle,\label{s0}\\
&S^{+,-}(z)^{{\bf 0},{\bf e}_1}_{{\bf 0},{\bf e}_1}=\frac{-q}{1-z},\quad
S^{-,+}(z)^{{\bf e}_1,{\bf 0}}_{{\bf e}_1,{\bf 0}}=\frac{1}{1-z},\quad
S^{-,-}(z)^{{\bf e}_1,{\bf e}_1}_{{\bf e}_1,{\bf e}_1}
=\frac{z-q^2}{1-zq^2},
\label{spm}
\end{align}
where ${\bf 0} = (0,\ldots, 0)$ and 
${\bf e}_i = (0,\ldots,0,\overset{i}{1},0,\ldots, 0) \in  \Z^n$.
In fact for any ${\bf a}=(a_1,\ldots, a_n) \in (\Z_{\ge 0})^n$,
the formulas  
\begin{align*}
S^{1,t}(z)^{{\bf a},{\bf 0}}_{{\bf a},{\bf 0}}
&=
(-q)^{-|{\bf a}|}S^{1,t}(z)^{{\bf 0},{\bf a}}_{{\bf 0},{\bf a}}
=\frac{(z^t; q^t)_{|{\bf a}|}}{(-z^tq; q^t)_{|{\bf a}|}}
\quad(t=1,2),\\
S^{+,+}(z)^{{\bf a},{\bf 0}}_{{\bf a},{\bf 0}}
&=(-q)^{-|{\bf a}|}S^{+,+}(z)^{{\bf 0},{\bf a}}_{{\bf 0},{\bf a}}
=\frac{(z; q^4)_{|{\bf a}|/2}}{(zq^2; q^4)_{|{\bf a}|/2}}\quad 
(|{\bf a}|\in 2\Z)
\end{align*}
are valid.
We also have
\begin{align*}
&S^{1,1}(z)^{2{\bf e}_1,{\bf 0}}_{{\bf e}_1,{\bf e}_1}
=(-q)^{-1}
S^{1,1}(z)^{{\bf 0},2{\bf e}_1}_{{\bf e}_1,{\bf e}_1}
=\frac{(1+q)(1-z)}{(1+zq)(1+zq^2)},\\
&S^{-,-}(z)^{{\bf e}_n,{\bf e}_n}_{{\bf e}_n,{\bf e}_n}
=S^{-,-}(z)^{{\bf e}_1,{\bf e}_1}_{{\bf e}_1,{\bf e}_1},\quad
S^{-,-}(z)^{{\bf e}_n,{\bf e}_1}_{{\bf e}_1,{\bf e}_n}
=z^{-1}S^{-,-}(z)^{{\bf e}_1,{\bf e}_n}_{{\bf e}_n,{\bf e}_1}
=\frac{1-q^2}{1-zq^2},\\
&S^{-,-}(z)^{{\bf e}_1,{\bf e}_n}_{{\bf e}_1,{\bf e}_n}
=S^{-,-}(z)^{{\bf e}_n,{\bf e}_1}_{{\bf e}_n,{\bf e}_1}
=-\frac{q(1-z)}{1-zq^2}.
\end{align*}
For instance to derive the last result in (\ref{spm}),
one looks at the  corresponding sum (\ref{sabij}):
\begin{align*}
\sum_{c_0, \ldots, c_n\ge 0} 
\frac{z^{c_0}(q^2)_{2c_0}}{(q^{4})_{c_0}(q^{4})_{c_n}}
\Rm^{1, 1, 2c_0}_{1, 1, c_1}
\Rm^{0, 0, c_1}_{0, 0, c_2}\cdots
\Rm^{0,0, c_{n\!-\!2}}_{0,0, c_{n\!-\!1}}
\Rm^{0,0, c_{n\!-\!1}}_{0,0, 2c_n}.
\end{align*}
Due to (\ref{cl0}) this is a single sum 
over $k=c_0=c_n=c_1/2=\cdots = c_{n-1}/2$.
Moreover the product of $\Rm$'s is equal to 
$\Rm^{1,1,2k}_{1,1,2k}=1-(1+q^2)q^{4k}$. 
Thus it is calculated as 
\begin{align*}
&\sum_{k\ge 0}\frac{z^k(q^2)_{2k}}{(q^4)^2_k}
\bigl(1-(1+q^2)q^{4k}\bigr)=
\sum_{k\ge 0}\frac{z^k(q^2;q^4)_{k}}{(q^4;q^4)_k}
\bigl(1-(1+q^2)q^{4k}\bigr)\\
&=\frac{(zq^2;q^4)_\infty}{(z;q^4)_\infty}
-(1+q^2)\frac{(zq^6;q^4)_\infty}{(zq^4;q^4)_\infty}
=\varrho^{-,-}(z)^{-1}\frac{z-q^2}{1-zq^2}
\end{align*}
by means of the identity \cite[eq.(1.3.12)]{GR}
\begin{align*}
\sum_{k\ge 0}\frac{(x;p)_k}{(p;p)_k}z^k = \frac{(zx;p)_\infty}{(z;p)_\infty}.
\end{align*}
General matrix elements for $n=1$ 
has been obtained in \cite[Proposition 2]{KO3}.

\section{Quantum $R$ matrices for $q$-oscillator representations}\label{sec:R}
\subsection{Quantum affine algebras}
The Drinfeld-Jimbo quantum affine algebras (without derivation operator) 
$U_q=U_q(A^{(2)}_{2n})$, 
$U_q(C^{(1)}_{n})$ and $U_q(D^{(2)}_{n+1})$
are the Hopf algebras 
generated by $e_i, f_i, k^{\pm 1}_i\, (0 \le i \le n)$ satisfying the relations
\begin{equation}\label{uqdef}
\begin{split}
&k_i k^{-1}_i = k^{-1}_i k_i = 1,\quad [k_i, k_j]=0,\\
&k_ie_jk^{-1}_i = q_i^{a_{ij}}e_j,\quad 
k_if_jk^{-1}_i = q_i^{-a_{ij}}f_j,\quad
[e_i, f_j]=\delta_{ij}\frac{k_i-k^{-1}_i}{q_i-q^{-1}_i},\\
&\sum_{\nu=0}^{1-a_{ij}}(-1)^\nu
e^{(1-a_{ij}-\nu)}_i e_j e_i^{(\nu)}=0,
\quad
\sum_{\nu=0}^{1-a_{ij}}(-1)^\nu
f^{(1-a_{ij}-\nu)}_i f_j f_i^{(\nu)}=0\;\;(i\neq j),
\end{split}
\end{equation}
where $e^{(\nu)}_i = e^\nu_i/[\nu]_{q_i}!, \,
f^{(\nu)}_i = f^\nu_i/[\nu]_{q_i}!$.
The data $q_i$ will be 
specified for the algebras under consideration in the sequel.
The Cartan matrix $(a_{ij})_{0 \le i,j \le n}$ \cite{Kac} is given by
\begin{align}\label{car}
a_{i,j} = \begin{cases}
2 & i=j,\\
-\max((\log q_j)/(\log q_i),1) & |i-j|=1,\\
0 & \text{otherwise}.
\end{cases}
\end{align}
We use the coproduct $\Delta$ of the form 
\begin{align*}
\Delta k^{\pm 1}_i = k^{\pm 1}_i\otimes k^{\pm 1}_i,\quad
\Delta e_i = 1\otimes e_i + e_i \otimes k_i,\quad
\Delta f_i = f_i\otimes 1 + k^{-1}_i\otimes f_i.
\end{align*}

\subsection{$\boldsymbol{q}$-oscillator representations}\label{ss:qor}
We introduce representations of
$U_q(D^{(2)}_{n+1})$, $U_q(A^{(2)}_{2n})$
and $U_q(C^{(1)}_{n})$
on the tensor product of the Fock space ${\hat F}^{\otimes n}$
or $F^{\otimes n}$.
Here ${\hat F}= \bigoplus_{m\ge 0}\C(q^{\frac{1}{2}})|m\rangle$
is a slight extension of the coefficient field of 
$F = \bigoplus_{m\ge 0}\Q(q)|m\rangle \subset {\hat F}$.
For  $U_q(A^{(2)}_{2n})$ and $U_q(C^{(1)}_{n})$,
they are essentially the affinization of the $q$-oscillator 
representation of the classical part $U_q(C_n)$ \cite{Ha} 
which factors through the algebra homomorphism 
from $U_q$ to the $q$-Weyl algebra.
A similar feature is expected also for $U_q(D^{(2)}_{n+1})$.
As in the previous section 
we write the elements of ${\hat F}^{\otimes n}$ as
\begin{align}\label{mv}
|{\bf m}\rangle =  |m_1\rangle \otimes  \cdots \otimes |m_n \rangle \in 
{\hat F}^{\otimes n} \quad
\text{for}\;\;
{\bf m} = (m_1,\ldots, m_n) \in (\Z_{\ge 0})^n
\end{align}
and describe the changes in ${\bf m}$ by the vectors 
${\bf e}_i = (0,\ldots,0,\overset{i}{1},0,\ldots, 0) \in  \Z^n$.

Consider  $U_q(D^{(2)}_{n+1})$.
The Dynkin diagram of $D^{(2)}_{n+1}$ looks as
\[
\begin{picture}(126,27)(0,-9)
\multiput( 0,0)(20,0){3}{\circle{6}}
\multiput(100,0)(20,0){2}{\circle{6}}
\multiput(23,0)(20,0){2}{\line(1,0){14}}
\put(83,0){\line(1,0){14}}
\multiput( 2.85,-1)(0,2){2}{\line(1,0){14.3}} 
\multiput(102.85,-1)(0,2){2}{\line(1,0){14.3}} 
\multiput(59,0)(4,0){6}{\line(1,0){2}} 
\put(10,0){\makebox(0,0){$<$}}
\put(110,0){\makebox(0,0){$>$}}
\put(0,-5){\makebox(0,0)[t]{$0$}}
\put(20,-5){\makebox(0,0)[t]{$1$}}
\put(40,-5){\makebox(0,0)[t]{$2$}}
\put(100,-5){\makebox(0,0)[t]{$n\!\! -\!\! 1$}}
\put(120,-6.5){\makebox(0,0)[t]{$n$}}

\put(3,18){\makebox(0,0)[t]{$q^{\frac{1}{2}}$}}
\put(20,13){\makebox(0,0)[t]{$q$}}
\put(40,13){\makebox(0,0)[t]{$q$}}
\put(100,13){\makebox(0,0)[t]{$q$}}
\put(123,18){\makebox(0,0)[t]{$q^{\frac{1}{2}}$}}
\end{picture}
\]
Here the vertices are numbered with $\{0,\ldots, n\}$ as indicated.
The $q_i$ associated to the vertex $i$ is specified above it, so
$q_0=q_n=q^{\frac{1}{2}}$ and $q_j = q$ for $0<j<n$.
The Cartan matrix is given according to (\ref{car}) 
as $a_{0,1}=a_{n,n-1}=-2, a_{1,0}=-1$, etc.
Similar conventions will also be adopted for the other algebras under consideration.
Somewhat unusual convention to include $q^{\frac{1}{2}}$
is to make the presentation and proof of 
our main Theorem \ref{th:main} simple and uniform.  
In Proposition \ref{pr:repD}, \ref{pr:repA} and \ref{pr:repC},
the symbol $[m]$ denotes $[m]_q$.

\begin{proposition}\label{pr:repD}
The following defines an irreducible $U_q(D^{(2)}_{n+1})$ module structure on 
${\hat F}^{\otimes n}$.
\begin{align*}
e_0|{\bf m}\rangle &= x|{\bf m}+{\bf e}_1\rangle,\\
f_0|{\bf m}\rangle  &= i \kappa [m_1] x^{-1}|{\bf m}-{\bf e}_1\rangle,\\
k_0|{\bf m}\rangle  &= -i q^{m_1+\frac{1}{2}}|{\bf m}\rangle,\\
e_j|{\bf m}\rangle &= [m_j]|{\bf m}-{\bf e}_j+{\bf e}_{j+1}\rangle
\quad(1\le  j  \le n-1),\\
f_j|{\bf m}\rangle &= [m_{j+1}]|{\bf m}+{\bf e}_j-{\bf e}_{j+1}\rangle
\quad(1\le  j  \le n-1),\\
k_j|{\bf m}\rangle &= q^{-m_{j}+m_{j+1}}|{\bf m}\rangle
\quad(1\le  j  \le n-1),\\
e_n|{\bf m}\rangle & =  i\kappa[m_n]|{\bf m}-{\bf e}_n\rangle,\\
f_n|{\bf m}\rangle &=|{\bf m}+{\bf e}_n\rangle,\\
k_n|{\bf m}\rangle &= iq^{-m_n-\frac{1}{2}}|{\bf m}\rangle,
\end{align*}
where $x$ is a nonzero parameter and 
\begin{align}\label{rdef}
\kappa = \frac{q+1}{q-1}.
\end{align} 

\end{proposition}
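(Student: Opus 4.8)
The plan is to realize all the generators through single-site $q$-oscillators and thereby reduce every defining relation in (\ref{uqdef}) to an identity on one or two tensor factors of ${\hat F}^{\otimes n}$. On the $j$-th factor introduce the operators
\begin{align*}
\mathbf{a}^+_j|{\bf m}\rangle = |{\bf m}+{\bf e}_j\rangle,\quad
\mathbf{a}^-_j|{\bf m}\rangle = [m_j]\,|{\bf m}-{\bf e}_j\rangle,\quad
q^{N_j}|{\bf m}\rangle = q^{m_j}|{\bf m}\rangle,
\end{align*}
which satisfy $q^{N_j}\mathbf{a}^\pm_j = q^{\pm1}\mathbf{a}^\pm_j q^{N_j}$, the commutation of operators carrying different indices, and the oscillator relation encoded in the scalar identity $[m+1]-q^{\pm1}[m]=q^{\mp m}$. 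In these terms the proposed action reads $e_j=\mathbf{a}^-_j\mathbf{a}^+_{j+1}$, $f_j=\mathbf{a}^+_j\mathbf{a}^-_{j+1}$, $k_j=q^{-N_j+N_{j+1}}$ for $1\le j\le n-1$, together with $e_0=x\mathbf{a}^+_1$, $f_0=i\kappa x^{-1}\mathbf{a}^-_1$, $k_0=-iq^{1/2}q^{N_1}$ and $e_n=i\kappa\mathbf{a}^-_n$, $f_n=\mathbf{a}^+_n$, $k_n=iq^{-1/2}q^{-N_n}$.

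First I would dispatch the Cartan-type relations: the $k_i$ are diagonal, so $[k_i,k_j]=0$ is immediate, and $k_ie_jk_i^{-1}=q_i^{a_{ij}}e_j$, $k_if_jk_i^{-1}=q_i^{-a_{ij}}f_j$ reduce to counting how $e_j,f_j$ shift the exponents $m_1,\dots,m_n$. This is exactly where one sees that the prescribed $q_i$ (with $q_0=q_n=q^{1/2}$) reproduce the Cartan matrix (\ref{car}), including the asymmetric entries $a_{0,1}=-2$, $a_{1,0}=-1$, and that non-adjacent nodes commute. The relations $[e_i,f_j]=0$ for $i\neq j$ hold because the two operators act on disjoint sites, or, for the overlapping pairs such as $(e_0,f_1)$, because the shared creation/annihilation factors commute. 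The diagonal relations $[e_i,f_i]=(k_i-k_i^{-1})/(q_i-q_i^{-1})$ are the first substantive point: for $i=0$ a direct computation gives $[e_0,f_0]|{\bf m}\rangle = i\kappa([m_1]-[m_1+1])|{\bf m}\rangle$, and using $\kappa=(q+1)/(q-1)$ together with $q_0-q_0^{-1}=q^{1/2}-q^{-1/2}$ one checks this equals $(k_0-k_0^{-1})/(q_0-q_0^{-1})$. The factors $i=\sqrt{-1}$ in $k_0,f_0$ and the precise value of $\kappa$ are exactly what makes the two sides agree, and the node $n$ case is identical.

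The Serre relations split into three cases. For $|i-j|\ge2$ the operators act on disjoint pairs of sites, so $[e_i,e_j]=[f_i,f_j]=0$. For an interior adjacent pair $1\le i\le n-2$ the relation reduces, after factoring out the end operators on the non-shared sites, to a single-oscillator Serre identity that is a direct consequence of $[m+1]+[m-1]=[2][m]$; the degree-$2$ Serre relation of each boundary pair (the one with the $q$-node on the outside) reduces to the same identity. The remaining case is the degree-$3$ Serre relation of the boundary pairs $(0,1)$ and $(n-1,n)$ with the $q^{1/2}$-node on the outside. Here, again factoring the commuting operator on the neighbouring site, it collapses to a one-oscillator statement whose coefficient of the shifted basis vector is proportional to $\sum_{\nu=0}^{3}(-1)^\nu{3\brack\nu}_{q^{1/2}}[m+\nu]$; writing $[m+\nu]$ through $q^{\pm(m+\nu)}$ this is the $Q=q$ specialization of $\sum_{\nu=0}^{3}(-1)^\nu{3\brack\nu}_{q^{1/2}}Q^{\pm\nu}=0$, which holds precisely because $q=(q^{1/2})^2$. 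This is the step I expect to be the crux, and it is also the structural reason the two end nodes must carry $q^{1/2}$. The $f$-Serre relations follow from the $e$-Serre relations by the order-reversing substitution $\mathbf{a}^\pm_j\leftrightarrow\mathbf{a}^\mp_j$, under which the Serre polynomials are invariant up to an irrelevant overall scalar.

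Finally, for irreducibility I would argue as follows. The joint eigenvalue of $(k_0,\dots,k_n)$ on $|{\bf m}\rangle$ determines ${\bf m}$ uniquely: $k_0$ fixes $m_1$, then $k_1$ fixes $m_2$, and so on, using that $q$ is generic. Hence every weight space is one-dimensional, and any submodule, being stable under the semisimple $k_i$, is a coordinate subspace spanned by some of the $|{\bf m}\rangle$. Since $f_n$ maps $|{\bf m}\rangle$ to $|{\bf m}+{\bf e}_n\rangle$ with coefficient $1$, while $e_0$ and the $f_j$ $(1\le j\le n-1)$ move quanta between adjacent sites with coefficients $[m_j]$ that are nonzero for generic $q$, one can connect any $|{\bf m}\rangle$ to the vacuum $|{\bf 0}\rangle$ and back to every $|{\bf m}'\rangle$. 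Thus no proper nonzero submodule exists and the module is irreducible.
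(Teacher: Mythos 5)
Your proposal is correct. On the verification of the defining relations (\ref{uqdef}) it is a fleshed-out version of what the paper dispatches with the single sentence ``direct calculation verifies these are representations'': your factorization $e_j=\mathbf{a}^-_j\mathbf{a}^+_{j+1}$, $f_j=\mathbf{a}^+_j\mathbf{a}^-_{j+1}$, $e_0\propto\mathbf{a}^+_1$, $f_0\propto\mathbf{a}^-_1$, $e_n\propto\mathbf{a}^-_n$, $f_n=\mathbf{a}^+_n$ reduces everything to one- or two-site identities, and you correctly isolate the only two nontrivial points: the relation $[e_r,f_r]=(k_r-k_r^{-1})/(q_r-q_r^{-1})$ at $r=0,n$, which is exactly where the phase $i$ in $k_0,k_n$ and the value $\kappa=(q+1)/(q-1)$ are forced (via $q+1=q^{1/2}(q^{1/2}+q^{-1/2})$ and $q-q^{-1}=(q^{1/2}-q^{-1/2})(q^{1/2}+q^{-1/2})$), and the degree-$3$ Serre relations at the two $q^{1/2}$-nodes, which collapse to $\sum_{\nu=0}^3(-1)^\nu{3\brack\nu}_{q^{1/2}}q^{\pm\nu}=0$, i.e.\ to the fact that $z=q^{\pm1}$ are among the roots of $(1-zq)(1-z)(1-zq^{-1})$. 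Where you genuinely depart from the paper is irreducibility: the paper restricts to the classical subalgebra $U_q(A_{n-1})=\langle e_j,f_j,k_j^{\pm1}\rangle_{1\le j\le n-1}$, writes ${\hat F}^{\otimes n}\simeq\bigoplus_{l\ge0}U_q(A_{n-1})|l{\bf e}_1\rangle$, and uses $e_0,f_0$ to pass between the levels $l$, whereas you observe that for generic $q$ the joint $(k_0,\dots,k_n)$-eigenvalues separate the basis vectors $|{\bf m}\rangle$, so every submodule is spanned by a subset of them, and then connect any $|{\bf m}\rangle$ to $|{\bf 0}\rangle$ and back by operators with manifestly nonvanishing coefficients. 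Your route is more self-contained, since it does not presuppose that each level-$l$ subspace is irreducible as a $U_q(A_{n-1})$-module (a standard fact about symmetric-tensor type oscillator modules, but one the paper's one-line argument silently uses); the paper's route has the merit of exhibiting the classical-subalgebra decomposition, in the same spirit as the $U_q(B_n)$/$U_q(C_n)$ decompositions it exploits in Sections \ref{ss:sv} and \ref{ss:sd}. Two small points you should tighten: the passage from the $e$-Serre to the $f$-Serre relations via the ``order-reversing substitution $\mathbf{a}^\pm_j\leftrightarrow\mathbf{a}^\mp_j$'' is legitimate but should be justified, e.g.\ as transposition with respect to the pairing $\langle m|n\rangle=[m]!\,\delta_{m,n}$ under which $\mathbf{a}^+$ and $\mathbf{a}^-$ are mutually adjoint, so that it is an anti-automorphism carrying each $e_r$ to a scalar multiple of $f_r$; and in the connectivity step the roles are slightly misattributed ($e_0$ and $f_n$ create quanta, $e_n$ and $f_0$ annihilate them, while $e_j,f_j$ with $1\le j\le n-1$ do the transport), though the set of operators you invoke does suffice for the argument.
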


Consider $U_q(A^{(2)}_{2n})$.
The Dynkin diagram of $A^{(2)}_{2n}$ looks as
\[
\begin{picture}(126,27)(0,-9)
\multiput( 0,0)(20,0){3}{\circle{6}}
\multiput(100,0)(20,0){2}{\circle{6}}
\multiput(23,0)(20,0){2}{\line(1,0){14}}
\put(83,0){\line(1,0){14}}
\multiput( 2.85,-1)(0,2){2}{\line(1,0){14.3}} 
\multiput(102.85,-1)(0,2){2}{\line(1,0){14.3}} 
\multiput(59,0)(4,0){6}{\line(1,0){2}} 
\put(10,0){\makebox(0,0){$<$}}
\put(110,0){\makebox(0,0){$<$}}
\put(0,-5){\makebox(0,0)[t]{$0$}}
\put(20,-5){\makebox(0,0)[t]{$1$}}
\put(40,-5){\makebox(0,0)[t]{$2$}}
\put(100,-5){\makebox(0,0)[t]{$n\!\! -\!\! 1$}}
\put(120,-6.5){\makebox(0,0)[t]{$n$}}
\put(3,18){\makebox(0,0)[t]{$q^{\frac{1}{2}}$}}
\put(20,12.5){\makebox(0,0)[t]{$q$}}
\put(40,12.5){\makebox(0,0)[t]{$q$}}
\put(100,12.5){\makebox(0,0)[t]{$q$}}
\put(122,16){\makebox(0,0)[t]{$q^2$}}
\end{picture}
\]

\begin{proposition}\label{pr:repA}
The following defines an irreducible $U_q(A^{(2)}_{2n})$ module structure on 
${\hat F}^{\otimes n}$.
\begin{align*}
e_0|{\bf m}\rangle &= x|{\bf m}+{\bf e}_1\rangle,\\
f_0|{\bf m}\rangle  &= i\kappa [m_1]x^{-1}|{\bf m}-{\bf e}_1\rangle,\\
k_0|{\bf m}\rangle  &= -iq^{m_1+\frac{1}{2}}|{\bf m}\rangle,\\
e_j|{\bf m}\rangle &= [m_j]|{\bf m}-{\bf e}_j+{\bf e}_{j+1}\rangle
\quad(1\le  j  \le n-1),\\
f_j|{\bf m}\rangle &= [m_{j+1}]|{\bf m}+{\bf e}_j-{\bf e}_{j+1}\rangle
\quad(1\le  j  \le n-1),\\
k_j|{\bf m}\rangle &= q^{-m_{j}+m_{j+1}}|{\bf m}\rangle
\quad(1\le  j  \le n-1),\\
e_n|{\bf m}\rangle & = \frac{[m_n][m_n-1]}{[2]^2}
|{\bf m}-2{\bf e}_n\rangle,\\
f_n|{\bf m}\rangle &=|{\bf m}+2{\bf e}_n\rangle,\\
k_n|{\bf m}\rangle &=-q^{-2m_n-1}|{\bf m}\rangle,
\end{align*}
where $x$ is a nonzero parameter and 
$\kappa$ is defined by (\ref{rdef}).
\end{proposition}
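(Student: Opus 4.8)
The plan is to verify directly that the operators in the statement satisfy the defining relations (\ref{uqdef}) of $U_q(A^{(2)}_{2n})$ on each basis vector $|{\bf m}\rangle$, and then to establish irreducibility by a cyclicity argument. The decisive simplification is that every formula here coincides with the corresponding one in Proposition \ref{pr:repD} except at the vertex $n$: the operators $e_0,f_0,k_0$ and $e_j,f_j,k_j\,(1\le j\le n-1)$ are literally the same, and the Cartan entries $a_{ij}$ among $\{0,\dots,n-1\}$ agree. Hence every relation in (\ref{uqdef}) not involving the index $n$ is already guaranteed by Proposition \ref{pr:repD}, and it remains only to check the relations attached to the vertex $n$. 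Moreover all operators occurring in those relations, namely $e_{n-1},f_{n-1},k_{n-1}^{\pm1}$ and $e_n,f_n,k_n^{\pm1}$, act only on the last two tensor factors, the remaining coordinates being inert spectators; so the whole verification reduces to a computation on $\hat F\otimes\hat F$ in the two occupation numbers $m_{n-1},m_n$.

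First I would dispose of the easy relations at vertex $n$. The weight relations are immediate from the diagonal action of $k_{n-1},k_n$ and the shift $\mp2{\bf e}_n$ produced by $e_n,f_n$; with $q_n=q^2$ they reproduce the entries $a_{n,n}=2$, $a_{n,n-1}=-1$, $a_{n-1,n}=-2$ of (\ref{car}) (the two $-$ signs in $k_n$ cancelling). The off-diagonal relation $[e_i,f_n]=0=[e_n,f_i]$ for $i\ne n$ is trivial by disjoint support unless $i=n-1$, and the case $i=n-1$ is a one-line check: both $e_nf_{n-1}$ and $f_{n-1}e_n$ send $|{\bf m}\rangle$ to $\frac{[m_n][m_n-1][m_n-2]}{[2]^2}|{\bf m}+{\bf e}_{n-1}-3{\bf e}_n\rangle$. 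For the diagonal relation one computes $(e_nf_n-f_ne_n)|{\bf m}\rangle=\frac{[m_n+2][m_n+1]-[m_n][m_n-1]}{[2]^2}|{\bf m}\rangle$, and the identity $[m+2][m+1]-[m][m-1]=\frac{(q^2-q^{-2})(q^{2m+1}-q^{-2m-1})}{(q-q^{-1})^2}$ together with $[2]^2=(q+q^{-1})^2$ collapses this to $\frac{q^{2m_n+1}-q^{-2m_n-1}}{q^2-q^{-2}}$, which is exactly $\frac{k_n-k_n^{-1}}{q_n-q_n^{-1}}$ once one uses $k_n=-q^{-2m_n-1}$ and $q_n=q^2$.

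The main obstacle is the Serre relations at the vertex $n$. Because $a_{n,n-1}=-1$ and $a_{n-1,n}=-2$, there are two of them for the $e$'s (and two mirror ones for the $f$'s): the quadratic $e_n^{(2)}e_{n-1}-e_ne_{n-1}e_n+e_{n-1}e_n^{(2)}=0$ and the cubic $\sum_{\nu=0}^{3}(-1)^\nu e_{n-1}^{(3-\nu)}e_ne_{n-1}^{(\nu)}=0$, with $e_n^{(\nu)}=e_n^\nu/[\nu]_{q^2}!$ and $e_{n-1}^{(\nu)}=e_{n-1}^\nu/[\nu]_q!$. Since $e_{n-1}$ transports single quanta $n-1\to n$ and $e_n$ annihilates pairs at $n$, in each relation every term lands on the \emph{same} shifted basis vector, so the whole identity reduces to a scalar identity in $q$-numbers in $m_{n-1},m_n$. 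The quadratic case is short, but the cubic case is the genuinely technical point: it produces an alternating sum of products of $q$-binomials and $q$-factorials in the two occupation numbers, and one must show this sum vanishes identically. I expect to handle it by a $q$-Vandermonde type identity for the $[\nu]_q!$-weighted coefficients, in direct parallel with the cubic Serre relation at the short end of $D^{(2)}_{n+1}$ in Proposition \ref{pr:repD} (with the roles of the two end vertices interchanged, because of the transposed Cartan data). Note that the factor $\kappa$ of (\ref{rdef}) and the $i=\sqrt{-1}$ in $k_0,k_n$ play no role here, being confined to the vertex-$0$ relations already covered.

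Finally, irreducibility. I would show that $|{\bf 0}\rangle$ is cyclic and that every nonzero submodule contains it. For cyclicity, $e_0$ creates a quantum at site $1$, the hopping operators $e_j,f_j\,(1\le j\le n-1)$ redistribute quanta arbitrarily among the sites while preserving the total, and iterating reaches every $|{\bf m}\rangle$ from $|{\bf 0}\rangle$ up to nonzero scalars (for generic $q$ the factors $[m]$ never vanish on positive occupations). Conversely, given $0\ne v$, one selects a maximal basis vector in its support for a suitable ordering, transports all quanta toward site $1$ by the $f_j$'s, and strips them off one at a time by $f_0$, thereby isolating $|{\bf 0}\rangle$ inside the submodule generated by $v$. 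Since $e_0,f_0$ change the total occupation by $\pm1$ whereas $e_n,f_n$ change it by $\mp2$, there is no parity obstruction and the action is transitive on all of $\hat F^{\otimes n}$; hence the module is irreducible.
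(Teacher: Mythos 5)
Your proposal is correct, and on the relation\--checking side it is essentially the paper's own approach: the paper's entire verification of Propositions \ref{pr:repD}--\ref{pr:repC} is the sentence ``Direct calculation verifies these are representations of $U_q$,'' so your reduction to the vertex\--$n$ relations by inheriting everything else from Proposition \ref{pr:repD} is a sharper organization of the same computation (valid for $n\ge 2$; for $n=1$ the two end vertices are adjacent with $a_{0,1}=-4$, the Serre relations have degree $5$ and $2$, and the check must be done separately). The one step you leave open --- the cubic Serre relation --- should be closed, and it is easier than you anticipate: no $q$\--Vandermonde input is needed. Since $e_{n-1}$ and $e_n$ map basis vectors to multiples of basis vectors, the factor $[a][a-1][a-2]$ (with $a=m_{n-1}$) pulls out of all four terms uniformly, and with $b=m_n$ the relation collapses to the scalar identity
\[
\frac{[b][b-1]-[b+3][b+2]}{[3]!}+\frac{[b+2][b+1]-[b+1][b]}{[2]!}=0,
\]
which follows at once from $[b+3][b+2]-[b][b-1]=[3][2b+2]$ and $[b+2][b+1]-[b+1][b]=[b+1](q^{b+1}+q^{-b-1})=[2b+2]$. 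Where you genuinely diverge from the paper is irreducibility. The paper decomposes $\hat F^{\otimes n}$ under the classical subalgebra $U_q(A_{n-1})=\langle e_j,f_j,k_j^{\pm1}\rangle_{1\le j\le n-1}$ as $\bigoplus_{l\ge 0}U_q(A_{n-1})|l{\bf e}_1\rangle$ (the levels of fixed total occupation, which are irreducible and pairwise non\--isomorphic for generic $q$); any $U_q$\--submodule is then a sub\--sum of levels, and the action of $e_0,f_0$ connecting adjacent levels forces it to be everything. Your route via cyclicity of $|{\bf 0}\rangle$ is more elementary and avoids appealing to complete reducibility, but as written it has one soft spot: a monomial in the $f_i$'s acts on every term in the support of a general vector $v$ simultaneously, so you must rule out interference. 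This is repairable exactly along your lines --- a monomial in the $f_i$'s shifts all basis vectors by one and the same displacement and annihilates any whose occupation path would dip below zero, so choosing ${\bf m}^*$ maximal for a linear extension of the entrywise dominance order kills every other support term and leaves a nonzero multiple of $|{\bf 0}\rangle$ --- or, in one line, by observing that the joint $k_i$\--eigenvalues separate the basis vectors $|{\bf m}\rangle$, so any submodule is spanned by the basis vectors it contains. With either repair your argument is complete; the paper's version buys brevity at the cost of invoking the structure of $U_q(A_{n-1})$\--modules, while yours stays entirely at the level of ladder combinatorics.
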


Consider $U_q(C^{(1)}_{n})$.
The Dynkin diagram of $C^{(1)}_{n}$ looks as
\[
\begin{picture}(126,27)(0,-9)
\multiput( 0,0)(20,0){3}{\circle{6}}
\multiput(100,0)(20,0){2}{\circle{6}}
\multiput(23,0)(20,0){2}{\line(1,0){14}}
\put(83,0){\line(1,0){14}}
\multiput( 2.85,-1)(0,2){2}{\line(1,0){14.3}} 
\multiput(102.85,-1)(0,2){2}{\line(1,0){14.3}} 
\multiput(59,0)(4,0){6}{\line(1,0){2}} 
\put(10,0){\makebox(0,0){$>$}}
\put(110,0){\makebox(0,0){$<$}}
\put(0,-5){\makebox(0,0)[t]{$0$}}
\put(20,-5){\makebox(0,0)[t]{$1$}}
\put(40,-5){\makebox(0,0)[t]{$2$}}
\put(100,-5){\makebox(0,0)[t]{$n\!\! -\!\! 1$}}
\put(120,-6.5){\makebox(0,0)[t]{$n$}}

\put(2,15){\makebox(0,0)[t]{$q^2$}}
\put(20,12){\makebox(0,0)[t]{$q$}}
\put(40,12){\makebox(0,0)[t]{$q$}}
\put(100,12){\makebox(0,0)[t]{$q$}}
\put(122,15){\makebox(0,0)[t]{$q^2$}}
\end{picture}
\]

\begin{proposition}\label{pr:repC}
The following defines an irreducible $U_q(C^{(1)}_{n})$ module structure 
on $(F^{\otimes n})_+$ and $(F^{\otimes n})_-$ defined in (\ref{fpm}).
\begin{align*}
e_0|{\bf m}\rangle &= x|{\bf m}+2{\bf e}_1\rangle,\\
f_0|{\bf m}\rangle  &= \frac{[m_1][m_1-1]}{[2]^2}
 x^{-1}|{\bf m}-2{\bf e}_1\rangle,\\
k_0|{\bf m}\rangle  &= -q^{2m_1+1}|{\bf m}\rangle,\\
e_j|{\bf m}\rangle &= [m_j]|{\bf m}-{\bf e}_j+{\bf e}_{j+1}\rangle
\quad(1\le  j  \le n-1),\\
f_j|{\bf m}\rangle &= [m_{j+1}]|{\bf m}+{\bf e}_j-{\bf e}_{j+1}\rangle
\quad(1\le  j  \le n-1),\\
k_j|{\bf m}\rangle &= q^{-m_{j}+m_{j+1}}|{\bf m}\rangle
\quad(1\le  j  \le n-1),\\
e_n|{\bf m}\rangle & = \frac{[m_n][m_n-1]}{[2]^2}
|{\bf m}-2{\bf e}_n\rangle,\\
f_n|{\bf m}\rangle &=|{\bf m}+2{\bf e}_n\rangle,\\
k_n|{\bf m}\rangle &= -q^{-2m_n-1}|{\bf m}\rangle,
\end{align*}
where $x$ is a nonzero parameter.
\end{proposition}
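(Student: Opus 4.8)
The plan is to read off each generator as a (scaled) one–site shift operator, verify the defining relations (\ref{uqdef}), and then prove irreducibility on each parity sector by a connectivity argument. First observe that every $|{\bf m}\rangle$ is a simultaneous eigenvector of $k_0,\ldots,k_n$ with eigenvalues $-q^{2m_1+1}$, $q^{-m_j+m_{j+1}}$ $(1\le j\le n-1)$ and $-q^{-2m_n-1}$; since $q$ is generic these determine $(m_1,\ldots,m_n)$ uniquely, so all weight spaces are one–dimensional. Consequently $k_ik_i^{-1}=1$ and $[k_i,k_j]=0$ are trivial, and the Cartan relations $k_ie_jk_i^{-1}=q_i^{a_{ij}}e_j$, $k_if_jk_i^{-1}=q_i^{-a_{ij}}f_j$ amount to matching the shift of $\bf m$ produced by $e_j,f_j$ against (\ref{car}), which is immediate. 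Because $e_0,f_0$ touch only the first tensor slot while $e_j,f_j\,(j\ge2)$ touch only slots $\ge2$, the pair $\{0,n\}$ (for $n\ge2$) has disjoint support, so $[e_0,e_n]=0$ etc.\ hold on the nose; likewise $[e_i,f_j]=0$ for $i\neq j$ reduces to disjointness of supports, the only adjacent case $[e_0,f_1]=[e_1,f_0]=0$ being a short one–site check.

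The substance lies at the affine node $0$: the commutator $[e_0,f_0]=\frac{k_0-k_0^{-1}}{q_0-q_0^{-1}}$ with $q_0=q^2$, and the two Serre relations on the bond $\{0,1\}$, which by (\ref{car}) are quadratic in $e_0$ $(1-a_{01}=2)$ and cubic in $e_1$ $(1-a_{10}=3)$. I would dispatch these using the order–two Dynkin diagram automorphism $\sigma\colon i\mapsto n-i$ of $C^{(1)}_n$. Let $w$ be the linear involution on $F^{\otimes n}$ with $w|m_1,\ldots,m_n\rangle=|m_n,\ldots,m_1\rangle$. Comparing the two ends of Proposition \ref{pr:repC} shows that conjugation by $w$ realizes the composite of $\sigma$ with the Chevalley involution: $we_iw^{-1}=f_{n-i}$, $wf_iw^{-1}=e_{n-i}$ for $1\le i\le n-1$, $wk_iw^{-1}=k_{n-i}^{-1}$ for all $i$, and $we_0w^{-1}=xf_n$, $wf_0w^{-1}=x^{-1}e_n$ at the ends. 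In each defining relation a fixed node–$0$ generator occurs with the same multiplicity in every term, so the stray scalars $x^{\pm1}$ enter as a single overall nonzero constant (and cancel in $[e_0,f_0]$); hence they do not affect whether a relation holds.

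Now the classical subalgebra generated by $e_i,f_i,k_i\,(1\le i\le n)$ acts as the $q$–oscillator representation of $U_q(C_n)$ of \cite{Ha}, so every relation supported on indices in $\{1,\ldots,n\}$ is already verified. Conjugating by $w$ transports each of these onto the corresponding relation supported on $\{0,\ldots,n-1\}=\sigma(\{1,\ldots,n\})$: the two $\{0,1\}$ Serre relations are the images of the $\{n-1,n\}$ Serre relations — exactly the long–root Serre relations of $U_q(C_n)$ — and $[e_0,f_0]=\frac{k_0-k_0^{-1}}{q_0-q_0^{-1}}$ is the image of $[e_n,f_n]=\frac{k_n-k_n^{-1}}{q_n-q_n^{-1}}$. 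Together with the trivial pair $\{0,n\}$ this exhausts all relations of (\ref{uqdef}), establishing the module structure; since each generator manifestly preserves the parity of $|{\bf m}|$, it restricts to $(F^{\otimes n})_\pm$ of (\ref{fpm}). (Alternatively the $\{0,1\}$ relations can be checked head–on; the quadratic one reduces to the $q$–number identity $[m_1]-(q^2+q^{-2})[m_1+2]+[m_1+4]=0$ and the cubic one to its higher analogue.)

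For irreducibility, fix $\epsilon\in\{+,-\}$. As weight spaces are one–dimensional, any nonzero submodule of $(F^{\otimes n})_\epsilon$ is spanned by a subset of the $|{\bf m}\rangle$ with $|{\bf m}|\equiv\epsilon$, and it suffices to show these basis vectors form a single orbit under the generators acting with nonzero coefficient. The raising operators $e_0$ (coefficient $x$) and $f_n$ (coefficient $1$) never vanish; the hops $e_j,f_j$ transfer one quantum between adjacent slots with coefficients $[m_j]$ resp.\ $[m_{j+1}]$, nonzero whenever the source slot is occupied; and $e_n,f_0$ lower an end slot by $2$ whenever it is at least $2$. Consolidating all quanta into the last slot by hops and then applying $e_n$ repeatedly connects every $|{\bf m}\rangle$ to the reference vector $|{\bf 0}\rangle$ (for $\epsilon=+$) or $|{\bf e}_n\rangle$ (for $\epsilon=-$); reversing the route with $f_n$ and the hops connects the reference vector back to an arbitrary $|{\bf m}\rangle$ of the prescribed parity. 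Hence $(F^{\otimes n})_+$ and $(F^{\otimes n})_-$ are irreducible. The one point demanding care is the verification that $w$ genuinely intertwines the assignment with $\omega\sigma$ (including the matching of conventions with \cite{Ha} and the bookkeeping of the end scalars $x^{\pm1}$); everything at node $0$ then follows by transport, whereas the purely direct route meets its only real obstacle in the cubic $\{0,1\}$ Serre relation.
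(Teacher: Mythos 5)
Your proof is correct, and it takes a genuinely different route from the paper's, whose entire argument (shared by Propositions \ref{pr:repD}--\ref{pr:repC}) is three lines: the relations (\ref{uqdef}) are said to hold by ``direct calculation'', and irreducibility is obtained by decomposing the space under the subalgebra $U_q(A_{n-1})$ generated by nodes $1,\dots,n-1$ into the level components $\bigoplus_l U_q(A_{n-1})|l{\bf e}_1\rangle$ ($l$ of fixed parity, each an irreducible symmetric-tensor module) and then using $e_0,f_0$ to pass between adjacent components. Your substitute for the first part is a real economy: conjugation by the tensor-reversal $w$ does satisfy $we_0w^{-1}=xf_n$, $wf_0w^{-1}=x^{-1}e_n$, $wk_iw^{-1}=k_{n-i}^{-1}$, $we_jw^{-1}=f_{n-j}$ $(1\le j\le n-1)$, so every relation at the affine node is transported (up to a harmless overall power of $x$) to a classical $U_q(C_n)$ relation at node $n$, and the disjoint-support observation disposes of all non-adjacent pairs for $n\ge 2$; your fallback identity $[m_1]-(q^2+q^{-2})[m_1+2]+[m_1+4]=0$ for the quadratic Serre relation is also the correct reduction. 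The one soft spot, which you flag yourself, is the verbatim appeal to \cite{Ha}: the classical part here carries the sign $k_n|{\bf m}\rangle=-q^{-2m_n-1}|{\bf m}\rangle$, so strictly one must either match conventions with \cite{Ha} up to such sign twists or perform the short node-$n$ computation directly --- which is no worse than the ``direct calculation'' the paper itself invokes. For irreducibility, your argument (joint $k$-eigenspaces are one-dimensional at generic $q$, hence any submodule is spanned by basis vectors, plus explicit connectivity of every $|{\bf m}\rangle$ of fixed parity to the reference vector via the hops $e_j,f_j$ and the end operators $e_n,f_n$) is more elementary and self-contained than the paper's, since it does not need the irreducibility of the $U_q(A_{n-1})$ symmetric-tensor components; the paper's version buys brevity by citing that known structure. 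Both routes are sound.
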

Direct calculation verifies these are representations of $U_q$. To see the irreducibility
decompose ${\hat F}^{\otimes n}$ as a $U_q(A_{n-1})$ module forgetting the action of
generators indexed by $0$ and $n$. By ${\hat F}^{\otimes n}\simeq\bigoplus_{l=0}^\infty
U_q(A_{n-1})|l{\bf e}_1\rangle$ and considering the action of $e_0$ and $f_0$ the irreducibility follows.
We call the irreducible representations given there 
the {\em $q$-oscillator representations} of $U_q$.
We remark that for the twisted case 
$U_q(D^{(2)}_{n+1})$ and $U_q(A^{(2)}_{2n})$,
they are {\em singular} at $q=1$ because of the factor 
$\kappa$ (\ref{rdef}).

\subsection{Quantum $R$ matrices}\label{ss:qR}
Let $V= \hat{F}^{\otimes n}$ for 
$U_q(D^{(2)}_{n+1}),  U_q(A^{(2)}_{2n})$ and 
$V= F^{\otimes n}$ for $U_q(C^{(1)}_n)$.
First we consider $U_q(D^{(2)}_{n+1})$ and $U_q(A^{(2)}_{2n})$.
Let $V_x = {\hat F}^{\otimes n}[x,x^{-1}]$ 
be the representation space of $U_q$ 
in Propositions \ref{pr:repD} and \ref{pr:repA}.  
By the existence of the universal $R$ matrix \cite{D}
there exists an element  
$R \in \mathrm{End}(V_x\otimes V_y)$ 
such that 
\begin{align}\label{drrd}
\Delta'(g) R = R \Delta(g)\quad 
\forall g \in U_q
\end{align}
up to an overall scalar.
Here $\Delta'$ is the opposite coproduct 
defined by $\Delta' = P \circ \Delta$, where
$P(u \otimes v) = v \otimes u$ is the exchange of the components.
Another useful form of (\ref{drrd}) is 
\begin{align}\label{eqrc}
(\pi_y \otimes \pi_x)\Delta(g) PR = PR\,(\pi_x \otimes \pi_y)\Delta(g)
\quad 
\forall g \in U_q,
\end{align}
where $\pi_x: U_q \rightarrow \mathrm{End } \,V_x$ 
denotes the representation.

A little inspection of our representations shows that 
$R$ depends on $x$ and $y$ only through the ratio $z=x/y$.
Moreover, by the irreducibility of $V_x\ot V_y$ (Proposition 
\ref{prop:irred}) $R$ is determined only by postulating (\ref{drrd}) 
for $g=k_r,e_r$ and $f_r$ with $0 \le r \le n$.
Thus denoting the $R$ by $R(z)$, 
we may claim \cite{Ji} that it is determined by the conditions 
\begin{align}
(k_r \otimes k_r)R(z) &= R(z)(k_r\otimes k_r),\label{kR}\\
(e_r\otimes1 + k_r \otimes e_r) R(z) &= 
R(z)(1\otimes e_r + e_r\otimes k_r),\label{eR}\\
(1\otimes f_r + f_r \otimes k^{-1}_r) R(z) &= 
R(z)(f_r\otimes 1 + k^{-1}_r\otimes f_r)\label{fR}
\end{align}
for $0 \le r \le n$ up to an overall scalar.
We fix the normalization of $R(z)$ by 
\begin{align}\label{rnor}
R(z)(|{\bf 0}\rangle \otimes |{\bf 0}\rangle)
= |{\bf 0}\rangle \otimes |{\bf 0}\rangle,
\end{align}
where $|{\bf 0}\rangle \in {\hat F}^{\otimes n}$ is defined after (\ref{spm}).
We call the intertwiner $R(z)$ 
the {\em quantum $R$ matrix} for $q$-oscillator representation.
It satisfies the Yang-Baxter equation
\begin{align}\label{yber}
R_{12}(x)R_{13}(xy)R_{23}(y)
= R_{23}(y)R_{13}(xy)R_{12}(x).
\end{align}

Next we consider $U_q(C^{(1)}_n)$.
Denote by $V^{\pm}_x= (F^{\otimes n})_{\pm}[x,x^{-1}]$
the representation spaces in Proposition \ref{pr:repC}
and set 
$V_x = V^+_x \oplus V^-_x = F^{\otimes n}[x,x^{-1}]$.
See (\ref{fpm}) for the definition of $(F^{\otimes n})_{\pm}$.
We define the quantum $R$ matrix
$R(z)$ to be the direct sum 
\begin{align}\label{Rdeco}
R(z) = R^{+,+}(z)\oplus 
R^{+,-}(z)\oplus 
R^{-,+}(z)\oplus 
R^{-,-}(z),
\end{align}
where each 
$R^{\epsilon_1,\epsilon_2}(z)\in 
\mathrm{End}(V^{\epsilon_1}_x\otimes V^{\epsilon_2}_y)$ is
the quantum $R$ matrix with the 
normalization condition 
\begin{equation}\label{Rpmnor}
\begin{split}
&R^{+,+}(z) (|{\bf 0}\rangle \otimes |{\bf 0}\rangle)
=|{\bf 0}\rangle \otimes |{\bf 0}\rangle,\quad
\qquad\quad
R^{+,-}(z)(|{\bf 0}\rangle \otimes |{\bf e}_1\rangle)
=\frac{-iq^{1/2}}{1-z}|{\bf 0}\rangle \otimes |{\bf e}_1\rangle,\\
&R^{-,+}(z)(|{\bf e}_1\rangle \otimes |{\bf 0}\rangle)=
\frac{-iq^{1/2}}{1-z}|{\bf e}_1\rangle \otimes |{\bf 0}\rangle,\quad
R^{-,-}(z)(|{\bf e}_1\rangle \otimes |{\bf e}_1\rangle)
=\frac{z-q^2}{1-zq^2}
|{\bf e}_1\rangle \otimes |{\bf e}_1\rangle.
\end{split}
\end{equation}
The $R$ matrix $R(z)$
satisfies the Yang-Baxter equation (\ref{yber}).
In fact it is decomposed into the finer equalities 
($\epsilon_1, \epsilon_2, \epsilon_3=\pm$)
\begin{align*}
R^{\epsilon_1,\epsilon_2}_{12}(x)
R^{\epsilon_1,\epsilon_3}_{13}(xy)
R^{\epsilon_2,\epsilon_3}_{23}(y)
= 
R^{\epsilon_2,\epsilon_3}_{23}(y)
R^{\epsilon_1,\epsilon_3}_{13}(xy)
R^{\epsilon_1,\epsilon_2}_{12}(x).
\end{align*}

\subsection{Singular vectors}\label{ss:sv}

In this subsection we find all singular vectors in $V^{\ot 2}$, namely those 
$v \in V^{\otimes 2}$ killed by $e_1,\ldots, e_n$, as a $U_q(B_n)$ module or
$U_q(C_n)$ module. Since $V^{\ot 2}$ is not finite-dimensional, we cannot
say at this stage that they are actually the highest weight vectors of irreducible
modules, but we will see it later in the next subsection. 

\begin{proposition} \label{prop:Bsing}
As a $U_q(B_n)$ module any singular vector in $V^{\ot 2}$ is given by
\[
v_l=\sum_{k=0}^l i^kq^{lk-k^2/2}{l\brack k}
\ket{k{\bf e}_n}\ot\ket{(l-k){\bf e}_n}
\]
for some $l\in\Z_{\ge0}$.
\end{proposition}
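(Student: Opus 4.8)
The plan is to work with the classical part $U_q(B_n)=\langle e_j,f_j,k_j\mid 1\le j\le n\rangle$ and to split the singular‑vector condition $\Delta e_j v=0\ (1\le j\le n)$ into the $A_{n-1}$‑part ($j\le n-1$) and the remaining generator $e_n$. Throughout, the weight of $\ket{\bf m}\ot\ket{{\bf m}'}$ depends only on ${\bf M}={\bf m}+{\bf m}'$, which I write as $-\sum_j M_j\varepsilon_j$ in the standard $\varepsilon$‑basis of $B_n$; in particular a singular vector is homogeneous, hence has a fixed ${\bf M}$, and different $v_l$ have distinct weights.

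First I would treat the $A_{n-1}$‑part. Forgetting $e_0,f_0,k_0$ and $e_n,f_n,k_n$, one has $V=\bigoplus_{l\ge0}W_l$ as a $U_q(A_{n-1})$‑module, where $W_l$ is the total‑occupation‑$l$ subspace, isomorphic to the $l$‑th symmetric power of the vector representation with highest weight vector $\ket{l{\bf e}_n}$ (indeed $e_j\ket{\bf m}=[m_j]\ket{{\bf m}-{\bf e}_j+{\bf e}_{j+1}}=0$ for $j\le n-1$ iff $m_j=0$). By the $q$‑analogue of the Pieri rule, $S^a\ot S^b$ contains only two‑row components; in the present reversed labelling this forces every $A_{n-1}$‑highest weight vector of $V\ot V$ to have weight $-\bigl(M_{n-1}\varepsilon_{n-1}+M_n\varepsilon_n\bigr)$ with $M_i=0$ for $i\le n-2$. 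Hence any singular vector is supported on $\ket{\bf m}\ot\ket{{\bf m}'}$ with $m_i=m'_i=0$ for $i\le n-2$, i.e.\ only the slots $n-1,n$ are occupied, reducing the problem to the rank‑two subalgebra on the nodes $\{n-1,n\}$.

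Next I would impose $e_n$ on the one‑row sector $M_{n-1}=0$, where $v=\sum_{k=0}^{l}c_k\ket{k{\bf e}_n}\ot\ket{(l-k){\bf e}_n}$ with $l=M_n$ (every such combination is automatically $A_{n-1}$‑highest, since $e_j\ket{k{\bf e}_n}=0$ for $j\le n-1$). Using $\Delta e_n=1\ot e_n+e_n\ot k_n$ together with $e_n\ket{p{\bf e}_n}=i\kappa[p]\ket{(p-1){\bf e}_n}$ and $k_n\ket{p{\bf e}_n}=iq^{-p-\frac12}\ket{p{\bf e}_n}$, collecting the coefficient of $\ket{k{\bf e}_n}\ot\ket{(l-1-k){\bf e}_n}$ yields the two‑term recursion
\[
i\,[l-k]\,c_k=[k+1]\,q^{-(l-k-1)-\frac12}\,c_{k+1}\qquad(0\le k\le l-1).
\]
This determines $(c_k)$ uniquely up to scale, and a direct substitution checks that $c_k=i^kq^{lk-k^2/2}{l\brack k}$ solves it; running the computation in reverse confirms that each $v_l$ is annihilated by $e_n$ (and, as noted, by $e_1,\dots,e_{n-1}$). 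Thus on the one‑row sector the singular vectors are exactly the multiples of $v_l$, and the $v_l$ are genuinely singular.

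The main obstacle is to show that the two‑row sectors $M_{n-1}=r\ge1$ contribute nothing. Here I would parametrize the $A_{n-1}$‑highest weight vectors of weight $-\bigl(r\varepsilon_{n-1}+(T-r)\varepsilon_n\bigr)$ — a space of dimension $T-2r+1$, spanned by the $q$‑Clebsch–Gordan highest vectors of the two factors with totals $s_1+s_2=T$, $s_1,s_2\ge r$ — and compute $\Delta e_n$ on it, the goal being to prove that $e_n$ is injective there, so that $\Delta e_n v=0$ forces $v=0$. Equivalently, writing $v=\sum C_{a,b,c,d}\ket{a{\bf e}_{n-1}+b{\bf e}_n}\ot\ket{c{\bf e}_{n-1}+d{\bf e}_n}$ and solving the $e_{n-1}$‑ and $e_n$‑recursions simultaneously, one must show the solution is supported on $a=c=0$. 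The mechanism is already visible in the smallest case $n=2,\ T=2,\ r=1$, where the unique $e_{n-1}$‑highest vector $\ket{{\bf e}_{n-1}}\ot\ket{{\bf e}_n}-q\,\ket{{\bf e}_n}\ot\ket{{\bf e}_{n-1}}$ is sent by $\Delta e_n$ to a nonzero combination of two distinct basis vectors that cannot cancel; the general case is the same bookkeeping with $q$‑binomial coefficients and is the only genuinely technical point. Once it is settled, the singular vectors are precisely the multiples of the $v_l$. (That the $v_l$ are highest weight vectors of irreducible submodules is not asserted at this stage and is left to the next subsection.)
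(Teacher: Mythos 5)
Your overall strategy is sound and genuinely different from the paper's: you reduce to the nodes $\{n-1,n\}$ by the quantum Pieri rule (the paper instead chases coefficients directly to show $c_{\bf m}=0$ unless ${\bf l}=l{\bf e}_n$), and your one-row computation is correct — the recursion $i[l-k]c_k=[k+1]q^{-(l-k-1)-\frac12}c_{k+1}$ and its solution $c_k=i^kq^{lk-k^2/2}{l\brack k}$ both check out against the $D^{(2)}_{n+1}$ action in Proposition \ref{pr:repD}. However, there is a genuine gap at the heart of the argument: the claim that the two-row sectors $M_{n-1}=r\ge1$ contain no singular vectors is asserted, not proven. You verify only the case $n=2$, $T=2$, $r=1$, where the space of $A_{n-1}$-highest vectors is one-dimensional, and then state that "the general case is the same bookkeeping." It is not: for general $T$ and $r$ that space has dimension $T-2r+1$, and you need the restriction of $\Delta e_n$ to it to be injective, i.e.\ that no nontrivial \emph{linear combination} of the $T-2r+1$ Clebsch--Gordan highest vectors is annihilated — a statement that does not follow from the nonvanishing of $\Delta e_n$ on each basis vector and requires an actual triangularity or leading-term analysis of the matrix of $\Delta e_n$.

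That this step cannot be waved through on general grounds is shown by the very next proposition in the paper: for $U_q(C_n)$ the analogous two-row sector \emph{does} contain a singular vector, $v^-_0=\ket{{\bf e}_{n-1}}\ot\ket{{\bf e}_n}-q\ket{{\bf e}_n}\ot\ket{{\bf e}_{n-1}}$, and the only difference is the precise form of the $e_n$-action. So the exclusion of two-row singular vectors in the $B_n$ case depends delicately on $e_n\ket{\bf m}=i\kappa[m_n]\ket{{\bf m}-{\bf e}_n}$ and $k_n\ket{\bf m}=iq^{-m_n-\frac12}\ket{\bf m}$, and must be carried out explicitly. This is exactly what the paper's second paragraph does, combining the coefficient equations \eqref{eq1} at $j=n-1$ with \eqref{eq2} to force $c_{\bf m}=0$ whenever $l_{n-1}>0$; some equivalent computation (in your language, injectivity of $\Delta e_n$ on each $(T-2r+1)$-dimensional highest-weight space with $r\ge1$) is the missing core of your proof.
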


\begin{proof}
One can assume that a singular vector $v$ is a weight vector of weight
${\bf l}=\sum_{j=1}^nl_j{\bf e}_j\in(\Z_{\ge0})^n$. Hence $v$ can be written as
\begin{equation} \label{eq0}
v=\sum_{\bf m}c_{\bf m}\ket{\bf m}\ot\ket{{\bf l}-{\bf m}}.
\end{equation}
The conditions $e_jv=0$ ($1\le j\le n-1$) and $e_nv=0$ read respectively as
\begin{align}
\sum c_{\bf m}(&[l_j-m_j]\ket{{\bf m}}\ot\ket{{\bf l}-{\bf m}-{\bf e}_j+{\bf e}_{j+1}}
\nonumber\\
&+q^{-l_j+l_{j+1}+m_j-m_{j+1}}[m_j]
\ket{{\bf m}-{\bf e}_j+{\bf e}_{j+1}}\ot\ket{{\bf l}-{\bf m}})=0, \label{eq1}\\
\sum c_{\bf m}(&[l_n-m_n]\ket{{\bf m}}\ot\ket{{\bf l}-{\bf m}-{\bf e}_n}
\nonumber\\
&+iq^{-l_n+m_n-1/2}[m_n]
\ket{{\bf m}-{\bf e}_n}\ot\ket{{\bf l}-{\bf m}})=0. \label{eq2}
\end{align}

We first show there is no singular vector of weight ${\bf l}$ such that $l_j>0,
l_{j+1}=0$ for some $j<n$. Suppose $l_{j+1}=0$. Looking at the coefficient of
$\ket{\ldots,m_j,0,\ldots}\ot\ket{\ldots,l_j-m_j-1,1,\ldots}$ in \eqref{eq1}
one sees $c_{\bf m}=0$ if $m_j<l_j$. Similarly,  the coefficient of
$\ket{\ldots,m_j-1,1,\ldots}\ot\ket{\ldots,l_j-m_j,0,\ldots}$ in \eqref{eq1}
gives $c_{\bf m}=0$ if $m_j>0$. Hence $c_{\bf m}=0$ for all ${\bf m}$
unless $l_j=0$.

We next show there is no singular vector of weight ${\bf l}$ such that $l_{n-1}>0$.
Looking at the coefficient of $\ket{\ldots,m_{n-1},0}\ot\ket{\ldots,l_{n-1}-m_{n-1}-1,
l_n+1}$ in $\eqref{eq1} _{j=n-1}$ 
one sees $c_{\bf m}=0$ if $m_{n-1}<l_{n-1},m_n=0$. Together with
\eqref{eq2} we get $c_{\bf m}=0$ if $m_{n-1}<l_{n-1}$. 
Similarly,  the coefficient of
$\ket{\ldots,m_{n-1}-1,l_n+1}\ot\ket{\ldots,l_{n-1}-m_{n-1},0}$ 
in $\eqref{eq1} _{j=n-1}$
gives $c_{\bf m}=0$ if $m_{n-1}>0,m_n=l_n$. Together with \eqref{eq2} we get
$c_{\bf m}=0$ for $m_{n-1}>0$. Hence $c_{\bf m}=0$ for all ${\bf m}$
unless $l_{n-1}=0$.
The coefficient 
$c_{\bf m}$ for ${\bf l}=l{\bf e}_n$ can easily be determined by solving \eqref{eq2}.
\end{proof}

\begin{proposition}
As a $U_q(C_n)$ module any singular vector in $V^{\ot 2}$ is given by
\begin{align*}
v^\epsilon_l&=\sum_{0\le k\le l\atop k\equiv p(\epsilon)\,\mbox{\scriptsize mod}\,2} 
q^{k(2l-k-1)/2}{l\brack k}\ket{k{\bf e}_n}\ot\ket{(l-k){\bf e}_n}
\in V^\epsilon\ot V^{\epsilon\eta(l)} \quad(l\ge p(\epsilon)),\\
\intertext{or}
v^-_0&=\ket{{\bf e}_{n-1}}\ot\ket{{\bf e}_n}-q\ket{{\bf e}_n}\ot\ket{{\bf e}_{n-1}}\in V^-\ot V^-
\end{align*}
for some $l\in\Z_{\ge0}$ and $\epsilon=+$ or $-$, where $p(\epsilon)=0$ $(\epsilon=+)$,
\,$=1$ $(\epsilon=-)$ and $\eta(l)=+$ $(l$:even$)$,\,$=-$ $(l$:odd$)$.
\end{proposition}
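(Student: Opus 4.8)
The plan is to mirror the structure of the proof of Proposition~\ref{prop:Bsing}, adapting it to the $U_q(C_n)$ action where the node $n$ now acts by the degree-two operators $e_n,f_n$ of Proposition~\ref{pr:repC} rather than by the degree-one operators of the $B_n$ case. As before, I assume a singular vector $v$ is a weight vector of weight ${\bf l}=\sum_j l_j{\bf e}_j$ and write it as in \eqref{eq0}. The conditions $e_jv=0$ for $1\le j\le n-1$ are \emph{identical} to \eqref{eq1}, since the action of $e_1,\ldots,e_{n-1}$ coincides in both propositions. Only the condition $e_nv=0$ changes: using $e_n\ket{\bf m}=\frac{[m_n][m_n-1]}{[2]^2}\ket{{\bf m}-2{\bf e}_n}$, the new equation replaces \eqref{eq2} by a relation coupling $c_{\bf m}$ to $c_{{\bf m}-2{\bf e}_n}$ with shift by $2{\bf e}_n$ instead of ${\bf e}_n$.

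First I would rerun the two vanishing arguments from Proposition~\ref{prop:Bsing} to force ${\bf l}=l{\bf e}_n$, i.e.\ to kill all weight components outside the last coordinate. The argument that $l_j>0,l_{j+1}=0$ is impossible for $j<n-1$ is literally unchanged, since it uses only \eqref{eq1}. For the step eliminating $l_{n-1}>0$ I would again combine $\eqref{eq1}_{j=n-1}$ with the $e_n$ condition; here the only new feature is the parity shift by $2$ in the $e_n$ relation, so I expect the induction to propagate in steps of two and to leave open the single exceptional residue where the chain cannot close — this is exactly the source of the extra solution $v^-_0$ with weight ${\bf e}_{n-1}+{\bf e}_n$ in the odd sector, which I would identify explicitly as the low-weight exception rather than trying to absorb it into the generic family.

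Once reduced to ${\bf l}=l{\bf e}_n$, the vector has the form $\sum_{k}c_k\ket{k{\bf e}_n}\ot\ket{(l-k){\bf e}_n}$, and the single surviving condition $e_nv=0$ becomes a two-term recursion in $k$ of step $2$ (because $e_n$ lowers $m_n$ by $2$). Solving this recursion determines $c_k$ up to normalization, and the $q$-binomial prefactor $q^{k(2l-k-1)/2}{l\brack k}$ should drop out after comparing the coefficients of $\ket{(k-2){\bf e}_n}\ot\ket{(l-k){\bf e}_n}$; the parity constraint $k\equiv p(\epsilon)\bmod 2$ records which of the two arithmetic progressions in $k$ the recursion selects, consistent with the $\Z_2$-grading \eqref{fpm} and the sign $\eta(l)$ tracking the parity of $l-k$. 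I would finish by recording the weight-space membership $v^\epsilon_l\in V^\epsilon\ot V^{\epsilon\eta(l)}$, which follows by reading off $|{\bf m}|\bmod 2$ in each tensor factor.

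The main obstacle I anticipate is the bookkeeping around the exceptional vector $v^-_0$: the generic recursion argument presupposes that we may freely move within a single last-coordinate chain, but at the lowest odd weight the coupling to the $(n-1)$-coordinate via $\eqref{eq1}_{j=n-1}$ does not fully collapse, so the weight ${\bf e}_{n-1}+{\bf e}_n$ must be treated by hand. I would handle it by directly checking $e_jv^-_0=0$ for all $j$ on the two-term ansatz $\ket{{\bf e}_{n-1}}\ot\ket{{\bf e}_n}-q\ket{{\bf e}_n}\ot\ket{{\bf e}_{n-1}}$, using the explicit actions in Proposition~\ref{pr:repC}, and verifying that no further independent solution exists at that weight. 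The remaining computations — solving the step-two recursion and matching the $q$-binomial coefficient — are routine manipulations with $q$-integers.
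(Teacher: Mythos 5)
Your overall strategy --- rerun the Proposition \ref{prop:Bsing} argument with a step-two relation (the paper's \eqref{eq3}) in place of \eqref{eq2}, reduce to ${\bf l}=l{\bf e}_n$, solve a step-two recursion there, and treat ${\bf e}_{n-1}+{\bf e}_n$ by hand --- is indeed the paper's strategy, and the parts you spell out (the unchanged no-gap argument for $j<n-1$, the two-term recursion at weight $l{\bf e}_n$ giving the two parity families $v_l^{\pm}$, the direct check that $v_0^-$ is singular) are correct. The genuine gap is your claim that the parity-adapted elimination of $l_{n-1}>0$ ``leaves open the single exceptional residue'' ${\bf e}_{n-1}+{\bf e}_n$. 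It does not. What that argument actually yields (as in the paper) is only the weaker statement that nontrivial $c_{\bf m}$ can exist only if $l_{n-1}\le 1$ when $l_n$ is odd, and $l_{n-1}=0$ when $l_n$ is even. This leaves open an \emph{infinite} family of candidate weights: every ${\bf l}$ with $l_{n-1}=1$ and $l_n$ odd, possibly with $l_{n-2}, l_{n-3},\ldots$ positive as well (the no-gap argument does not forbid these, since $l_{n-1}>0$). Your proposal contains no mechanism for excluding, say, ${\bf l}={\bf e}_{n-1}+3{\bf e}_n$ or ${\bf l}={\bf e}_{n-2}+{\bf e}_{n-1}+{\bf e}_n$; verifying uniqueness ``at that weight'' for ${\bf e}_{n-1}+{\bf e}_n$ alone does not touch them.

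Closing this gap is where the real work of the paper's proof lies, and it is not routine parity bookkeeping. The paper shows, for $l_{n-1}=1$ and $l_n$ odd: (i) $l_{n-2}=0$, by a four-case analysis of coefficients using \eqref{eq1}$_{j=n-2}$ combined with the vanishings already established; and (ii) $l_n=1$, by a consistency contradiction --- the ratio $c_{{\bf e}_{n-1}+2{\bf e}_n}/c_{{\bf e}_n}$ is computed along two different chains of relations, once through $c_{3{\bf e}_n}$, giving $-q^{l_n-3}[l_n-1][l_n-2]/([3][2])$, and once through $c_{{\bf e}_{n-1}}$, giving $-q^{l_n-3}[l_n][l_n-1]/([2][1])$; these disagree for $l_n\ge 3$, so no nonzero singular vector exists at such weights. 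Some argument of this overdetermination type is indispensable: a putative vector of weight ${\bf e}_{n-1}+l_n{\bf e}_n$ passes every parity constraint your step-two induction produces, so it can only be killed by playing \eqref{eq1}$_{j=n-1}$ and \eqref{eq3} against each other quantitatively, not by parity alone. Until you supply this step (and the companion exclusion of $l_{n-2}>0$), the proposal does not prove the proposition.
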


\begin{proof}
The proof is similar to Proposition \ref{prop:Bsing}. A singular vector $v$ can
be written as \eqref{eq0}. The condition $e_jv=0$ reads \eqref{eq1} and
\begin{align}
\sum c_{\bf m}(&[l_n-m_n][l_n-m_n-1]\ket{{\bf m}}\ot\ket{{\bf l}-{\bf m}-2{\bf e}_n}
\nonumber\\
&-q^{-2l_n+2m_n-1}[m_n][m_n-1]
\ket{{\bf m}-2{\bf e}_n}\ot\ket{{\bf l}-{\bf m}})=0. \label{eq3}
\end{align}

The fact that there is no singular vector 
of weight ${\bf l}$ such that $l_j>0,l_{j+1}=0$ for some $j<n$ is the same.
Similarly to the next argument in the proof of Proposition \ref{prop:Bsing} 
with \eqref{eq3}, we see that $c_{\bf m}=0$ if $m_{n-1}\le l_{n-1}-1$
\& $m_n\equiv0$, $m_{n-1}\le l_{n-1}-2$ \& $m_n\equiv1$, $m_{n-1}\ge1$
\& $m_n\equiv l_n$, or $m_{n-1}\ge2$ \& $m_n\equiv l_n-1$, where the congruence
$\equiv$ is modulo 2. Hence we can conclude that if nontrivial $c_{\bf m}$ exists,
then $l_{n-1}\le1$ if $l_n$ is odd and $l_{n-1}=0$ if $l_n$ is even.

We wish to show that nontrivial $c_{\bf m}$ exists only when ${\bf l}=l{\bf e}_n$
for some nonnegative integer $l$ or ${\bf l}={\bf e}_{n-1}+{\bf e}_n$. 
Thus the remaining thing to show is that if $l_{n-1}=1$ and $l_n$ is odd,
then (i) $l_{n-2}=0$ and (ii) $l_n=1$. To show (i) by contradiction suppose $l_{n-2}>0$.
From the result of the previous paragraph we know $c_{\bf m}=0$ if $m_{n-1}=0$ \& $m_n\equiv0$
or $m_{n-1}=1$ \& $m_n\equiv1$. Hence it suffices to show $c_{\bf m}=0$ when 
(a) $m_{n-2}>0$ \& $m_{n-1}=0$ \& $m_n\equiv1$ (b) $m_{n-2}=m_{n-1}=0$ \& $m_n\equiv1$
(c) $m_{n-2}<l_{n-2}$ \& $m_{n-1}=1$ \& $m_n\equiv0$ (d) $m_{n-2}=l_{n-2}$ \& $m_{n-1}=1$ \& $m_n\equiv0$.
Case (a) (resp. (c)) is shown from \eqref{eq1}$_{j=n-2}$ and $c_{(\ldots,m_{n-2}-1,1,m_n)}=0$ if $m_n\equiv1$
(resp.  $c_{(\ldots,m_{n-2}+1,0,m_n)}=0$ if $m_n\equiv0$). Case (b) (resp. (d)) is shown by comparing
the coefficient of $\ket{\ldots,0,0,m_n}\ot\ket{\ldots,l_{n-2}-1,2,l_n-m_n}$ (resp. 
$\ket{\ldots,l_{n-2}-1,2,m_n}\ot\ket{\ldots,0,0,l_n-m_n}$) in \eqref{eq1}$_{j=n-2}$.
To show (ii) by contradiction suppose $l_n\ge3$. One can compute 
$c_{{\bf e}_{n-1}+2{\bf e}_n}/c_{{\bf e}_n}$ either by 
$(c_{3{\bf e}_n}/c_{{\bf e}_n})(c_{{\bf e}_{n-1}+2{\bf e}_n}/c_{3{\bf e}_n})$ or 
$(c_{{\bf e}_{n-1}}/c_{{\bf e}_n})(c_{{\bf e}_{n-1}+2{\bf e}_n}/c_{{\bf e}_{n-1}})$.
The former gives $-q^{l_n-3}[l_n-1][l_n-2]/([3][2])$ while the latter
$-q^{l_n-3}[l_n][l_n-1]/([2][1])$, which is a contradiction.

For the survived cases it is easy to obtain nontrivial $c_{\bf m}$.
\end{proof}

\subsection{Spectral decomposition}\label{ss:sd}

We calculate the spectral decomposition of $PR(z)$. 
In this subsection we denote the subspace generated by
$f_1,\ldots,f_n$ from the singular vector $v_l$ (resp. $v^\epsilon_l$) by $V_l$ (resp. $V^\epsilon_l$). 
The complete reducibility of $V^{\ot 2}$ as 
a $U_q(B_n)$ or $U_q(C_n)$ module is valid since $PR(z)$ has a different 
eigenvalue on each subspace $V_l$ or $V^\epsilon_l$ as we will see below, 
and the singular vectors obtained in the previous subsection are 
actually highest weight vectors of each irreducible component.

We prepare a lemma, which is obtained by direct calculation.

\begin{lemma}
For $U_q(D_{n+1}^{(2)})$ we have 
\begin{align}
&(\pi_x\ot\pi_y)\Delta(e_{n-1}\cdots e_1e_0)v_l
=\frac1{1-q^{2l+1}}\{(q^{l+1}x+y)v_{l+1}+q^l(x+q^ly)f_n^2v_{l-1}\}\quad(l\ge1),
\label{ev}\\
&(\pi_x\ot\pi_y)\Delta(e_{n-1}\cdots e_1e_0)v_0
=\frac1{1-q}\{(qx+y)v_1-iq^{1/2}(x+y)f_nv_0\},
\label{ev0}\\
&(\pi_x\ot\pi_y)\Delta(f_0f_1\cdots f_{n-1})v_l
=-\kappa[l]q^{-1/2}(q^lx^{-1}+y^{-1})v_{l-1}\quad(l\ge1).
\label{fv}
\end{align}

\end{lemma}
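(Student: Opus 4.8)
The plan is to establish the three identities \eqref{ev}, \eqref{ev0} and \eqref{fv} by direct, albeit organized, computation, exploiting the very explicit form of the singular vectors $v_l$ from Proposition \ref{prop:Bsing} together with the $q$-oscillator action of $U_q(D^{(2)}_{n+1})$ in Proposition \ref{pr:repD}. The central simplification is that every $v_l$ lives entirely in the $n$-th oscillator slot, i.e.\ is a combination of $\ket{k{\bf e}_n}\ot\ket{(l-k){\bf e}_n}$. Consequently the generators $e_1,\ldots,e_{n-2}$ act as zero on the relevant basis vectors, and only $e_{n-1}$, $e_0$, $e_n$, $f_0$ and $f_n$ ever move occupation numbers into or out of the first and last slots. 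Thus the apparently long string $e_{n-1}\cdots e_1 e_0$ collapses: applying $e_0$ first creates one quantum in slot $1$, and then only the single step $e_{n-1}$ can transport it to slot $n$ (all intermediate $e_j$ annihilate because the intervening slots are empty). So the essential content is a two- or three-step calculation, not an $n$-fold one.

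First I would record the coproduct action. For $e_0$, using $\Delta e_0 = 1\ot e_0 + e_0\ot k_0$ and the formulas $e_0\ket{{\bf m}}=x\ket{{\bf m}+{\bf e}_1}$, $k_0\ket{{\bf m}}=-iq^{m_1+1/2}\ket{{\bf m}}$ from Proposition \ref{pr:repD}, I compute $(\pi_x\ot\pi_y)\Delta(e_0)$ on each summand $\ket{k{\bf e}_n}\ot\ket{(l-k){\bf e}_n}$, producing a vector with a single quantum in the first slot of one tensor factor and a scalar weight factor $-iq^{1/2}$ coming from $k_0$ acting on the vacuum of slot $1$. Then I apply $e_{n-1}$ via $\Delta e_{n-1}=1\ot e_{n-1}+e_{n-1}\ot k_{n-1}$, which moves that quantum from slot $n-1$ to slot $n$ with the factor $[1]=1$; combined with the weight $k_{n-1}$ on the partner this reshuffles the occupation numbers so that the result is again supported on $\ket{k'{\bf e}_n}\ot\ket{(l+1-k'){\bf e}_n}$ and $\ket{k'{\bf e}_n}\ot\ket{(l-1-k'){\bf e}_n}$ type vectors. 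The appearance of $v_{l+1}$ and $f_n^2 v_{l-1}$ on the right-hand side is then a matter of recognizing that the resulting coefficients, once the powers of $q$ and the factors $i$ and $\kappa$ are collected, are precisely proportional to the binomial coefficients ${l+1\brack k}$ and (after two applications of $f_n$, which simply raises the $n$-th occupation number by one in each slot) to ${l-1\brack k}$.

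The identity \eqref{fv} is the most transparent: since $f_1,\ldots,f_{n-1}$ only permute occupation numbers among empty intermediate slots and $f_0$ removes a quantum from slot $1$, the string $f_0 f_1\cdots f_{n-1}$ acting on $v_l$ reduces to a single lowering in the last slot, giving the clean factor $-\kappa[l]q^{-1/2}(q^l x^{-1}+y^{-1})$ times $v_{l-1}$; here the $i\kappa$ in $f_0$ and the $i$ hidden in the weight factors combine to the real constant $\kappa$, and the $[l]$ arises from summing the $[k]$ and $[l-k]$ contributions against the binomial identity $[k]{l\brack k}=[l]{l-1\brack k-1}$. The main obstacle, and where I would spend the most care, is the bookkeeping of the $i=\sqrt{-1}$ and $q^{1/2}$ factors introduced by $k_0$ and $k_n$ in Proposition \ref{pr:repD}: these twisted weight normalizations must conspire to give the real rational coefficients displayed in \eqref{ev}–\eqref{ev0}, and verifying that the two binomial recurrences
\[
q^{lk-k^2/2}{l\brack k}\ \longrightarrow\ q^{(l+1)k-k^2/2}{l+1\brack k}
\]
match after the weight twists is the delicate step. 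Once the single nontrivial $q$-binomial identity (a Pascal-type recurrence for ${l\brack k}$) is invoked, the coefficients collapse to exactly $v_{l\pm1}$ and the lemma follows; I would present only this key recurrence explicitly and leave the remaining rearrangements as direct calculation, consistent with the claim that the lemma is ``obtained by direct calculation.''
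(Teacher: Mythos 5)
Your overall plan (explicit computation with the vectors $v_l$ of Proposition \ref{prop:Bsing} and the coproduct action of Proposition \ref{pr:repD}) is exactly what the paper means by ``obtained by direct calculation,'' but the mechanism you describe for collapsing the computation is wrong, and following it literally would give zero rather than the right-hand sides of \eqref{ev}--\eqref{fv}. You claim that after $e_0$ creates a quantum in slot $1$, ``all intermediate $e_j$ annihilate because the intervening slots are empty'' and that ``only the single step $e_{n-1}$ can transport it to slot $n$.'' But $e_{n-1}$ acts by $e_{n-1}\ket{{\bf m}}=[m_{n-1}]\ket{{\bf m}-{\bf e}_{n-1}+{\bf e}_n}$, so it moves a quantum from slot $n-1$ to slot $n$ and kills anything with $m_{n-1}=0$; a quantum in slot $1$ can reach slot $n$ only by being carried through \emph{every} intermediate slot, i.e.\ each $e_j$ $(1\le j\le n-1)$ in the string $e_{n-1}\cdots e_1e_0$ acts nontrivially (with factor $[1]=1$) on the intermediate vector. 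If the intermediate $e_j$ really annihilated, the composition would vanish identically. The correct simplification is different: at each step exactly one of the two terms of $\Delta e_j=1\ot e_j+e_j\ot k_j$ survives, and when the surviving term is $e_j\ot k_j$ it contributes the weight of $k_j$ on the partner tensor factor. In particular $k_0$ contributes $-iq^{1/2}$ and $k_{n-1}$ contributes $q^{l-k}$ on $\ket{(l-k){\bf e}_n}$, and it is precisely these factors that build the coefficients $(q^{l+1}x+y)$ and $q^l(x+q^ly)$ in \eqref{ev}; your picture loses them. The same misconception recurs in \eqref{fv}: the quantum must be carried from slot $n$ down to slot $1$ by $f_{n-1},\ldots,f_1$ (picking up $q^{-k}$ from $k_{n-1}^{-1}$) before $f_0$ can remove it.

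A second genuine error is the assertion that $f_n^2v_{l-1}$ arises from ``two applications of $f_n$, which simply raises the $n$-th occupation number by one in each slot.'' In \eqref{ev} the operator $f_n^2$ acts on $V_x\ot V_y$ through $\bigl((\pi_x\ot\pi_y)\Delta(f_n)\bigr)^2$ with $\Delta f_n=f_n\ot1+k_n^{-1}\ot f_n$, so $f_n^2v_{l-1}$ expands into terms of the four types $f_n^2\ot1$, $f_nk_n^{-1}\ot f_n$, $k_n^{-1}f_n\ot f_n$, $k_n^{-2}\ot f_n^2$, each carrying powers of $q$ and $i$ from $k_n^{-1}\ket{{\bf m}}=-iq^{m_n+1/2}\ket{{\bf m}}$; it is not the vector with one quantum added to slot $n$ of each factor. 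Since verifying \eqref{ev} requires expanding both sides in the basis $\ket{k{\bf e}_n}\ot\ket{(l+1-k){\bf e}_n}$ and matching coefficients, this is not a cosmetic slip. Once these two points are repaired, the remainder of your plan does go through: the transported contributions give
$y\,\ket{k{\bf e}_n}\ot\ket{(l-k+1){\bf e}_n}-iq^{1/2}q^{l-k}x\,\ket{(k+1){\bf e}_n}\ot\ket{(l-k){\bf e}_n}$
per summand of $v_l$, and after a shift of the summation index the identities $[k+1]{l\brack k+1}=[l]{l-1\brack k}$ and $[l-k]{l\brack k}=[l]{l-1\brack k}$ yield \eqref{fv} and, with the corrected expansion of $f_n^2v_{l-1}$, also \eqref{ev} and \eqref{ev0}.
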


\begin{proposition} \label{prop:specD2}
For $U_q(D_{n+1}^{(2)})$, $PR(z)$ has the following spectral decomposition.
\[
PR(z)=\sum_{l=0}^\infty\prod_{j=1}^l\frac{z+q^j}{1+q^jz}P_l,
\]
where $P_l$ is the projector on $V_l$.
\end{proposition}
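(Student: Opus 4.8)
The plan is to compute the eigenvalue of $PR(z)$ on each irreducible component $V_l$ by exploiting the fact that $R(z)$ is already determined up to scalar by the intertwining relations, and that $PR(z)$ commutes with the classical part $U_q(B_n)$. Since $V^{\otimes 2}$ decomposes as $\bigoplus_l V_l$ with the $V_l$ being pairwise non-isomorphic irreducible $U_q(B_n)$ modules (distinguished by their highest weights $l{\bf e}_n$, as established by Proposition \ref{prop:Bsing}), Schur's lemma forces $PR(z)$ to act as a scalar $\rho_l(z)$ on each $V_l$. So the entire content of the proposition is the evaluation of these scalars, and the normalization (\ref{rnor}) fixes $\rho_0(z)=1$.

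The key idea is to set up a recursion relating $\rho_{l+1}(z)$ to $\rho_l(z)$ using the remaining intertwining relations, namely those involving $e_0$ and $f_0$, which are precisely the generators not in $U_q(B_n)$ and hence the ones that mix different $V_l$. First I would use the Lemma: equation (\ref{ev}) expresses $(\pi_x\otimes\pi_y)\Delta(e_{n-1}\cdots e_1 e_0)v_l$ as a linear combination of $v_{l+1}$ and $f_n^2 v_{l-1}$ (both of which are singular-vector-generated, the latter lying in $V_{l-1}$), while (\ref{fv}) sends $v_l$ down to $v_{l-1}$. The strategy is to apply the intertwining relation for the composite operator $e_{n-1}\cdots e_1 e_0$ — built from (\ref{eR}) for the individual generators — to the highest weight vector $v_l$, and compare the $v_{l+1}$-components on both sides. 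Because $PR(z)$ acts as $\rho_l$ on $V_l$ and $e_{n-1}\cdots e_1 e_0$ carries the $v_{l+1}$-part forward, matching coefficients yields a relation of the form $\rho_{l+1}(z)\,(\text{coeff of }v_{l+1}\text{ on one side}) = \rho_l(z)\,(\text{coeff on the other})$. Reading off the $x,y$-dependent prefactors $q^{l+1}x+y$ versus the image under the opposite coproduct (with $x,y$ swapped) gives the ratio $\rho_{l+1}(z)/\rho_l(z) = (z+q^{l+1})/(1+q^{l+1}z)$ after setting $z=x/y$. Telescoping from $\rho_0=1$ then produces the claimed product $\prod_{j=1}^l (z+q^j)/(1+q^j z)$.

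Concretely, I would argue as follows. The relation (\ref{eqrc}) says $(\pi_y\otimes\pi_x)\Delta(g)\,PR(z) = PR(z)\,(\pi_x\otimes\pi_y)\Delta(g)$ for all $g$. Taking $g = e_{n-1}\cdots e_1 e_0$ and applying both sides to $v_l$ (noting $v_l$ depends on the evaluation parameters only through the $x,y$-independent combinatorial formula of Proposition \ref{prop:Bsing}), the right side equals $\rho_l(z)$ times the expression (\ref{ev}) evaluated with parameters $(x,y)$, whereas the left side, since $PR(z)$ preserves each $V_l$ and acts by $\rho_{l+1}$ on $V_{l+1}$ and by $\rho_{l-1}$ on $V_{l-1}$, equals the expression (\ref{ev}) evaluated with parameters $(y,x)$ but with $v_{l+1}$ weighted by $\rho_{l+1}(z)$ and $f_n^2 v_{l-1}$ by $\rho_{l-1}(z)$. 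Since $v_{l+1}$ and $f_n^2 v_{l-1}$ are linearly independent (they lie in distinct irreducibles), I can equate their coefficients separately. The $v_{l+1}$-coefficient gives $\rho_{l+1}(z)(q^{l+1}y+x) = \rho_l(z)(q^{l+1}x+y)$, i.e. $\rho_{l+1}(z)/\rho_l(z) = (q^{l+1}x+y)/(x+q^{l+1}y) = (z+q^{l+1})/(1+q^{l+1}z)$; the $f_n^2 v_{l-1}$-coefficient yields a consistency check via (\ref{fv}) that should be automatically satisfied (or handled symmetrically using the $f_0 f_1\cdots f_{n-1}$ relation).

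The main obstacle I anticipate is bookkeeping the parameter substitution $x\leftrightarrow y$ correctly on the two sides of (\ref{eqrc}): the opposite-coproduct side carries the reversed evaluation parameters, and one must be careful that the singular vectors $v_l$ themselves are genuinely parameter-independent (as they are, being defined purely combinatorially) so that the only $x,y$-dependence entering the eigenvalue recursion comes from the explicit prefactors in (\ref{ev})--(\ref{fv}). A secondary subtlety is confirming that the $v_{l+1}$ and $f_n^2 v_{l-1}$ terms can be separated — this relies on complete reducibility and the distinctness of the $V_l$, which the paper justifies a posteriori by the very fact that the eigenvalues $\rho_l(z)$ turn out to be distinct; I would either invoke this consistency directly or, more cleanly, note that the weights of $v_{l+1}$ and $f_n^2 v_{l-1}$ differ, so comparing the appropriate weight components of the vector identity isolates each coefficient unambiguously without circularity.
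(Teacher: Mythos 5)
Your strategy is at bottom the paper's: both proofs use that $PR(z)$ commutes with the $U_q(B_n)$-action, hence acts by a scalar $\rho_l(z)$ on each singular vector $v_l$, and both extract a recursion for $\rho_l(z)$ by feeding a composite generator into (\ref{eqrc}), using the explicit formulas of the Lemma and the normalization (\ref{rnor}) to fix $\rho_0(z)=1$. The one real difference is the choice of generator: the paper takes $g=f_0f_1\cdots f_{n-1}$, for which (\ref{fv}) produces a multiple of the \emph{single} vector $v_{l-1}$ on both sides, so the recursion $(q^ly^{-1}+x^{-1})\rho_l(z)=\rho_{l-1}(z)(q^lx^{-1}+y^{-1})$ falls out with nothing to disentangle; you take $g=e_{n-1}\cdots e_1e_0$, for which (\ref{ev}) produces the two vectors $v_{l+1}$ and $f_n^2v_{l-1}$, so your argument additionally needs their linear independence.

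Exactly there your proposal contains a genuine error. Your fallback claim that ``the weights of $v_{l+1}$ and $f_n^2v_{l-1}$ differ'' is false: $\Delta(f_n)$ raises the weight by ${\bf e}_n$, so $f_n^2v_{l-1}$ has weight $(l-1){\bf e}_n+2{\bf e}_n=(l+1){\bf e}_n$, the same as $v_{l+1}$. This is forced in any case, since $\Delta(e_{n-1}\cdots e_1e_0)$ is weight-homogeneous of degree ${\bf e}_n$, so \emph{all} terms in (\ref{ev}) share one weight and weight comparison can never separate them. The separation is nevertheless legitimate for a different reason: $v_{l+1}$ is singular while $f_n^2v_{l-1}$ is not, because $e_nv_{l-1}=0$ and $[e_n,f_n]=(k_n-k_n^{-1})/(q^{1/2}-q^{-1/2})$ give that $e_nf_n^2v_{l-1}$ is a nonzero multiple of $f_nv_{l-1}$ for generic $q$; hence neither vector is proportional to the other, and one may first apply $\Delta(e_n)$ to the identity to isolate the $f_n^2v_{l-1}$ part and then equate the $v_{l+1}$ parts. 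This fix also disposes of your worry about circularity, since it never appeals to complete reducibility or to distinctness of the $V_l$. (Alternatively, switching to the paper's generator removes the issue altogether.) A lesser point: your displayed recursion $\rho_{l+1}(z)(q^{l+1}y+x)=\rho_l(z)(q^{l+1}x+y)$ attaches the eigenvalues to the wrong prefactors; since $PR(z)$ acts \emph{first} on the left-hand side of (\ref{eqrc}), the swapped-parameter coefficient goes with $\rho_l(z)$, i.e. the correct matching is $\rho_l(z)(q^{l+1}y+x)=\rho_{l+1}(z)(q^{l+1}x+y)$. Your subsequent simplification $(q^{l+1}x+y)/(x+q^{l+1}y)=(z+q^{l+1})/(1+q^{l+1}z)$ with $z=x/y$ is also inverted, and the two slips cancel, which is why you still land on the correct product formula. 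With these repairs your argument becomes a sound, slightly longer variant of the paper's proof.
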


\begin{proof}
First we note that by $PR(z)$ $v_l$ is mapped to a scalar multiple of $v_l$
since $PR(z)$ commutes with $U_q(B_n)$. Suppose $PR(z)v_l=\rho_l(z)v_l$.
Setting $g=f_0f_1\cdots f_{n-1}$ in (\ref{eqrc}), applying both 
sides to $v_l$ and using \eqref{fv} we get
\[
(q^ly^{-1}+x^{-1})\rho_l(z)v_{l-1}=\rho_{l-1}(z)(q^lx^{-1}+y^{-1})v_{l-1}.
\]
Due to $v_0=|{\bf 0}\rangle \otimes |{\bf 0}\rangle$
the normalization condition (\ref{rnor}) is satisfied by choosing
$\rho_0(z)=1$. Thus we obtain
\[
\rho_l(z)=\prod_{j=1}^l\frac{z+q^j}{1+q^jz}.
\]
\end{proof}

\begin{lemma}
For $U_q(A_{2n}^{(2)})$ we have
\begin{align*}
&(\pi_x\ot\pi_y)\Delta(e_{n-1}\cdots e_1e_0)(v_l^+\pm v_l^-)
=\frac{[l]}{[2l]}\{q^{-l}(y\mp iq^{l+1/2}x)(v_{l+1}^+\pm v_{l+1}^-)\\
&\hspace{6cm}\mp iq^{-1/2}(x\mp iq^{l-1/2}y)f_n(v_{l-1}^+\pm\chi(l\ne1)v_{l-1}^-)\}
\quad(l\ge1),\\
&(\pi_x\ot\pi_y)\Delta(e_{n-1}\cdots e_1e_0)v^+_0
=yv^+_1-iq^{1/2}xv^-_1,\\
&(\pi_x\ot\pi_y)\Delta(e_{n-2}\cdots e_0e_{n-1}\cdots e_0)v^+_0
=\frac1{[2]^2}\{(q^{-1}y^2-x^2)f_{n-1}v_2^+-i[2]xy f_{n-1}v_2^-\\
&\hspace{7cm}-(x^2+y^2)f_{n-1}f_nv_0^++i[2](q^{1/2}-q^{-1/2})xyv_0^-\},\\
&(\pi_x\ot\pi_y)\Delta(e_{n-1}e_ne_{n-1}\cdots e_0)v^-_0
=\frac1{[2]}(-iq^{1/2}xv^+_1+yv^-_1),\\
&(\pi_x\ot\pi_y)\Delta(f_0f_1\cdots f_{n-1})(v^+_l\pm v^-_l)\\
&\hspace{4.7cm}
=-\kappa[l]q^{-1/2}(\mp iq^{l-1/2}x^{-1}+y^{-1})(v^+_{l-1}\pm \chi(l\ne1)v^-_{l-1})
\quad(l\ge1).
\end{align*}
Here $\chi(\theta)=1$ if $\theta$ is true, $=0$ otherwise.
\end{lemma}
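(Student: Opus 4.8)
The plan is to establish every identity by direct computation, organized around the observation that each singular vector $v_l^\epsilon$ is supported on the finitely many basis vectors $\ket{k{\bf e}_n}\ot\ket{(l-k){\bf e}_n}$, $0\le k\le l$, in which all excitation sits at site $n$. First I would write $\Delta(e_{n-1}\cdots e_1e_0)=\Delta(e_{n-1})\cdots\Delta(e_0)$ and expand each $\Delta e_i=1\ot e_i+e_i\ot k_i$. Reading right to left, $\Delta(e_0)$ inserts a single box at site $1$ in exactly one tensor factor; the two choices ($1\ot e_0$ versus $e_0\ot k_0$) split the calculation into a ``box in the first factor'' branch and a ``box in the second factor'' branch, the latter carrying the $k_0$-eigenvalue $-iq^{1/2}$ from Proposition \ref{pr:repA}. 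In each branch the successive $\Delta(e_j)$ $(1\le j\le n-1)$ transport that single box from site $1$ to site $n$; the creation matrix elements are all $[1]=1$, so the only nontrivial contributions are the $k_j$-eigenvalues, which accumulate the requisite powers of $q$.

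The heart of the argument is the resummation over $k$. After transport one obtains a linear combination of $\ket{k'{\bf e}_n}\ot\ket{(l+1-k'){\bf e}_n}$ whose coefficients are the $v_l^\epsilon$-coefficients of the form $i^kq^{\cdots}{l\brack k}$ dressed by the accumulated $q$- and $i$-factors. Since the weight-$(l+1){\bf e}_n$ vectors in $V^{\ot2}$ are spanned by $v_{l+1}^\pm$ together with $f_nv_{l-1}^\pm$ (the long-root generator $f_n$ supplying the double box $2{\bf e}_n$), I would use the Pascal-type recursions for the Gaussian binomials ${l\brack k}$ to recombine the sums into exactly these two families, reading off the stated prefactors. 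The identity for $\Delta(f_0f_1\cdots f_{n-1})$ is handled symmetrically with $\Delta f_i=f_i\ot1+k_i^{-1}\ot f_i$; here the annihilation at site $1$ produces the factor $\kappa[l]$ and the $-q^{-1/2}(\mp iq^{l-1/2}x^{-1}+y^{-1})$ prefactor.

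The delicate bookkeeping, which I expect to be the main obstacle, is the interplay between the single-box action of $e_0$ and the double-box structure at the long-root node $n$. Since $\Delta(e_{n-1}\cdots e_0)$ flips the parity of $|{\bf m}|$ in whichever factor receives the new box, it does not respect the even/odd grading that distinguishes $v_l^+$ from $v_l^-$; consequently the clean images are the combinations $v_l^+\pm v_l^-$, and the two branches interfere so as to force the correction $\chi(l\ne1)$. Carrying the factor $i=\sqrt{-1}$ and the half-integer powers of $q$ contributed by $k_0$ and $k_n$ correctly through every branch is where all the care is needed; there is no conceptual difficulty, only a genuine risk of sign and phase errors.

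Finally, the low-weight cases require separate treatment because the generic recursions degenerate. A single application of $\Delta(e_{n-1}\cdots e_0)$ sends $v_0^\pm$ only into the weight-${\bf e}_n$ space, giving the $yv_1^+-iq^{1/2}xv_1^-$ type formulas directly. To reach the weight-$({\bf e}_{n-1}+{\bf e}_n)$ space I would instead compute the two-step products $\Delta(e_{n-2}\cdots e_0e_{n-1}\cdots e_0)$ and $\Delta(e_{n-1}e_ne_{n-1}\cdots e_0)$ on $v_0^\pm$, expanding on the finitely many basis vectors of that weight space to expose the components along $f_{n-1}v_2^\pm$, $f_{n-1}f_nv_0^+$ and the exceptional vector $v_0^-=\ket{{\bf e}_{n-1}}\ot\ket{{\bf e}_n}-q\ket{{\bf e}_n}\ot\ket{{\bf e}_{n-1}}$. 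The fact that $v_0^-$ is \emph{not} supported on site-$n$ excitations is precisely what is recorded by the vanishing of the $\chi(l\ne1)$ term at $l=1$, so these special computations also confirm the generic formula at its boundary.
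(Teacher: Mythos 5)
Your proposal takes the same route as the paper: the paper gives no argument for this lemma at all beyond the remark (made for the preceding $D^{(2)}_{n+1}$ lemma and implicitly applying here) that it is ``obtained by direct calculation,'' and your plan is precisely that calculation, with the correct structural observations --- the support of $v_l^\pm$ on $\ket{k{\bf e}_n}\ot\ket{(l-k){\bf e}_n}$, the collapse of the $2^n$ coproduct terms to exactly two surviving branches (all $e_j$'s on the first factor with $k_{n-1}\cdots k_0$ contributing $-iq^{(l-k)+1/2}$ on the second, or all on the second factor), and the $q$-binomial resummation against $v_{l+1}^\pm$ and $f_nv_{l-1}^\pm$.

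Two of your auxiliary assertions are false as stated and should be repaired in the write-up, though neither destroys the strategy. First, the weight-$(l+1){\bf e}_n$ subspace of $V^{\ot 2}$ is \emph{not} spanned by $v_{l+1}^\pm$ together with $f_nv_{l-1}^\pm$: it has dimension $l+2$ (basis $\ket{k'{\bf e}_n}\ot\ket{(l+1-k'){\bf e}_n}$, $0\le k'\le l+1$) and also contains $f_n^2v_{l-3}^\pm$, $f_n^3v_{l-5}^\pm$, and so on, so for $l\ge3$ the four vectors span a proper subspace. Consequently you cannot posit a priori that the image is a combination of those four vectors and then determine four coefficients from four components; you must verify the claimed identity on \emph{every} component $k'$. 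Your Pascal-recursion plan does accomplish this if carried out in full, but the spanning claim must not be used as a shortcut. Second, the statement that a single application of $\Delta(e_{n-1}\cdots e_1e_0)$ sends $v_0^\pm$ into the weight-${\bf e}_n$ space is true only for $v_0^+$: the exceptional vector $v_0^-=\ket{{\bf e}_{n-1}}\ot\ket{{\bf e}_n}-q\ket{{\bf e}_n}\ot\ket{{\bf e}_{n-1}}$ has weight ${\bf e}_{n-1}+{\bf e}_n$, so its image under that operator has weight ${\bf e}_{n-1}+2{\bf e}_n$; this is exactly why the lemma pairs $v_0^-$ with the longer word $e_{n-1}e_ne_{n-1}\cdots e_0$. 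Relatedly, on $v_0^-$ the two-branch collapse fails (mixed terms in which only $e_{n-1}$ acts on the factor carrying the site-$(n-1)$ box survive), so the third and fourth identities genuinely require the fuller weight-space expansion you describe rather than the generic transport picture.
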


\begin{proposition} \label{prop:specA2}
For $U_q(A_{2n}^{(2)})$, $PR(z)$ has the following spectral decomposition.
\[
PR(z)=P_0+\sum_{l=1}^\infty\left(\prod_{j=1}^l\frac{z-iq^{j-1/2}}{1-iq^{j-1/2}z}P^+_l
+\prod_{j=1}^l\frac{z+iq^{j-1/2}}{1+iq^{j-1/2}z}P^-_l\right),
\]
where $P_0$ is the projector on $V^+_0\oplus V^-_0$ and $P^\pm_l$ with $l\ge1$ is the one on the 
$U_q(C_n)$ invariant subspace containing $v^+_l\pm v^-_l$.
\end{proposition}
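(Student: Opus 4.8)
The plan is to follow the strategy of Proposition~\ref{prop:specD2}. Since $PR(z)$ commutes with the classical subalgebra $U_q(C_n)$, it preserves the space of singular (highest weight) vectors of each $U_q(C_n)$-weight, and by the complete reducibility recorded at the start of Section~\ref{ss:sd} the scalars it takes there are exactly the eigenvalues appearing in the spectral decomposition. The classification of singular vectors above organizes the weights: $v^+_0=\ket{\bf 0}\ot\ket{\bf 0}$ (weight $0$) and the exceptional $v^-_0$ (weight ${\bf e}_{n-1}+{\bf e}_n$) each span a one-dimensional singular space on which $PR(z)$ is a single scalar ($\rho^+_0$, resp.\ $\rho^-_0$), while for every $l\ge1$ the two vectors $v^+_l,v^-_l$ share the weight $l{\bf e}_n$ and the same total parity and span a two-dimensional singular space on which $PR(z)$ is a $2\times2$ matrix to be diagonalized. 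Writing $w^\pm_l=v^+_l\pm v^-_l$, the aim is to show that these are the eigenvectors with the stated eigenvalues.

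First I would run the recursion for $l\ge2$ from the last identity of the Lemma above, which reads
\[
(\pi_x\ot\pi_y)\Delta(f_0f_1\cdots f_{n-1})\,w^\pm_l=-\kappa[l]q^{-1/2}(y^{-1}\mp iq^{l-1/2}x^{-1})\,w^\pm_{l-1}\qquad(l\ge2).
\]
This lowering operator is diagonal in the basis $\{w^+_\bullet,w^-_\bullet\}$ and, for $l\ge2$, maps the level-$l$ singular space bijectively onto the level-$(l-1)$ one. Substituting $g=f_0f_1\cdots f_{n-1}$ into (\ref{eqrc}) and comparing the two sides on $w^\pm_l$, the vanishing of the off-diagonal $w^\mp_{l-1}$-coefficient forces (by induction on $l$, with the base case $l=1$ supplied below) that $PR(z)$ is already diagonal on $\{w^+_l,w^-_l\}$, and gives
\[
\frac{\rho^\pm_l(z)}{\rho^\pm_{l-1}(z)}=\frac{y^{-1}\mp iq^{l-1/2}x^{-1}}{x^{-1}\mp iq^{l-1/2}y^{-1}}=\frac{z\mp iq^{l-1/2}}{1\mp iq^{l-1/2}z},\qquad z=x/y,
\]
which is precisely the $j=l$ factor of the products in the statement.

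The base of the tower is the delicate part. The normalization (\ref{rnor}) gives $\rho^+_0=1$ on $V^+_0$ at once, since $P$ fixes $v^+_0=\ket{\bf 0}\ot\ket{\bf 0}$. At $l=1$ the lowering identity of the Lemma degenerates: the factor $\chi(l\ne1)$ kills the $v^-_0$-component, so both $w^+_1$ and $w^-_1$ are sent into the single line $\Q(q)v^+_0$, and the separation used for $l\ge2$ breaks down. To fix the full $2\times2$ action of $PR(z)$ on $\{w^+_1,w^-_1\}$ I would feed three relations of the Lemma into (\ref{eqrc}): the degenerate lowering relation at $l=1$, the short raising relation $(\pi_x\ot\pi_y)\Delta(e_{n-1}\cdots e_1e_0)v^+_0=yv^+_1-iq^{1/2}xv^-_1$, and the long raising relation $(\pi_x\ot\pi_y)\Delta(e_{n-1}e_ne_{n-1}\cdots e_0)v^-_0=\tfrac1{[2]}(-iq^{1/2}xv^+_1+yv^-_1)$. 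Carrying the still-unknown scalar $\rho^-_0$ on $V^-_0$ along, these produce an overdetermined linear system in the entries of $PR(z)$ on $\{w^+_1,w^-_1\}$ together with $\rho^-_0$; its unique solution is $PR(z)w^\pm_1=\rho^\pm_1(z)w^\pm_1$ with $\rho^\pm_1$ the $l=1$ product, together with $\rho^-_0=1$. This simultaneously seeds the induction of the previous paragraph and merges $V^+_0,V^-_0$ into the single projector $P_0$ of the statement.

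The main obstacle I anticipate is exactly this bottom of the tower. Away from $l=0,1$ the argument is a mechanical transcription of Proposition~\ref{prop:specD2}. But at $l=1$ the lowering operator collapses both $w^\pm_1$ onto $\Q(q)v^+_0$, so the eigenvectors cannot be read off from the recursion; they must instead be recovered from an overdetermined $2\times2$ system assembled from three different words in the $e_i$ and $f_i$, into which the unknown eigenvalue $\rho^-_0$ on the exceptional module $V^-_0$ is intertwined. Verifying that this system is consistent, that it forces the off-diagonal entries to vanish, and that it pins $\rho^-_0=1$—all while keeping the signs, the factors of $i$, and the parity sectors straight—is where the real work lies. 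Once this is settled, assembling $P_0$, $P^+_l$ and $P^-_l$ into the displayed spectral decomposition is immediate.
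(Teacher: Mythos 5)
Your proposal is correct and follows essentially the same route as the paper: the paper's (very terse) proof likewise postulates the decomposition $PR(z)=\sum_{l}\bigl(\rho^+_l(z)P^+_l+\rho^-_l(z)P^-_l\bigr)$ with $\rho^+_0(z)=1$ fixed by (\ref{rnor}), and determines the eigenvalues by feeding the lemma's raising/lowering identities into (\ref{eqrc}) exactly as in Proposition \ref{prop:specD2}. Your explicit handling of the two-dimensional singular space at each level $l\ge1$ — proving diagonality in the basis $v^+_l\pm v^-_l$ by induction, and resolving the degenerate bottom $l=0,1$ where the $\chi(l\ne1)$ collapse forces the overdetermined system that pins $\rho^-_0=1$ — is precisely the content the paper compresses into the phrase ``the determination of $\rho^\epsilon_l(z)$ is similar to Proposition \ref{prop:specD2}.''
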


\begin{proof}
Set
\[
PR(z)=\sum_{l=0}^\infty\left(
\rho^+_l(z)P^+_l
+\rho^-_l(z)P^-_l\right)
\]
with $\rho^+_0(z)=1$. $P^\epsilon_0$ is the projector on $V^\epsilon_0$.
The determination of $\rho^\epsilon_l(z)$ is similar to Proposition \ref{prop:specD2}. 
The normalization condition (\ref{rnor}) is satisfied since 
$v^+_0=|{\bf 0}\rangle \otimes |{\bf 0}\rangle$.
\end{proof}

\begin{lemma}
For $U_q(C_n^{(1)})$ we have
\begin{align*}
&(\pi_x\ot\pi_y)\Delta((e_{n-1}\cdots e_1)^2e_0)v^\epsilon_l
=\frac{[2]}{\langle l+1\rangle\langle l\rangle\langle l-1\rangle}
\{q^{-2l-1}\langle l-1\rangle(y-q^{2l+2}x)v^\epsilon_{l+2}\\
&\hspace{3cm}-[2]\langle l\rangle(x+y)f_nv^\epsilon_l-q^{-1}\langle l+1\rangle
(x-q^{2l-2}y)f_n^2v^\epsilon_{l-2}\}\quad(l\ge2,(\epsilon,l)\ne(-,2)),\\
&(\pi_x\ot\pi_y)\Delta((e_{n-1}\cdots e_1)^2e_0)v^\epsilon_1
=\frac1{\langle2\rangle}\{q^{-3}(y-q^4x)v^\epsilon_3
-([2-\epsilon1]x+[2+\epsilon1]y)f_nv^\epsilon_1\},\\
&(\pi_x\ot\pi_y)\Delta(e_{n-1}^2e_ne_{n-1}e_{n-2}^2\cdots e_1^2e_0)v^-_0
=q^{-1}[2]^{n-2}(y-q^2x)v^-_2,\\
&(\pi_x\ot\pi_y)\Delta(f_0(f_1\cdots f_{n-1})^2)v^\epsilon_l
=q^{-1}\frac{[l][l-1]}{[2]}(q^{2l-2}x^{-1}-y^{-1})v^\epsilon_{l-2}
\quad(l\ge2,(\epsilon,l)\ne(-,2)),\\
&(\pi_x\ot\pi_y)\Delta(f_0f_1^2\cdots f_{n-2}^2f_{n-1}f_nf_{n-1}^2)v^-_2
=[2]^{n-2}\{(y^{-1}+x^{-1})f_{n-1}v^-_2+q[2](y^{-1}-q^2x^{-1})v^-_0\}.
\end{align*}
Here $\langle m\rangle=q^m+q^{-m}$.
\end{lemma}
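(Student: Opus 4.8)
The plan is to establish all five identities by direct computation, in the same manner as the two preceding lemmas for $U_q(D^{(2)}_{n+1})$ and $U_q(A^{(2)}_{2n})$. The ingredients are the explicit generator action of Proposition \ref{pr:repC}, the explicit form of the singular vectors $v^\epsilon_l$ and $v^-_0$ from the preceding proposition, and the homomorphism property of $\Delta$, so that $\Delta$ applied to a word in the $e_i$ (resp. $f_i$) equals the corresponding ordered product of $\Delta e_i = 1\ot e_i + e_i\ot k_i$ (resp. $\Delta f_i = f_i\ot 1 + k^{-1}_i\ot f_i$). The guiding simplification is that each $v^\epsilon_l$ is supported on the $n$-th oscillator slot, i.e. on $\ket{k{\bf e}_n}\ot\ket{(l-k){\bf e}_n}$, and is annihilated by $e_1,\dots,e_n$; this kills the vast majority of terms produced by the intermediate coproducts and keeps the calculation tractable.

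First I would treat the generic case $(e_{n-1}\cdots e_1)^2 e_0$. Applying $\Delta e_0 = 1\ot e_0 + e_0\ot k_0$ to $v^\epsilon_l$ injects two quanta into slot $1$ of one tensor factor, producing the two families $x\,\ket{k{\bf e}_n}\ot\ket{(l-k){\bf e}_n + 2{\bf e}_1}$ and $-qx\,\ket{k{\bf e}_n + 2{\bf e}_1}\ot\ket{(l-k){\bf e}_n}$, the scalar $-q$ coming from $k_0$ acting on a slot with no quanta in position $1$. The chain $e_{n-1}\cdots e_1$, applied twice, then transports these two quanta from slot $1$ to slot $n$, so the word raises the label $l$ by $2$. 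The substance of the computation is the scalar bookkeeping: each transport step contributes a factor $[m_j]$ from the action of $e_j$ together with a power of $q$ from the companion $k_j$ in $\Delta e_j$, and these must be accumulated along the whole chain and across both tensor factors. The $f$-word $f_0(f_1\cdots f_{n-1})^2$ is handled by running the identical analysis with $\Delta f_j = f_j\ot 1 + k^{-1}_j\ot f_j$, now lowering $l$ by $2$.

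After expansion, the resulting vector lies in the single oscillator weight space with total occupation $(l\pm2){\bf e}_n$, spanned by $\ket{a{\bf e}_n}\ot\ket{b{\bf e}_n}$ with $a+b = l\pm 2$. The remaining and most delicate step is to recognise this vector as the stated combination of $v^\epsilon_{l+2}$, $f_n v^\epsilon_l$ and $f_n^2 v^\epsilon_{l-2}$, all of which sit in the same weight space because $f_n$ adds $2{\bf e}_n$. Matching coefficients amounts to a $q$-binomial identity relating the $q$-powers and ${l\brack k}$ appearing in the three target vectors, and pinning down the precise prefactors such as $q^{-2l-1}\langle l-1\rangle(y-q^{2l+2}x)$ is where the real work lies; I expect this recollection to be the main obstacle.

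Finally, the degenerate and exceptional instances require separate treatment. The formula for $l=1$ must be derived on its own since $v^\epsilon_{l-2}$ is absent, and the two identities involving $v^-_0$ and $v^-_2$ need a modified argument: here $v^-_0 = \ket{{\bf e}_{n-1}}\ot\ket{{\bf e}_n} - q\,\ket{{\bf e}_n}\ot\ket{{\bf e}_{n-1}}$ is \emph{not} supported on slot $n$ alone, so the tailored words $e_{n-1}^2 e_n e_{n-1} e_{n-2}^2\cdots e_1^2 e_0$ and $f_0 f_1^2\cdots f_{n-2}^2 f_{n-1}f_n f_{n-1}^2$ route the quanta through slot $n-1$. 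Verifying these repeats the transport analysis with the extra $e_n$ (resp. $f_n$) insertion and with careful tracking of the interference between the two terms of $v^-_0$ (resp. $v^-_2$), after which the coefficients $q^{-1}[2]^{n-2}(y-q^2x)$ and $[2]^{n-2}\{\dots\}$ follow.
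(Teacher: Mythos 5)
Your proposal takes essentially the same route as the paper, which presents this lemma (like the two preceding ones) as "obtained by direct calculation" of exactly the kind you outline: expand the coproducts of the chosen words, use the explicit action of Proposition \ref{pr:repC} together with the support and annihilation properties of the singular vectors to force the transport of quanta, and match coefficients in the common weight space spanned by $v^\epsilon_{l+2}$, $f_nv^\epsilon_l$, $f_n^2v^\epsilon_{l-2}$, treating $l\le 1$ and the $v^-_0$, $v^-_2$ cases separately. One small slip to correct: in $(\pi_x\ot\pi_y)\Delta(e_0)$ the term $1\ot e_0$ acts in the $\pi_y$ factor, so your first family of vectors should carry the coefficient $y$ rather than $x$ (the $-qx$ family is right); with that fixed, the bookkeeping proceeds as you describe.
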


\begin{proposition}
For $U_q(C_n^{(1)})$, $PR(z)$ has the following spectral decomposition.
\[
PR^{\epsilon,\epsilon}(z)=\sum_{l=0}^\infty\prod_{j=1}^l\frac{z-q^{4j-2}}{1-q^{4j-2}z}P^{\epsilon}_{2l},
\qquad\quad
PR^{\epsilon,-\epsilon}(z)=
\frac{-iq^{1/2}}{1-z}
\sum_{l=0}^\infty\prod_{j=1}^l\frac{z-q^{4j}}{1-q^{4j}z}P^{\epsilon}_{2l+1},
\]
where $P^{\epsilon}_{2l}$ is the projector on 
$V^\epsilon_{2l}$, and $P^\epsilon_{2l+1}$ is the 
$U_q(C_n)$ linear map sending 
$v^\epsilon_{2l+1}$ to $v^{-\epsilon}_{2l+1}$ and other singular 
vectors to $0$.
\end{proposition}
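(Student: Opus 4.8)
The plan is to follow the same strategy as in Propositions \ref{prop:specD2} and \ref{prop:specA2}, using the characterization (\ref{eqrc}) of the quantum $R$ matrix together with the lowering-operator formulas in the preceding lemma. Since $PR(z)$ commutes with the classical part $U_q(C_n)$, i.e.\ satisfies (\ref{eqrc}) for $1\le r\le n$, it preserves singular vectors and their weights. On the diagonal blocks $V^\epsilon\ot V^\epsilon$ the only singular vectors have even index, so $PR^{\epsilon,\epsilon}(z)$ acts on the irreducible component $V^\epsilon_{2l}$ as a scalar $\rho^\epsilon_{2l}(z)$; on the off-diagonal blocks $V^\epsilon\ot V^{-\epsilon}$ only the odd-index singular vectors $v^\epsilon_{2l+1}$ occur, and since $PR^{\epsilon,-\epsilon}(z)$ swaps the two tensor factors it must send $v^\epsilon_{2l+1}$ to a multiple $\sigma^\epsilon_{2l+1}(z)$ of the unique singular vector $v^{-\epsilon}_{2l+1}$ of the same weight in $V^{-\epsilon}\ot V^\epsilon$. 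This is exactly the map $P^\epsilon_{2l+1}$, so it remains to determine the scalars $\rho^\epsilon_{2l}(z)$ and $\sigma^\epsilon_{2l+1}(z)$.

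First I would derive the generic recursions. Setting $g=f_0(f_1\cdots f_{n-1})^2$ in (\ref{eqrc}) and applying both sides to $v^\epsilon_l$, I use that the generators $f_j,k_j$ with $1\le j\le n-1$ are independent of the spectral parameter, so the relevant lemma formula and its $x\leftrightarrow y$ counterpart differ only in the factor $(q^{2l-2}x^{-1}-y^{-1})$ versus $(q^{2l-2}y^{-1}-x^{-1})$. Equating the coefficients of $v^\epsilon_{l-2}$ and writing $z=x/y$ yields
\[
\frac{\rho^\epsilon_{2l}(z)}{\rho^\epsilon_{2l-2}(z)}=\frac{z-q^{4l-2}}{1-q^{4l-2}z},
\qquad
\frac{\sigma^\epsilon_{2l+1}(z)}{\sigma^\epsilon_{2l-1}(z)}=\frac{z-q^{4l}}{1-q^{4l}z},
\]
valid for every step except the excluded case $(\epsilon,l)=(-,2)$ of the lemma.

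The base cases come from the normalization (\ref{Rpmnor}). Since $v^+_0=\ket{{\bf 0}}\ot\ket{{\bf 0}}$ one has $\rho^+_0(z)=1$, and the generic recursion then produces all $\rho^+_{2l}$. For the off-diagonal scalars I identify $\ket{{\bf 0}}\ot\ket{{\bf e}_1}$ and $\ket{{\bf e}_1}\ot\ket{{\bf 0}}$ as the images of $v^+_1=\ket{{\bf 0}}\ot\ket{{\bf e}_n}$ and $v^-_1=\ket{{\bf e}_n}\ot\ket{{\bf 0}}$ under $\Delta(f_1\cdots f_{n-1})$, so that (\ref{Rpmnor}) forces $\sigma^\epsilon_1(z)=-iq^{1/2}/(1-z)$, and the recursion gives the stated product. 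The one genuinely delicate point is the $\epsilon=-$ diagonal block at $l=0$, where the generic formula fails. Here I would instead read off $\rho^-_2(z)=(z-q^2)/(1-q^2z)$ from (\ref{Rpmnor}), after checking that $\ket{{\bf e}_1}\ot\ket{{\bf e}_1}$ lies in $V^-_2$ rather than $V^-_0$ (it is reached from $v^-_2\propto\ket{{\bf e}_n}\ot\ket{{\bf e}_n}$ by lowering operators), and then use the special lemma formula for $\Delta(f_0f_1^2\cdots f_{n-2}^2f_{n-1}f_nf_{n-1}^2)v^-_2$ in (\ref{eqrc}).

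Applying that formula and its $x\leftrightarrow y$ image to $v^-_2$, the coefficient of the non-singular vector $f_{n-1}v^-_2$ matches automatically — a consistency check that uses $PR^{-,-}f_{n-1}v^-_2=\rho^-_2\,f_{n-1}v^-_2$ — while the coefficient of $v^-_0$ gives $\rho^-_0(z)/\rho^-_2(z)=(1-q^2z)/(z-q^2)$, hence $\rho^-_0(z)=1$, i.e.\ the empty-product value at $l=0$. I expect this last step to be the main obstacle: one must handle the mixed output containing both $v^-_0$ and $f_{n-1}v^-_2$ and verify that the two scalars extracted are compatible. Finally, the complete reducibility noted at the start of Section \ref{ss:sd} guarantees that the $V^\epsilon_{2l}$ together with the pairs $(V^\epsilon_{2l+1},V^{-\epsilon}_{2l+1})$ exhaust $V^{\ot 2}$, so the computed eigenvalues and intertwining scalars assemble into the claimed decomposition of $PR(z)$.
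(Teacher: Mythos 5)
Your proposal is correct and follows essentially the same route as the paper's own (very terse) proof: posit the block decomposition with unknown scalars, derive the recursions $\rho^\epsilon_{2l}/\rho^\epsilon_{2l-2}$ and $\sigma^\epsilon_{2l+1}/\sigma^\epsilon_{2l-1}$ by feeding the lemma's formula for $\Delta(f_0(f_1\cdots f_{n-1})^2)v^\epsilon_l$ into (\ref{eqrc}), fix the base values from (\ref{Rpmnor}) via the observation that the four normalization vectors lie in the components of $v^+_0,v^+_1,v^-_1,v^-_2$, and treat the exceptional $(\epsilon,l)=(-,2)$ step with the special formula for $\Delta(f_0f_1^2\cdots f_{n-2}^2f_{n-1}f_nf_{n-1}^2)v^-_2$. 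Your write-up merely makes explicit the details the paper compresses into ``the necessary data to derive the recursion relations are given in the lemma,'' so no substantive difference or gap remains.
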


\begin{proof}
Set
\[
PR^{\epsilon,\epsilon}(z)=\sum_{l=0}^\infty\rho^\epsilon_{2l}(z)P^{\epsilon}_{2l},
\qquad\quad
PR^{\epsilon,-\epsilon}(z)=\sum_{l=0}^\infty\rho^\epsilon_{2l+1}(z)P^{\epsilon}_{2l+1}
\]
with $\rho^\epsilon_0(z)=1$ and 
$\rho^\epsilon_1(z)=\frac{-iq^{1/2}}{1-z}$. 
The necessary data to derive the recursion relations of 
$\rho^\epsilon_l(z)$ are given in the lemma.
The four vectors 
$|{\bf 0}\rangle \otimes |{\bf 0}\rangle,
|{\bf 0}\rangle \otimes |{\bf e}_1\rangle,
|{\bf e}_1\rangle \otimes |{\bf 0}\rangle$ and 
$|{\bf e}_1\rangle \otimes |{\bf e}_1\rangle$
in (\ref{Rpmnor}) are contained in the  
irreducible components generated from 
$v_0^+, v_1^+, v_1^-$ and $v_2^-$, respectively.
Thus the condition (\ref{Rpmnor}) agrees with the 
above normalization of the eigenvalues.
\end{proof}

Finally we prove

\begin{proposition} \label{prop:irred}
As a $U_q(D_{n+1}^{(2)})$ or $U_q(A_{2n}^{(2)})$ module $V_x\ot V_y$ 
is irreducible. As a $U_q(C_n^{(1)})$ module each $V_x^{\epsilon_1}\ot
V_y^{\epsilon_2}$ $(\epsilon_1,\epsilon_2=\pm)$ is irreducible.
\end{proposition}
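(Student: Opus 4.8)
The plan is to derive irreducibility of the full affine module from the classical branching already obtained in Sections \ref{ss:sv}--\ref{ss:sd}, by showing that the Chevalley generators attached to the node $0$ connect all of the classical isotypic components. Throughout I fix a nonzero $U_q$-submodule $W\subseteq V_x\otimes V_y$ (resp. $W\subseteq V^{\epsilon_1}_x\otimes V^{\epsilon_2}_y$) and aim to prove that $W$ is the whole space.

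First I would record the classical picture and note that it is available independently of the present proposition: the complete reducibility stated at the beginning of Section \ref{ss:sd} rests only on the fact that $PR(z)$ (which exists by the universal $R$ matrix and intertwines the classical subalgebra) has distinct eigenvalues on the various components, not on any uniqueness of $R$. Restricting to $U_q(B_n)$ (for $D^{(2)}_{n+1}$) or $U_q(C_n)$ (for $A^{(2)}_{2n}$ and $C^{(1)}_n$), this together with the classification of singular vectors (Proposition \ref{prop:Bsing} and its $U_q(C_n)$ counterpart) gives a decomposition into pairwise non-isomorphic irreducibles: the components $V_l$ generated by $v_l$ (resp. $V^\epsilon_l$ generated by $v^\epsilon_l$, together with the exceptional $V^-_0$ generated by $v^-_0$) carry the distinct classical highest weights $-l{\bf e}_n$ (resp. $-{\bf e}_{n-1}-{\bf e}_n$). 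Because the summands are pairwise non-isomorphic, every classical submodule, and in particular every $U_q$-submodule, is the direct sum of a subset of them. Hence $W$ is a sum of some of the $V_l$ (resp. $V^{(\epsilon)}_l$), and since $W\ne0$ it contains at least one singular vector.

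Next I would propagate membership along the chain of singular vectors using the node-$0$ data. For $U_q(D^{(2)}_{n+1})$: starting from $v_l\in W$, formula (\ref{fv}) shows that $\Delta(f_0f_1\cdots f_{n-1})$ sends $v_l$ to a nonzero multiple of $v_{l-1}$ for generic $z=x/y$, since the scalar $-\kappa[l]q^{-1/2}(q^lx^{-1}+y^{-1})$ vanishes only when $\kappa=0$, $[l]=0$, or $z=-q^l$. Descending, $W\supseteq V_0\oplus\cdots\oplus V_l$. Then (\ref{ev}) and (\ref{ev0}) show that $\Delta(e_{n-1}\cdots e_1e_0)$ carries $v_l$ to $(q^{l+1}x+y)v_{l+1}$ plus a term $f_n^2v_{l-1}\in V_{l-1}\subseteq W$; as $q^{l+1}x+y\ne0$ generically we obtain $v_{l+1}\in W$, and by induction $v_l\in W$ for all $l\ge0$. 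Therefore $W=\bigoplus_{l\ge0}V_l=V_x\otimes V_y$.

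Finally, the $A^{(2)}_{2n}$ and $C^{(1)}_n$ cases run on the same template, with (\ref{ev})--(\ref{fv}) replaced by the corresponding lemmas of Section \ref{ss:sd}; here the single obstacle, and the only genuinely new point, is the bookkeeping forced by the parity splitting and the exceptional low-weight singular vector $v^-_0$. For $C^{(1)}_n$ the node-$0$ word $(e_{n-1}\cdots e_1)^2e_0$ and its $f$-counterpart shift $l$ by $2$, hence connect $v^\epsilon_l$ with $v^\epsilon_{l\pm2}$ inside a fixed sector $V^{\epsilon_1}_x\otimes V^{\epsilon_2}_y$, while the two special formulas for $\Delta(\cdots)v^-_0$ and $\Delta(\cdots)v^-_2$ glue $v^-_0$ to $v^-_2$. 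For $A^{(2)}_{2n}$ the $e_0$-word mixes $v^+_l$ and $v^-_l$, so one tracks the combinations $v^+_l\pm v^-_l$ and uses the auxiliary identities carrying $v^+_0,v^-_0$ to $v^\pm_1,v^\pm_2$ in order to reach every component. In each case one checks, exactly as above, that the connecting scalars are nonzero for generic $z$, and concludes that $W$ exhausts the sector. The heart of the argument is thus this connectivity verification; everything else reduces to the already-established classical decomposition.
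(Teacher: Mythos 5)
Your treatment of $U_q(D^{(2)}_{n+1})$ is correct and is essentially the paper's own proof (the paper explicitly proves only this case): produce a singular vector inside the submodule $W$, descend to $v_0$ with (\ref{fv}), then climb back up with (\ref{ev0}) and (\ref{ev}). Your device for extracting $v_{l+1}$ --- the unwanted term $f_n^2v_{l-1}$ already lies in $V_{l-1}\subseteq W$ because $f_n$ is a classical generator --- is a clean substitute for the paper's appeal to the distinct $PR(z)$-eigenvalues of Proposition \ref{prop:specD2}. Your $C^{(1)}_n$ sketch is also viable, because within a fixed sector $V^{\epsilon_1}_x\otimes V^{\epsilon_2}_y$ the singular vectors occurring there ($v^{\epsilon_1}_l$ with $l$ of one fixed parity, plus $v^-_0$ in the $(-,-)$ sector) really do have pairwise distinct classical weights, so your multiplicity-free argument applies.

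The genuine gap is in the $A^{(2)}_{2n}$ case, and it sits exactly at the step your whole scheme rests on: the assertion that the classical components carry distinct highest weights, hence are pairwise non-isomorphic, hence every $U_q$-submodule is a direct sum of a subset of them and in particular contains one of the listed singular vectors. For $A^{(2)}_{2n}$ the module is the full ${\hat F}^{\otimes n}\otimes{\hat F}^{\otimes n}$, which contains both $v^+_l$ and $v^-_l$ for every $l\ge1$; these are supported on the same vectors $\ket{k{\bf e}_n}\ot\ket{(l-k){\bf e}_n}$ (with $k$ even, resp.\ odd), so every $\Delta(k_j)$ takes the same value on both ($q^{l}$ for $j=n-1$, $q^{-2l-2}$ for $j=n$, $1$ otherwise). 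Thus $V^+_l\cong V^-_l$ as $U_q(C_n)$ modules, the isotypic components have multiplicity two, and a nonzero $U_q$-submodule is only guaranteed to contain some unknown combination $av^+_l+bv^-_l$, not any $v^{\epsilon}_l$ itself. Your later remark about "tracking $v^+_l\pm v^-_l$" does not repair this, because it presupposes that a vector of known form already lies in $W$. Closing the gap requires a tool that separates components of equal classical weight: this is precisely what the paper's spectral data provides, since $PR(z)$ takes distinct eigenvalues $\prod_{j\le l}(z\mp iq^{j-1/2})/(1\mp iq^{j-1/2}z)$ on the components generated by $v^+_l\pm v^-_l$ (Proposition \ref{prop:specA2}); alternatively one must manipulate arbitrary combinations directly, e.g.\ descend to low weight and use the pair of identities $(\pi_x\ot\pi_y)\Delta(e_{n-1}\cdots e_1e_0)v^+_0=yv^+_1-iq^{1/2}xv^-_1$ and $(\pi_x\ot\pi_y)\Delta(e_{n-1}e_ne_{n-1}\cdots e_0)v^-_0=\frac{1}{[2]}(-iq^{1/2}xv^+_1+yv^-_1)$, whose coefficient matrix has determinant proportional to $y^2+qx^2\ne0$, to pry $v^+_1$ and $v^-_1$ apart. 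Neither ingredient is present in your proposal, so as written it does not prove the proposition for $U_q(A^{(2)}_{2n})$.
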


\begin{proof}
We prove the $U_q(D_{n+1}^{(2)})$ case only. Suppose a submodule contains a
nonzero weight vector. One can assume it is a singular vector. Hence it is $v_l$
for some $l\in\Z_{\ge0}$. By \eqref{fv} the submodule contains $v_0$. Then
by \eqref{ev0} it contains a linear combination of $v_1$ and $f_nv_0$. However,
since eigenvalues of $PR(z)$ for $v_1$ and $f_nv_0$ are different by Proposition
\ref{prop:specD2}, the submodule contains $v_1$. Arguing similarly using 
\eqref{ev}, it contains $v_l$ for any $l\ge\Z_{\ge0}$, and the submodule is nothing
but $V_x\ot V_y$.
\end{proof}

\section{$S^{s,t}(z)$ as quantum $R$ matrix}\label{sec:main}
\subsection{Main theorem}
Define the operator $K$ acting on ${\hat F}^{\otimes n}$ by 
\begin{align*}
K|{\bf m}\rangle = (-iq^{\frac{1}{2}})^{m_1+\cdots + m_n}|{\bf m}\rangle.
\end{align*}
See (\ref{mv}) for the notation.
Introduce the gauge transformed quantum $R$ matrix by
\begin{align}\label{gtr}
{\tilde R}(z) = (K^{-1}\otimes 1) R(z)(1\otimes K).
\end{align}
It is easy to see that ${\tilde R}(z)$ 
also satisfies the Yang-Baxter equation (\ref{yber}).

In Section \ref{ss:S} we have constructed the solutions 
$S^{s,t}(z)$ of 
the Yang-Baxter equation from the 3d $R$  in 
(\ref{sact}), (\ref{sabij}) and (\ref{rst}).
In Section \ref{sec:R}
the quantum $R$ matrices for $q$-oscillator representations
of $U_q(D^{(2)}_{n+1})$, 
$U_q(A^{(2)}_{2n})$ and 
$U_q(C^{(1)}_{n})$ 
have been obtained. 
The next theorem, which is the main result of the paper, 
states the precise relation between them.
\begin{theorem}\label{th:main}
Denote by ${\tilde R}_{\mathfrak g}(z)$ the gauge transformed
quantum $R$ matrix (\ref{gtr}) for 
$U_q({\mathfrak g})$.
Then the following equalities hold:
\begin{align*}
S^{1,1}(z) &= {\tilde R}_{D^{(2)}_{n+1}}(z),\\
S^{1,2}(z) &= {\tilde R}_{A^{(2)}_{2n}}(z),\\
S^{2,2}(z) &= {\tilde R}_{C^{(1)}_{n}}(z),
\end{align*}
where the last one means
$S^{\epsilon_1,\epsilon_2}(z) = \tilde{R}^{\epsilon_1,\epsilon_2}(z)$
between (\ref{22pm}) and (\ref{Rdeco}) with the 
gauge transformation (\ref{gtr}).
\end{theorem}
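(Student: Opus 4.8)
The plan is to prove the identity through the intertwining characterization of the quantum $R$ matrix, never touching the explicit formula (\ref{sabij}). By Proposition \ref{prop:irred} the relevant tensor product module is irreducible, so the linear system (\ref{kR})--(\ref{fR}) for $0\le r\le n$ determines $R(z)$ uniquely up to a scalar, and that scalar is fixed by the normalization (\ref{rnor}). Consequently it suffices to show that the 3d-derived operator $S^{s,t}(z)$, read through the gauge transformation (\ref{gtr}), satisfies exactly these conditions with the correct normalization. Concretely I would substitute $R(z)=(K\ot 1){\tilde R}(z)(1\ot K^{-1})$ into (\ref{kR})--(\ref{fR}) and conjugate the generators by the diagonal $K$. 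Since $K$ scales $\ket{\bf m}$ by $(-iq^{1/2})^{|{\bf m}|}$, this conjugation multiplies each $e_r,f_r$ by a fixed power of $-iq^{1/2}$ dictated by how the generator changes $|{\bf m}|$; it is engineered precisely to cancel the factors $i$ and $q^{1/2}$ carried by $k_0,k_n$ in Propositions \ref{pr:repD}--\ref{pr:repC}, leaving a cleaned-up system whose coefficients match the combinatorics of $\Rm$.

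For the bulk generators $e_j,f_j,k_j$ with $1\le j\le n-1$, which merely transport an oscillator between adjacent sites $j$ and $j+1$ and act trivially on the auxiliary space $3$ and on the boundary vectors, the intertwining relations reduce to local identities for the two consecutive factors $\Rm_{\alpha_j,\beta_j,3}\Rm_{\alpha_{j+1},\beta_{j+1},3}$ in (\ref{sdef}). These follow from the elementary commutation rules between $\Rm$ and the $q$-boson creation/annihilation operators collected in Appendix \ref{app:3dR}, together with the conservation law (\ref{cl0}); I expect this part to be routine.

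The substantive content lies in the boundary generators $r=0$ and $r=n$. For $r=0$ the generator acts on the first site, and its intertwining relation must be produced by commuting the raising/lowering operator on $\alpha_1,\beta_1$ past the head $\Rm_{\alpha_1,\beta_1,3}$ of the product, absorbing the discrepancy into the auxiliary space $3$, where it is finally resolved by the covariance (\ref{XR}) of $\langle\chi_s(z)|$; symmetrically $r=n$ is handled at the tail using (\ref{RX}) and $|\chi_t(1)\rangle$. This is exactly the mechanism by which the $A_q$ intertwining relations of $\Rm$ are converted into $U_q$ commutativity, and it is where the spectral parameter $z$ enters, through $\langle\chi_s(z)|$. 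The main obstacle is to establish the precise local lemmas of Appendix \ref{app:3dR} --- expressing how $\Rm$ exchanges a single (for index $1$) or double (for index $2$) oscillator excitation on one leg for the correct combination on the remaining legs --- and to check that, after pairing with the explicit boundary vectors (\ref{xk})--(\ref{xb}), the resulting coefficients reproduce the exact weights $[m_1]$, the factor $\kappa=(q+1)/(q-1)$, and the $\pm2{\bf e}_1$, $\pm2{\bf e}_n$ shifts of Propositions \ref{pr:repD}--\ref{pr:repC}. The single-versus-double distinction at the two ends is governed independently by $s$ (left boundary, $r=0$) and $t$ (right boundary, $r=n$): $(s,t)=(1,1)$ yields the single/single ends of $D^{(2)}_{n+1}$, $(1,2)$ the single/double ends of $A^{(2)}_{2n}$, and $(2,2)$ the double/double ends of $C^{(1)}_n$.

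Finally I would fix the scalar. Because $K$ fixes $\ket{\bf 0}$, the normalization (\ref{rnor}) of ${\tilde R}(z)$ coincides with $S^{s,t}(z)(\ket{\bf 0}\ot\ket{\bf 0})=\ket{\bf 0}\ot\ket{\bf 0}$, already verified in (\ref{s0}); for the four-sector $C^{(1)}_n$ case the normalizations (\ref{Rpmnor}) match the matrix elements computed in (\ref{spm}). Combining the intertwining relations, which force $S^{s,t}(z)$ to be a scalar multiple of ${\tilde R}(z)$ by the uniqueness coming from Proposition \ref{prop:irred}, with this normalization gives the three stated equalities; the explicit choice of $\varrho^{s,t}(z)$ in (\ref{rst}) is precisely what makes the proportionality constant equal to $1$.
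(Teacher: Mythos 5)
Your proposal follows essentially the same route as the paper's own proof: uniqueness of the intertwiner from irreducibility (Proposition \ref{prop:irred}) together with the matching normalizations ((\ref{s0}), (\ref{spm}) versus (\ref{rnor}), (\ref{Rpmnor}) through the gauge $K$), and then verification that $S^{s,t}(z)$ satisfies the gauge-transformed intertwining relations, reduced to local identities among matrix elements of $\Rm$ supplied by Appendix \ref{app:3dR}. Two small corrections of emphasis: the bulk case $1\le r\le n-1$ is not the routine part — it is precisely the case requiring the quadratic two-$\Rm$ relation (\ref{app:air}) of Lemma \ref{le:air} — and the boundary cases $r=0,n$ are settled not by invoking the covariance (\ref{RX})--(\ref{XR}) but by shifting the summation variable $c_0$ (or $c_n$) in the explicit sum (\ref{sabij}) so that the boundary-vector coefficients yield the linear identities (\ref{app:Li})--(\ref{app:Ymi}).
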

For $S^{2,1}(z)$, see (\ref{s21}).

\begin{remark}\label{re:dyn}
Theorem \ref{th:main} suggests
the following correspondence between the boundary vectors 
(\ref{xk}) and (\ref{xb}) with the end shape of the Dynkin diagrams:

\begin{picture}(200,70)(-42,3)

\put(100,50){
\put(-0.4,0){\circle{6}}
\drawline(2,-1.8)(18,-2)
\drawline(2,1.8)(18,2)
\drawline(7,0)(13,-6)
\drawline(7,0)(13,6)
\put(-3,10){\small $0$}
\put(-50,-2){$\langle {\chi}_1(z) |$}
}

\put(100,18){
\put(-0.4,0){\circle{6}}
\drawline(2,-2)(18,-1.8)
\drawline(2,2)(18,1.8)
\drawline(7,6)(13,0)
\drawline(7,-6)(13,0)
\put(-3,10){\small $0$}
\put(-50,-2){$\langle {\chi}_2(z) |$}
}

\put(90,50){
\drawline(82,-2)(98,-1.8)
\drawline(82,2)(98,1.8)
\drawline(93,0)(87,-6)
\drawline(93,0)(87,6)
\put(100.4,0){\circle{6}}
\put(120,-2){$|\chi_1(1)\rangle$}
\put(98,10){\small $n$}
}

\put(90,18){
\drawline(82,-1.8)(98,-2)
\drawline(82,1.8)(98,2)
\drawline(93,6)(87,0)
\drawline(93,-6)(87,0)
\put(100.4,0){\circle{6}}
\put(120,-2){$|\chi_2(1)\rangle$}
\put(98,10){\small $n$}
}

\end{picture}

From this viewpoint it may be natural to interpret 
$S^{2,1}(z)$, which is reducible to $S^{1,2}(z^{1/2}) $ by (\ref{s21}), 
in terms of another $U_q(A^{(2)}_{2n})$ realized as  
the affinization of the classical part $U_q(B_n)$.
(Proposition \ref{pr:repA} corresponds to 
taking the classical part to be $U_q(C_n)$.)
As far as $\langle {\chi}_1(z) |$ and $|\chi_1(1)\rangle$ are 
concerned, the above correspondence agrees 
with the observation made in 
\cite[Remark 7.2]{KS} on the similar result concerning a 
3d $L$ operator.
With regard to $\langle {\chi}_2(z) |$ and $|\chi_2(1)\rangle$,
the relevant affine Lie algebras $A^{(2)}_{2n}$ and 
$C^{(1)}_n$ in this paper are the subalgebras 
of $B^{(1)}_{n+1}$ and $D^{(1)}_{n+2}$
in \cite[Theorem 7.1]{KS}
obtained by folding their Dynkin diagrams. 
\end{remark}

\subsection{Proof}\label{ss:proof}
Let us present an expository proof of 
Theorem \ref{th:main}.
Comparing (\ref{s0}), (\ref{spm})
and (\ref{rnor}), (\ref{Rpmnor}) with 
the gauge transformation (\ref{gtr}) taken into account,
one finds that 
$S^{s,t}(z)$ and ${\tilde R}(z)$ satisfy the 
same normalization condition.
Moreover the conservation law (\ref{claw}) and the commutativity 
(\ref{kR}) are equivalent conditions on the matrices 
acting on ${\hat F}^{\otimes n}\otimes {\hat F}^{\otimes n}$.
Thus it remains to show that 
$S^{s,t}(z)$ satisfies the same equation as 
the gauge transformed version of \eqref{eR} and \eqref{fR} for ${\tilde R}(z)$:
\begin{align}
(\tilde{e}_r\otimes 1+k_r\ot e_r) S^{s,t}(z) &= 
S^{s,t}(z)(1\ot \tilde{e}_r+ e_r\otimes k_r),\label{ce}\\
(1\otimes f_r +\tilde{f}_r \otimes k^{-1}_r) S^{s,t}(z) &= 
S^{s,t}(z)(f_r\otimes 1 + k^{-1}_r\otimes \tilde{f}_r)\label{cf}
\end{align}
$0 \le r \le n$. Here 
$\tilde{e}_r = K^{-1}e_r K,\tilde{f}_r = K^{-1}f_r K$ are the gauge transformed 
Chevalley generators. 
We first treat \eqref{cf}.
The actions of $k^{-1}_r, f_r$ and $\tilde{f}_r$
are to be taken from Proposition \ref{pr:repD},
\ref{pr:repA} and \ref{pr:repC} according to 
$(s,t)=(1,1), (1,2)$ and $(2,2)$, respectively.

Consider the actions of the both sides of (\ref{cf}) on 
a base vector $|{\bf i}\rangle \otimes |{\bf j}\rangle \in 
V_x\otimes V_y$:
\begin{align}
\varrho^{s,t}(z)^{-1}y^{\delta_{r,0}}
(1\otimes f_r +\tilde{f}_r \otimes k^{-1}_r) S^{s,t}(z)
(|{\bf i}\rangle \otimes |{\bf j}\rangle) 
&= 
\sum_{{\bf a}, {\bf b}}A^{\bf{a,b}}_{\bf{i,j}}(z)
|{\bf a}\rangle \otimes |{\bf b}\rangle,\label{Ah}\\
\varrho^{s,t}(z)^{-1}y^{\delta_{r,0}}
S^{s,t}(z)(f_r\otimes 1 + k^{-1}_r\otimes \tilde{f}_r)
(|{\bf i}\rangle \otimes |{\bf j}\rangle) 
&= 
\sum_{{\bf a}, {\bf b}}B^{\bf{a,b}}_{\bf{i,j}}(z)
|{\bf a}\rangle \otimes |{\bf b}\rangle,\label{Bh}
\end{align}
where we have removed the normalization factor 
$\varrho^{s,t}(z)$ (see (\ref{sabij})) 
for simplicity and 
multiplied $y^{\delta_{r,0}}$ to confine the dependence on 
$x$ and $y$ to the ratio $z=x/y$.
We are to show the equality of the matrix elements
$A^{\bf{a,b}}_{\bf{i,j}}(z)=B^{\bf{a,b}}_{\bf{i,j}}(z)$.

For illustration let us consider the case $(s,t)=(1,1)$ and $r=n$.
Then $f_n, \tilde{f}_n$ and $k_n$ are given by Proposition \ref{pr:repD} 
and they only 
touch the $n$ th component in 
$|{\bf i}\rangle$ and $|{\bf j}\rangle$.
The transition of these components in (\ref{Ah}) is traced as follows.
\[
\begin{picture}(120,90)(-45,-80)

\put(0,0){$|i_n\rangle\otimes |j_n\rangle$}

\put(-75,-12){$\Rm^{a_n,b_n-1,c_{n-1}}_{i_n,j_n,c_n}$}
\put(-8,-3){\vector(-1,-1){25}}
\put(-19,-22){$S^{1,1}(z)$}

\put(51,-3){\vector(1,-1){25}}
\put(30,-22){$S^{1,1}(z)$}
\put(63,-12){$\Rm^{a_n-1,b_n,c_{n-1}}_{i_n,j_n,c_n}$}

\put(-65,-40){$|a_n\rangle\otimes |b_n\!-\!1\rangle$}
\put(55,-40){$|a_n\!-\!1\rangle\otimes |b_n\rangle$}

\put(-33,-47){\vector(1,-1){25}}
\put(-19,-57){$1\!\otimes\! f_n$}
\put(-28,-67){$1$}

\put(76,-47){\vector(-1,-1){25}}
\put(28,-57){$\tilde{f}_n\!\otimes\! k^{-1}_n$}
\put(68,-67){$q^{b_n}$}

\put(0,-77){$|a_n\rangle\otimes |b_n\rangle$}
\end{picture}
\]
By this diagram we mean that the substitution of (\ref{sabij}) into 
(\ref{Ah}) yields
\begin{align}
A^{\bf{a,b}}_{\bf{i,j}}(z)
&=\sum_{c_0,\ldots, c_n}
\frac{z^{c_0}}{(q)_{c_n}}X(c_0,\ldots, c_{n-1})
\left(\Rm^{a_n,b_n-1,c_{n-1}}_{i_n,j_n,c_n}
+q^{b_n}\Rm^{a_n-1,b_n,c_{n-1}}_{i_n,j_n,c_n}\right)\nonumber\\
&=\sum_{c_0,\ldots, c_n}
\frac{z^{c_0}}{(q)_{c_n}}X(c_0,\ldots, c_{n-1})
\left((1-q^{c_n})\Rm^{a_n,b_n-1,c_{n-1}}_{i_n,j_n,c_n-1}
+q^{b_n}
\Rm^{a_n-1,b_n,c_{n-1}}_{i_n,j_n,c_n}\right)\label{A2}
\end{align}
for some $X$ which is independent of $z$.
To get the second line
we have just changed 
the dummy summation variable $c_n$ in the first term into $c_n-1$.
This has the effect of letting the two terms have 
the identical constraint  
$b_n+c_{n-1}=j_n+c_n$. See (\ref{cl0}). 
Similarly the diagram for the 
matrix element $B^{\bf{a,b}}_{\bf{i,j}}(z)$ 
(\ref{Bh}) with $r=n$ looks as
\[
\begin{picture}(120,90)(-35,-80)

\put(0,0){$|i_n\rangle\otimes |j_n\rangle$}

\put(-28,-12){$1$}
\put(-8,-3){\vector(-1,-1){25}}
\put(-19,-22){$f_n\!\otimes\! 1$}

\put(51,-3){\vector(1,-1){25}}
\put(27,-22){$k^{-1}_n\!\otimes\!\tilde{f}_n$}
\put(68,-12){$q^{i_n}$}

\put(-65,-40){$|i_n\!+\!1\rangle\otimes |j_n\rangle$}
\put(55,-40){$|i_n\rangle\otimes |j_n\!+\!1\rangle$}

\put(-33,-47){\vector(1,-1){25}}
\put(-19,-57){$S^{1,1}(z)$}
\put(-75,-67){$\Rm^{a_n,b_n,c_{n-1}}_{i_n\!+\!1,j_n,c_n}$}

\put(76,-47){\vector(-1,-1){25}}
\put(30,-57){$S^{1,1}(z)$}
\put(64,-67){$\Rm^{a_n,b_n,c_{n-1}}_{i_n,j_n\!+\!1,c_n}$}

\put(0,-77){$|a_n\rangle\otimes |b_n\rangle$}
\end{picture}
\]
Thus we get 
\begin{align}
B^{\bf{a,b}}_{\bf{i,j}}(z)
&=\sum_{c_0,\ldots, c_n}
\frac{z^{c_0}}{(q)_{c_n}}X(c_0,\ldots, c_{n-1})
\left(\Rm^{a_n,b_n,c_{n-1}}_{i_n\!+\!1,j_n,c_n}
+q^{i_n}\Rm^{a_n,b_n,c_{n-1}}_{i_n,j_n\!+\!1,c_n}\right)\nonumber\\
&=\sum_{c_0,\ldots, c_n}
\frac{z^{c_0}}{(q)_{c_n}}X(c_0,\ldots, c_{n-1})
\left(\Rm^{a_n,b_n,c_{n-1}}_{i_n\!+\!1,j_n,c_n}
+q^{i_n}(1-q^{c_n})
\Rm^{a_n,b_n,c_{n-1}}_{i_n,j_n\!+\!1,c_n-1}\right),\label{B2}
\end{align}
where $X$ is exactly the same function as the one in (\ref{A2}).
In (\ref{A2}) and (\ref{B2}), the $z$-dependence is solely by 
$z^{c_0}$ hence the two $c_0$'s must be identified.
Then from (\ref{sabij}) and $\Rm^{a,b,c}_{i,j,k} \propto \delta^{b+c}_{j+k}$ 
it follows that all the $c_i$'s appearing in (\ref{A2}) and (\ref{B2}) 
are identical.
Therefore the proof of 
$A^{\bf{a,b}}_{\bf{i,j}}(z)= B^{\bf{a,b}}_{\bf{i,j}}(z)$
is reduced to 
\begin{align*}
(1-q^{c_n})\Rm^{a_n,b_n-1,c_{n-1}}_{i_n,j_n,c_n-1}
+q^{b_n}
\Rm^{a_n-1,b_n,c_{n-1}}_{i_n,j_n,c_n}
-\Rm^{a_n,b_n,c_{n-1}}_{i_n\!+\!1,j_n,c_n}
-q^{i_n}(1-q^{c_n})
\Rm^{a_n,b_n,c_{n-1}}_{i_n,j_n\!+\!1,c_n-1}=0.
\end{align*}
But this is just (\ref{app:Li}), which completes the proof 
of (\ref{cf}) for $(s,t)=(1,1)$ and $r=n$.

The essential feature in the above proof is that 
(\ref{cf}) is reduced,  upon substitution of (\ref{sact}), 
to a {\em local} relation in the sequence of $\Rm$'s  
in (\ref{sabij}) with {\em no} sum over $c_0,\ldots, c_n$. 
Another useful fact is that the
actions of $f_r$'s are identical 
in Proposition \ref{pr:repD}, \ref{pr:repA} and \ref{pr:repC} if the vicinity of the  
vertex $r$ of the corresponding Dynkin diagrams has the same shape.
From these considerations, one can attribute the full proof of (\ref{cf}) 
to the following cases.

\begin{enumerate}

\item $r=n$ with $(s,t)=(1,1)$.  
This concerns $D^{(2)}_{n+1}$.
We have just finished the proof.

\item $r=0$ with $(s,t)=(1,1), (1,2)$.
This covers $D^{(2)}_{n+1}$ and $A^{(2)}_{2n}$.

\item $r=n$ with $(s,t)=(1,2), (2,2)$.
This covers $A^{(2)}_{2n}$ and $C^{(1)}_{n}$.

\item $r=0$ with $(s,t)=(2,2)$.
This concerns $C^{(1)}_{n}$.

\item $1 \le r \le n-1$ for any $(s,t)$.
This covers 
$D^{(2)}_{n+1}$, $A^{(2)}_{2n}$ and $C^{(1)}_{n}$.

\end{enumerate}

In what follows, we present the expressions like
(\ref{A2}) and  (\ref{B2})  for each case 
and show how they are identified 
by using the formulas in Appendix \ref{app:3dR}.
In the cases (ii) and (iv),
one needs to cope with the spectral parameter $z=x/y$  
by shifting $c_0$ appropriately.
The case (v) is peculiar in that it requires a proof of a 
{\em quadratic} relation of $\Rm$.

(ii) The $A^{\bf{a,b}}_{\bf{i,j}}(z)$ and $B^{\bf{a,b}}_{\bf{i,j}}(z)$ 
relevant to $(s,t)=(1,1), (1,2)$ are expressed as
\begin{align*}
A^{\bf{a,b}}_{\bf{i,j}}(z)
&=\sum_{c_0,\ldots, c_n}\frac{z^{c_0}(q^2)_{c_0}}{(q)_{c_0}}
\left([b_1+1]\Rm^{a_1,b_1+1,c_0}_{i_1,j_1,c_1}
+q^{-b_1}(1+q^{c_0+1})[a_1+1]\Rm^{a_1+1,b_1,c_0+1}_{i_1,j_1,c_1}\right)
Y_t(c_1,\ldots, c_n),\\
B^{\bf{a,b}}_{\bf{i,j}}(z)
&=\sum_{c_0,\ldots, c_n}
\frac{z^{c_0}(q^2)_{c_0}}{(q)_{c_0}}
\left((1+q^{c_0+1})[i_1]\Rm^{a_1,b_1,c_0+1}_{i_1-1,j_1,c_1}
+q^{-i_1}[j_1]\Rm^{a_1,b_1,c_0}_{i_1,j_1-1,c_1}\right)
Y_t(c_1,\ldots, c_n)
\end{align*}
for some $Y_t$ which is independent of $z$.
These expressions are identified by (\ref{app:Ask}).

(iii) The $A^{\bf{a,b}}_{\bf{i,j}}(z)$ and $B^{\bf{a,b}}_{\bf{i,j}}(z)$ 
relevant to $(s,t)=(1,2), (2,2)$ are expressed as
\begin{align*}
A^{\bf{a,b}}_{\bf{i,j}}(z)
&=\sum_{c_0,\ldots, c_n}
\frac{z^{c_0}}{(q^4)_{c_n}}Z_s(c_0,\ldots, c_{n-1})
\left((1-q^{4c_n})\Rm^{a_n,b_n-2,c_{n-1}}_{i_n,j_n,2c_n-2}
+q^{2b_n}\Rm^{a_n-2,b_n,c_{n-1}}_{i_n,j_n,2c_n}\right),\\
B^{\bf{a,b}}_{\bf{i,j}}(z)
&=\sum_{c_0,\ldots, c_n}
\frac{z^{c_0}}{(q^4)_{c_n}}Z_s(c_0,\ldots, c_{n-1})
\left(\Rm^{a_n,b_n,c_{n-1}}_{i_n\!+\!2,j_n,2c_n}
+q^{2i_n}(1-q^{4c_n})\Rm^{a_n,b_n,c_{n-1}}_{i_n,j_n\!+\!2,2c_n-2}\right)
\end{align*}
for some $Z_s$ which is independent of $z$.
These expressions are identified by (\ref{app:Ngm}).

(iv) The $A^{\bf{a,b}}_{\bf{i,j}}(z)$ and $B^{\bf{a,b}}_{\bf{i,j}}(z)$ 
relevant to $(s,t)=(2,2)$ are expressed as
\begin{align*}
A^{\bf{a,b}}_{\bf{i,j}}(z)
&=\sum_{c_0,\ldots, c_n}
\frac{z^{c_0}(q^2)_{2c_0}}{(q^4)_{c_0}}
\Bigl([b_1+2][b_1+1]\Rm^{a_1,b_1+2,2c_0}_{i_1,j_1,c_1}\\
&\qquad \qquad +q^{-2b_1}(1-q^{4c_0+2})[a_1+2][a_1+1]
\Rm^{a_1+2,b_1,2c_0+2}_{i_1,j_1,c_1}
\Bigr)W(c_1,\ldots, c_n),\\
B^{\bf{a,b}}_{\bf{i,j}}(z)
&=\sum_{c_0,\ldots, c_n}
\frac{z^{c_0}(q^2)_{2c_0}}{(q^4)_{c_0}}
\Bigl((1-q^{4c_0+2})[i_1][i_1-1]\Rm^{a_1,b_1,2c_0+2}_{i_1-2,j_1,c_1}\\
&\qquad \qquad 
+q^{-2i_1}[j_1][j_1-1]\Rm^{a_1,b_1,2c_0}_{i_1,j_1-2,c_1}
\Bigr)W(c_1,\ldots, c_n)
\end{align*}
for some $W$ which is independent of $z$.
These expressions are identified by (\ref{app:Ymi}).

(v) The $f_r$ and $k_r$ with $1 \le r \le n-1$ 
concern the $r$ th and the $(r\!+\!1)$ th components of $F^{\otimes n}$ only.
The diagram for (\ref{Ah})  tracing them looks as
\[
\begin{picture}(120,90)(-45,-80)

\put(-23,4){$|i_r, i_{r+1}\rangle\otimes |j_r, j_{r+1}\rangle$}

\put(-8,-3){\vector(-1,-1){25}}
\put(-19,-22){$S^{s,t}(z)$}

\put(51,-3){\vector(1,-1){25}}
\put(32,-22){$S^{s,t}(z)$}

\put(-125,-40){$|a_r,a_{r+1}\rangle\otimes |b_r\!-\!1,b_{r+1}\!+\!1\rangle$}
\put(60,-40){$|a_r\!-\!1,a_{r+1}\!+\!1\rangle\otimes |b_r,b_{r+1}\rangle$}

\put(-33,-47){\vector(1,-1){25}}
\put(-19,-57){$1\!\otimes\! f_r$}
\put(-62,-67){$[b_{r+1}\!+\!1]$}

\put(76,-47){\vector(-1,-1){25}}
\put(28,-57){$\tilde{f}_r\!\otimes\! k^{-1}_r$}
\put(68,-67){$q^{b_r-b_{r+1}}[a_{r+1}\!+\!1]$}

\put(-23,-85){$|a_r,a_{r+1}\rangle\otimes |b_r,b_{r+1}\rangle$}
\end{picture}
\]
Thus we have
\begin{equation}\label{A3}
\begin{split}
&A^{\bf{a,b}}_{\bf{i,j}}(z)
=\sum_{c_0,\ldots, c_n}z^{c_0}U_{s,t}(c_0, \ldots, c_{r-1}, c_{r+2}, \ldots, c_n)\\
&\quad\times\Bigl([b_{r+1}\!+\!1]\Rm^{a_r, b_r-1,c_{r-1}}_{i_r,j_r,c_r-1}
\Rm^{a_{r+1},b_{r+1}+1,c_r-1}_{i_{r+1},j_{r+1},c_{r+1}}
+q^{b_r-b_{r+1}}[a_{r+1}\!+\!1]
\Rm^{a_r-1,b_r,c_{r-1}}_{i_r,j_r,c_r}
\Rm^{a_{r+1}+1,b_{r+1},c_r}_{i_{r+1},j_{r+1},c_{r+1}}\Bigr)
\end{split}
\end{equation}
for some $U_{s,t}$ which is independent of $z$.
We have shifted $c_r$ to $c_r-1$  in the first term
by the reason similar to (\ref{A2}) and (\ref{B2}). 
Similarly the diagram for (\ref{Bh}) looks as
\[
\begin{picture}(120,90)(-45,-80)

\put(-23,4){$|i_r, i_{r+1}\rangle\otimes |j_r, j_{r+1}\rangle$}

\put(-8,-3){\vector(-1,-1){25}}
\put(-19,-22){$f_r\!\otimes\! 1$}

\put(-50,-12){$[i_{r+1}]$}

\put(51,-3){\vector(1,-1){25}}
\put(26,-22){$k^{-1}_r\!\otimes\! \tilde{f}_r$}

\put(68,-12){$q^{i_r-i_{r+1}}[j_{r+1}]$}

\put(-125,-40){$|i_r\!+\!1,i_{r+1}\!-\!1\rangle\otimes |j_r,j_{r+1}\rangle$}
\put(60,-40){$|i_r,i_{r+1}\rangle\otimes |j_r\!+\!1,j_{r+1}\!-\!1\rangle$}

\put(-33,-47){\vector(1,-1){25}}
\put(-19,-57){$S^{s,t}(z)$}

\put(76,-47){\vector(-1,-1){25}}
\put(32,-57){$S^{s,t}(z)$}

\put(-23,-85){$|a_r,a_{r+1}\rangle\otimes |b_r,b_{r+1}\rangle$}
\end{picture}
\]
This leads to the expression
\begin{equation}\label{B3}
\begin{split}
&B^{\bf{a,b}}_{\bf{i,j}}(z)
=\sum_{c_0,\ldots, c_n}z^{c_0}U_{s,t}(c_0, \ldots, c_{r-1}, c_{r+2}, \ldots, c_n)\\
&\quad\times\Bigl([i_{r+1}]\Rm^{a_r, b_r,c_{r-1}}_{i_r+1,j_r,c_r}
\Rm^{a_{r+1},b_{r+1},c_r}_{i_{r+1}-1,j_{r+1},c_{r+1}}
+q^{i_r-i_{r+1}}[j_{r+1}]
\Rm^{a_r,b_r,c_{r-1}}_{i_r,j_r+1,c_r-1}
\Rm^{a_{r+1},b_{r+1},c_r-1}_{i_{r+1},j_{r+1}-1,c_{r+1}}\Bigr)
\end{split}
\end{equation}
with the same $U_{s,t}$ as (\ref{A3}).
This time the shift of $c_r$ to $c_r-1$ has been done in the second term.
Now that all the $c_i$'s can be identified in (\ref{A3}) and (\ref{B3}),
their equality follows from (\ref{app:air}).
This completes the proof of (\ref{cf}).
The relation (\ref{ce}) can be verified similarly by using 
(\ref{rtb}), (\ref{hmk}), (\ref{hnt}) and (\ref{szk}).
\qed

\appendix
\section{Brief guide to 3d $R$}\label{app:3dR}
\subsection{Origin in quantized coordinate ring}
Let us summarize the basic facts on the 3d $R$ $\Rm$ 
that has played a central role in the paper
from the viewpoint of the quantized coordinate ring $A_q(sl_3)$
following \cite{KV,KO1}.
See also \cite{BMS,BS,KOY,S} for more aspects.
The $A_q(sl_3)$ is 
a Hopf algebra generated by $T=(t_{ij})_{1 \le i,j \le 3}$ satisfying the relations
\begin{equation*}
\begin{split}
&[t_{ik}, t_{jl}]=\begin{cases}0 &(i<j, k>l),\\
(q-q^{-1})t_{jk}t_{il} & (i<j, k<l),
\end{cases}\\
&t_{ik}t_{jk} = q t_{jk}t_{ik}\; (i<j),\quad 
t_{ki}t_{kj} = q t_{kj}t_{ki}\; (i<j).
\end{split}
\end{equation*} 
The coproduct is given by
$\Delta(t_{ij}) = \sum_k t_{ik}\otimes t_{kj}$.
Let $F = \bigoplus_{m\ge 0}\Q(q)|m\rangle$ be the Fock space as in the main text.
The $A_q(sl_3)$ has irreducible representations
$\pi_i: A_q(sl_3) \rightarrow \mathrm{End}(F)\; (i=1,2)$ as
\begin{align*}
&\pi_1(T) =
\begin{pmatrix}
\mu_1{\rm {\bf a}}^- &  \alpha_1{\rm {\bf k}} & 0 \\
-q\alpha^{-1}_1{\rm {\bf k}} & \mu^{-1}_1{\rm {\bf a}}^+ & 0\\
0 & 0 & 1
\end{pmatrix},
\quad
\pi_2(T) =
\begin{pmatrix}
1 & 0 & 0\\
0 & \mu_2{\rm {\bf a}}^- &  \alpha_2{\rm {\bf k}}\\
0 & -q\alpha^{-1}_2{\rm {\bf k}} & \mu^{-1}_2{\rm {\bf a}}^+ 
\end{pmatrix},\\
&{\rm{\bf k}}|m\rangle = q^m |m\rangle,\;
{\rm{\bf a}}^+|m\rangle = |m+1\rangle,\;
{\rm{\bf a}}^-|m\rangle = (1-q^{2m})|m-1\rangle.
\end{align*}
The parameters $\mu_i, \alpha_i$ 
are set to be 1 in the sequel as they do not influence the 
construction in a nontrivial way.
The $\pi_1$ and $\pi_2$ are called fundamental representations. 
Let $\pi_{121}$ and $\pi_{212}$ be their  
tensor product representations on $F^{\otimes 3}$ obtained by evaluating 
the coproduct $\Delta(t_{ij}) = \sum_{k,l}t_{ik}\otimes t_{kl}\otimes t_{lj}$
by $\pi_1\otimes \pi_2\otimes \pi_1$ and 
$\pi_2\otimes \pi_1\otimes \pi_2$, respectively.
It is known that they are both irreducible and equivalent.
Thus there is the unique map 
$\Phi \in \mathrm{End}(F^{\otimes 3})$ 
satisfying the intertwining relation 
$\Phi \circ \pi_{121} = \pi_{212} \circ \Phi$ up to normalization.
Let $\sigma \in \mathrm{End}(F^{\otimes 3})$ be the reversal of the 
components 
$\sigma(u\otimes v \otimes w) = w\otimes v \otimes u$.
Then the 3d $R$ is identified as $\Rm = \Phi \circ \sigma$ with the 
normalization 
$\Rm (|0\rangle \otimes|0\rangle \otimes|0\rangle)  
= |0\rangle \otimes|0\rangle \otimes|0\rangle$.
In short 3d $R$ $\Rm$ is the intertwiner of $A_q(sl_3)$.
The tetrahedron equation (\ref{TE}) is a corollary of the fact that
the similar intertwiner for $A_q(sl_4)$ can be constructed as 
quartic products of $\Rm$ in two different forms \cite{KV}.
By investigating the intertwining relation
one can show (cf. \cite[Proposition 2.4]{KO1})
\begin{align}\label{rtb}
\Rm^{a,b,c}_{i,j,k}= \Rm^{c,b,a}_{k,j,i},\quad
\Rm^{a,b,c}_{i,j,k}=
\frac{(q^2)_i(q^2)_j(q^2)_k}{(q^2)_a(q^2)_b(q^2)_c}
\Rm_{a,b,c}^{i,j,k}.
\end{align}

\subsection{Intertwining relations of 3d $\boldsymbol{R}$}
By the definition the 3d $R$ satisfies the intertwining relation
$\Rm \circ \pi'_{121} = \pi_{212} \circ \Rm$, where
$\pi'_{121}  = \sigma \circ \pi_{121} \circ \sigma$.
Evaluating the generator $t_{rs}$ on the both sides and picking the 
matrix elements for 
$|i\rangle \otimes |j\rangle \otimes |k\rangle  
\mapsto
|a\rangle \otimes |b\rangle \otimes |c\rangle$
by using (\ref{Rabc}) leads to useful recursion relations on 
$\Rm^{a,b,c}_{i,j,k}$:
\begin{align}
t_{11}:\;&q^{i+k+1} \left(1\!-\!q^{2 j}\right) \Rm^{a,b,c}_{i,j-1,k}
-\left(1\!-\!q^{2 i}\right) \left(1\!-\!q^{2 k}\right)\Rm^{a,b,c}_{i-1,j,k-1}
+\left(1\!-\!q^{2 b+2}\right) \Rm^{a,b+1,c}_{i,j,k}=0,
\nonumber\\
t_{12}:\;&q^k\left(1\!-\!q^{2 j}\right) \Rm^{a,b,c}_{i+1,j-1,k}
+q^i \left(1\!-\!q^{2 k}\right) \Rm^{a,b,c}_{i,j,k-1}
-q^b \left(1\!-\!q^{2 c+2}\right)\Rm^{a,b,c+1}_{i,j,k}=0,
\nonumber\\
t_{21}:\;&q^{i} \left(1\!-\!q^{2 j}\right)\Rm^{a,b,c}_{i,j-1,k+1}
+q^{k}\left(1\!-\!q^{2 i}\right) \Rm^{a,b,c}_{i-1,j,k}
-q^{b}\left(1\!-\!q^{2a+2}\right) \Rm^{a+1,b,c}_{i,j,k}=0,
\nonumber\\ 
t_{22}:\;& q \left(q^{a+c}\!-\!q^{i+k}\right) \Rm^{a,b,c}_{i,j,k}
+\left(1\!-\!q^{2j}\right) \Rm^{a,b,c}_{i+1,j-1,k+1}
-\left(1\!-\!q^{2 a+2}\right) \left(1\!-\!q^{2 c+2}\right)
\Rm^{a+1,b-1,c+1}_{i,j,k}=0,
\nonumber\\
t_{23}:\;&q^j \Rm^{a,b,c}_{i,j,k+1}-q^a\Rm^{a,b,c-1}_{i,j,k}
-q^c \left(1\!-\!q^{2a+2}\right) \Rm^{a+1,b-1,c}_{i,j,k}=0,
\nonumber\\
t_{32}:\;&q^{c}\Rm^{a-1,b,c}_{i,j,k}
-q^{j} \Rm^{a,b,c}_{i+1,j,k}
+q^{a}\left(1-q^{2 c+2}\right)\Rm^{a,b-1,c+1}_{i,j,k}=0,
\label{t32}\\
t_{33}:\;&q^{a+c+1}
   \Rm^{a,b-1,c}_{i,j,k}-\Rm^{a-1,b,c-1}_{i,j,k}+\Rm^{a,b,c}_{i,j+1,k}=0.
\label{t33}
\end{align} 
We have skipped 
$t_{13}$ and $t_{31}$ as they just give 
$(q^{i+j}-q^{a+b})\Rm^{a,b,c}_{i,j,k} = 
(q^{j+k}-q^{b+c})\Rm^{a,b,c}_{i,j,k}=0$,
which is the origin of the conservation law (\ref{cl0}).
The formula (\ref{Rex}) was derived by solving 
(\ref{t32}) and (\ref{t33}) \cite{KO1}. 
The relation for $t_{rs}$ here is transformed into the one for 
$t_{4-s,4-r}$ by using the latter property in (\ref{rtb}).
 
It is known that $\Rm=\Rm^{-1}$ hence 
$\pi'_{121}\circ \Rm = \Rm \circ \pi_{212}$ also holds.
The recursion relations corresponding to this read
\begin{align}
t_{11}:\;&q^{a+c+1}\left(1\!-\!q^{2 b+2}\right) \Rm^{a,b+1,c}_{i,j,k}
+\left(1\!-\!q^{2 j}\right)\Rm^{a,b,c}_{i,j-1,k}
-\left(1\!-\!q^{2 a+2}\right) \left(1\!-\!q^{2c+2}\right)
\Rm^{a+1,b,c+1}_{i,j,k}=0,\label{ti11}\\
t_{12}:\;&q^j \left(1\!-\!q^{2 k}\right)\Rm^{a,b,c}_{i,j,k-1}
- q^c\left(1\!-\!q^{2 b+2}\right)\Rm^{a-1,b+1,c}_{i,j,k}
-q^a \left(1\!-\!q^{2 c+2}\right)\Rm^{a,b,c+1}_{i,j,k}=0,
\nonumber\\
t_{21}:\;&q^{j}\left(1\!-\!q^{2 i}\right) \Rm^{a,b,c}_{i-1,j,k}
-q^{a}\left(1\!-\!q^{2b+2}\right)\Rm^{a,b+1,c-1}_{i,j,k}
-q^{c}\left(1\!-\!q^{2a+2}\right)\Rm^{a+1,b,c}_{i,j,k}=0,
\label{ti21}\\
t_{22}:\;&\left(1\!-\!q^{2 b+2}\right)\Rm^{a-1,b+1,c-1}_{i,j,k}
-\left(1\!-\!q^{2 i}\right) \left(1\!-\!q^{2k}\right) \Rm^{a,b,c}_{i-1,j+1,k-1}
-q\left(q^{a+c}\!-\!q^{i+k}\right) \Rm^{a,b,c}_{i,j,k}=0,
\nonumber\\
t_{23}:\;&q^i\Rm^{a,b,c}_{i,j,k+1}
-q^b\Rm^{a,b,c-1}_{i,j,k}
+ q^k\left(1\!-\!q^{2i}\right)\Rm^{a,b,c}_{i-1,j+1,k}=0,
\nonumber\\
t_{32}:\;&q^{b} \Rm^{a-1,b,c}_{i,j,k}
-q^{i}\left(1\!-\!q^{2 k}\right)\Rm^{a,b,c}_{i,j+1,k-1}
-q^{k}\Rm^{a,b,c}_{i+1,j,k}=0,
\label{ti32}\\
t_{33}:\;&q^{i+k+1}\Rm^{a,b,c}_{i,j+1,k}
+\Rm^{a,b-1,c}_{i,j,k}
-\Rm^{a,b,c}_{i+1,j,k+1}=0.
\label{ti33}
\end{align}
Again we have omitted $t_{13}$ and $t_{31}$ leading to 
the conservation law.
The relation for $t_{rs}$ here is transformed into the one for 
$t_{4-s,4-r}$ by combining the two properties in (\ref{rtb}).

Although not all of the above recursion relations are necessary 
in this paper, we have listed them for convenience 
in possible future works.

\subsection{Lemma}
Now we collect the relations necessary in the proof of Theorem \ref{th:main}.
We recall that $[m]=[m]_q$ is defined in the end of Section \ref{sec:1}.
\begin{lemma}\label{le:sor}
The following relations hold:
\begin{align}
&(1-q^k)\Rm^{a,b-1,c}_{i,j,k-1}
+q^b\Rm^{a-1,b,c}_{i,j,k}-\Rm^{a,b,c}_{i+1,j,k}
-q^i(1-q^k)\Rm^{a,b,c}_{i,j+1,k-1}=0,\label{app:Li}\\
&[b+1]\Rm^{a,b+1,c}_{i,j,k}
+q^{-b}(1+q^{c+1})[a+1]\Rm^{a+1,b,c+1}_{i,j,k}
-(1+q^{c+1})[i]\Rm^{a,b,c+1}_{i-1,j,k}
-q^{-i}[j]\Rm^{a,b,c}_{i,j-1,k}=0,\label{app:Ask}\\
&(1-q^{2k})\Rm^{a,b-2,c}_{i,j,k-2}
+q^{2b}\Rm^{a-2,b,c}_{i,j,k}
-\Rm^{a,b,c}_{i+2,j,k}
-q^{2i}(1-q^{2k})\Rm^{a,b,c}_{i,j+2,k-2}=0,
\label{app:Ngm}\\
&[b+2][b+1]\Rm^{a,b+2,c}_{i,j,k}
+q^{-2b}(1-q^{2c+2})[a+2][a+1]\Rm^{a+2,b,c+2}_{i,j,k}\nonumber\\
&-(1-q^{2c+2})[i][i-1]\Rm^{a,b,c+2}_{i-2,j,k}
-q^{-2i}[j][j-1]\Rm^{a,b,c}_{i,j-2,k}=0.
\label{app:Ymi}
\end{align}
\end{lemma}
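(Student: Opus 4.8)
The plan is to deduce all four identities from the first-order recursion relations of Appendix~\ref{app:3dR} (the intertwining relations \eqref{ti11}--\eqref{ti33} and their companions) together with the conservation law \eqref{cl0}, reserving the closed form \eqref{Rex} only for the double-shift identities. The two single-shift relations \eqref{app:Li} and \eqref{app:Ask} come out as explicit two-term linear combinations of these recursions, whereas \eqref{app:Ngm} and \eqref{app:Ymi} are the genuinely harder part, since the first-order recursions do not compose into them with monomial coefficients.

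For \eqref{app:Li} I would start from \eqref{ti33} with $k$ replaced by $k-1$, that is $q^{i+k}\Rm^{a,b,c}_{i,j+1,k-1}+\Rm^{a,b-1,c}_{i,j,k-1}-\Rm^{a,b,c}_{i+1,j,k}=0$, multiply it by $(1-q^k)$, and add \eqref{ti32}. The coefficient of $\Rm^{a,b,c}_{i+1,j,k}$ then telescopes to $-1$, while that of $\Rm^{a,b,c}_{i,j+1,k-1}$ collapses to $-q^i(1-q^k)$ via the elementary cancellation $(1-q^k)q^{i+k}-q^i(1-q^{2k})=-q^i(1-q^k)$; what remains is precisely \eqref{app:Li}. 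For \eqref{app:Ask} I would first clear the $q$-integer brackets using $1-q^{2m}=-q^m(q-q^{-1})[m]$, turning \eqref{app:Ask} into a relation among the same four matrix elements but carrying $(1-q^{2\bullet})$ factors. That relation is then $q^{-i-j}$ times \eqref{ti11} plus $(1+q^{c+1})q^{-i-j}$ times \eqref{ti21} taken at $c\mapsto c+1$; matching the coefficients of $\Rm^{a,b+1,c}_{i,j,k}$ and of $\Rm^{a+1,b,c+1}_{i,j,k}$ uses the support constraint $a+b+1=i+j$ supplied by \eqref{cl0}, which is harmless since off this locus every term vanishes.

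The double-shift relations \eqref{app:Ngm} and \eqref{app:Ymi} are structurally the ``$f_n$ acts by two'' analogues of \eqref{app:Li} and \eqref{app:Ask}, and here I expect the main obstacle. Iterating the single-shift relation --- applying \eqref{app:Li} at the four shifted points $(a,b,c;i+1,j,k)$, $(a,b-1,c;i,j,k-1)$, $(a-1,b,c;i,j,k)$ and $(a,b,c;i,j+1,k-1)$ --- does generate the four target elements $\Rm^{a,b-2,c}_{i,j,k-2}$, $\Rm^{a-2,b,c}_{i,j,k}$, $\Rm^{a,b,c}_{i+2,j,k}$ and $\Rm^{a,b,c}_{i,j+2,k-2}$, but the induced prefactors (such as $1-q^{k-1}$) fail to reproduce the target factor $1-q^{2k}=(1-q^k)(1+q^k)$, so the naive four-term combination has non-monomial coefficients and leaves uncancelled mixed terms.

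Rather than force this bookkeeping through, I would prove \eqref{app:Ngm} and \eqref{app:Ymi} directly from \eqref{Rex}: substitute the closed form, use the two Kronecker deltas to pin the summation range, and thereby reduce each relation to a single identity among the terminating sums $\sum_{\lambda+\mu=b}$, i.e. ${}_2\phi_1$-type $q$-binomial sums. These should then collapse by the $q$-binomial theorem \cite[eq.(1.3.12)]{GR} together with the $q$-Pascal recursions for $\binom{m}{k}_{q^2}$, exactly the contiguity mechanism that lets the single-shift case be packaged as a two-term recursion. The symmetry \eqref{rtb} can be used throughout to trade $\Rm^{a,b,c}_{i,j,k}$ for $\Rm^{c,b,a}_{k,j,i}$ or for $\Rm^{i,j,k}_{a,b,c}$, which helps normalize the summation index and keep the $q$-binomial manipulations manageable.
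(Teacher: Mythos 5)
Your treatment of the single-shift identities is sound: deriving \eqref{app:Li} as $(1-q^k)$ times \eqref{ti33}$_{k\to k-1}$ plus \eqref{ti32} is exactly the cancellation the paper exploits, and your alternative derivation of \eqref{app:Ask} from $q^{-i-j}$ times \eqref{ti11} plus $(1+q^{c+1})q^{-i-j}$ times \eqref{ti21}$_{c\mapsto c+1}$, with the conservation law \eqref{cl0} used to match the two coefficients, also checks out (the paper instead obtains \eqref{app:Ask} in one line by applying the second relation of \eqref{rtb} to \eqref{app:Li}).

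The genuine gap is in the second half, which is also the harder half. For \eqref{app:Ngm} and \eqref{app:Ymi} you give no proof at all, only the expectation that substituting the closed form \eqref{Rex} will produce $q$-binomial sums that ``should collapse''; that is a plan, not an argument, and carrying it out against the double sum in \eqref{Rex} with four shifted argument patterns is precisely the bookkeeping you set out to avoid. Moreover, the reason you give for abandoning the recursion route rests on a misdiagnosis. The right move is not to iterate the four-term identity \eqref{app:Li} (which indeed produces mismatched factors such as $1-q^{k-1}$), but to iterate the three-term recursions themselves, exactly as in the paper: apply \eqref{ti33} twice to expand $\Rm^{a,b-2,c}_{i,j,k-2}$ and \eqref{ti32} twice to expand $\Rm^{a-2,b,c}_{i,j,k}$, so that every term becomes of the form $\Rm^{a,b,c}_{\bullet,\bullet,\bullet}$. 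In the combination $(1-q^{2k})\Rm^{a,b-2,c}_{i,j,k-2}+q^{2b}\Rm^{a-2,b,c}_{i,j,k}$ the mixed term $\Rm^{a,b,c}_{i+1,j+1,k-1}$ cancels between the two expansions, the coefficient of $\Rm^{a,b,c}_{i+2,j,k}$ telescopes against $-\Rm^{a,b,c}_{i+2,j,k}$, and the coefficient of $\Rm^{a,b,c}_{i,j+2,k-2}$ vanishes because $q^{2k-2}+(1-q^{2k-2})-1=0$; the products $(1-q^{2k})(1-q^{2k-2})$ generated by the double application are exactly what the doubled-exponent factor $1-q^{2k}$ in \eqref{app:Ngm} requires. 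Finally, even granting your route for \eqref{app:Ngm}, there is no need to redo any sum manipulation for \eqref{app:Ymi}: it follows from \eqref{app:Ngm} by the same \eqref{rtb} transposition that turns \eqref{app:Li} into \eqref{app:Ask}, a step your write-up gestures at but never uses for this purpose.
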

\begin{proof}
Regard (\ref{ti32}) as a recursion increasing $a$ by one keeping 
$b$ and $c$.
Similarly (\ref{ti33}) provides a recursion increasing $b$ by one 
keeping $a$ and $c$.
One can apply them to the first two terms in (\ref{app:Li})
to bring all the terms into the form 
$\Rm^{a,b,c}_{\bullet, \bullet, \bullet}$.
The result turns out to be identically zero.
The equality (\ref{app:Ngm}) is shown in the same way 
by applying each recursion {\em twice} to the first two terms therein. 
Finally (\ref{app:Ask}) and (\ref{app:Ymi})
are derived by applying the latter relation in (\ref{rtb})
to (\ref{app:Li}) and (\ref{app:Ngm}), respectively.
\end{proof}

\begin{lemma}\label{le:air}
The following quadratic relation among $\Rm$ holds:
\begin{equation}\label{app:air}
\begin{split}
&[b'\!+\!1]\Rm^{a, b-1,c}_{i,j,k-1}
\Rm^{a',b'+1,k-1}_{i',j',k'}
+q^{b-b'}[a'\!+\!1]
\Rm^{a-1,b,c}_{i,j,k}
\Rm^{a'+1,b',k}_{i',j',k'}\\
&-[i']\Rm^{a, b,c}_{i+1,j,k}
\Rm^{a',b',k}_{i'-1,j',k'}
-q^{i-i'}[j']
\Rm^{a,b,c}_{i,j+1,k-1}
\Rm^{a',b',k-1}_{i',j'-1,k'}=0.
\end{split}
\end{equation}
\end{lemma}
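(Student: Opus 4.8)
The plan is to mimic the reduction that proves Lemma~\ref{le:sor}, turning the quadratic identity into two \emph{linear} recursion relations for the second (primed) factor. Observe first that the four ``first factors'' appearing in (\ref{app:air}) are precisely the four matrix elements of the linear relation (\ref{app:Li}); were the second factors all equal, (\ref{app:air}) would reduce to (\ref{app:Li}) at once, but they differ, which is the source of the real work. Write $P=\Rm^{a,b,c}_{i+1,j,k}$ and $Q=\Rm^{a,b,c}_{i,j+1,k-1}$. The first factors of terms~3 and~4 are already $P$ and $Q$, while those of terms~1 and~2 can be rewritten via the recursions (\ref{ti33}) and (\ref{ti32}): reading (\ref{ti33}) with $k$ replaced by $k-1$ gives $\Rm^{a,b-1,c}_{i,j,k-1}=P-q^{i+k}Q$, and (\ref{ti32}) gives $\Rm^{a-1,b,c}_{i,j,k}=q^{i-b}(1-q^{2k})Q+q^{k-b}P$.

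Substituting these into (\ref{app:air}) and collecting the coefficients of $P$ and $Q$, the identity takes the form $\mathrm{Coeff}_P\cdot P+\mathrm{Coeff}_Q\cdot Q=0$, where
\begin{align*}
\mathrm{Coeff}_P&=[b'+1]\Rm^{a',b'+1,k-1}_{i',j',k'}+q^{k-b'}[a'+1]\Rm^{a'+1,b',k}_{i',j',k'}-[i']\Rm^{a',b',k}_{i'-1,j',k'},\\
\mathrm{Coeff}_Q&=q^i\bigl(-q^{k}[b'+1]\Rm^{a',b'+1,k-1}_{i',j',k'}+q^{-b'}(1-q^{2k})[a'+1]\Rm^{a'+1,b',k}_{i',j',k'}-q^{-i'}[j']\Rm^{a',b',k-1}_{i',j'-1,k'}\bigr).
\end{align*}
Since $P$ and $Q$ then play no further role, it suffices to prove that $\mathrm{Coeff}_P$ and $\mathrm{Coeff}_Q$ vanish identically in the primed indices.

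The key step is to recognize each of these as a known recursion for the primed $\Rm$. I would match $\mathrm{Coeff}_P$ against (\ref{ti21}) specialized to $c'=k$, and $\mathrm{Coeff}_Q$ against (\ref{ti11}) specialized to $c'=k-1$; in each case the three matrix elements occurring are exactly those of the corresponding $\mathrm{Coeff}$. To check proportionality of the coefficients I would use the elementary identity $1-q^{2m}=-q^{m}(q-q^{-1})[m]$, which converts the $(1-q^{2\bullet})$ prefactors of (\ref{ti11}) and (\ref{ti21}) into the $[\bullet]$ prefactors of (\ref{app:air}). A short computation shows that the first two matrix elements of each $\mathrm{Coeff}$ give the common proportionality factor $\bigl(q^{a'+b'+1}(q-q^{-1})\bigr)^{-1}$.

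The one genuinely delicate point — and the step I expect to be the main obstacle — is the third matrix element in each $\mathrm{Coeff}$ (the one carrying $[i']$ or $[j']$): there the same computation produces the factor $\bigl(q^{i'+j'}(q-q^{-1})\bigr)^{-1}$ instead. These agree exactly when $i'+j'=a'+b'+1$, which is precisely the conservation law (\ref{cl0}) for that third matrix element. Off this locus all three matrix elements of the relevant recursion vanish simultaneously by (\ref{cl0}), so $\mathrm{Coeff}_P$, $\mathrm{Coeff}_Q$ and the linear relations are all trivially zero; on it the proportionality is exact. Hence $\mathrm{Coeff}_P$ and $\mathrm{Coeff}_Q$ are scalar multiples of the left-hand sides of (\ref{ti21}) and (\ref{ti11}) and therefore vanish, which establishes (\ref{app:air}). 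This is exactly the local relation needed to identify (\ref{A3}) and (\ref{B3}) in case (v) of the proof of the main theorem.
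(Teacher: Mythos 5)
Your proof is correct, and it takes a genuinely different (and arguably cleaner) route than the paper's. The paper substitutes (\ref{ti33}) into the unprimed factor of the first term and (\ref{ti11}) into the \emph{primed} factor of the fourth term, cancels the resulting cross terms, and groups the remainder into two blocks which are then evaluated via (\ref{ti21}) and (\ref{ti32}) and cancel \emph{against each other} --- neither block vanishes on its own. You instead expand both unprimed factors of the first two terms in the basis $P=\Rm^{a,b,c}_{i+1,j,k}$, $Q=\Rm^{a,b,c}_{i,j+1,k-1}$ via (\ref{ti33}) and (\ref{ti32}), so that the quadratic identity decouples into two \emph{separately vanishing} linear identities in the primed indices, each a rescaled instance of (\ref{ti21}) (third upper index $k$) and (\ref{ti11}) (third upper index $k-1$). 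Both arguments rest on the same four recursions and invoke the conservation law (\ref{cl0}) at the same delicate spot: it is needed to reconcile the factor $\bigl(q^{i'+j'}(q-q^{-1})\bigr)^{-1}$ arising from the $[i']$ (resp.\ $[j']$) term with $\bigl(q^{a'+b'+1}(q-q^{-1})\bigr)^{-1}$ from the other two, and your observation that off the locus $i'+j'=a'+b'+1$ all three matrix elements in each coefficient vanish simultaneously is exactly right --- the paper's counterpart is its exponent $\phi=a'+b'-i'-j'+1$, set to zero by the same conservation law. What your organization buys is transparency: no cross-cancellation has to be discovered, and the identity is exhibited as a composition of a linear expansion of the unprimed factors with two linear relations among the primed ones. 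I checked your two substitution formulas and the proportionality constants; they are all correct.
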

\begin{proof}
From  (\ref{ti33}) and (\ref{ti11}) one has
\begin{align*}
&\Rm^{a,b-1,c}_{i,j,k-1}=
\Rm^{a,b,c}_{i+1,j,k}-q^{i+k}\Rm^{a,b,c}_{i,j+1,k-1},\\
&[j']\Rm^{a',b',k-1}_{i',j'-1,k'}
=q^{a'-j'+1}[a'+1](1-q^{2k})\Rm^{a'+1,b',k}_{i',j',k'}
-q^{i'+k}[b'+1]\Rm^{a',b'+1,k-1}_{i',j',k'}.
\end{align*}
By substituting them to the first and the last $\Rm$ in (\ref{app:air}),
the LHS becomes
\begin{align*}
&[b'\!+\!1]\bigl(\Rm^{a,b,c}_{i+1,j,k}-q^{i+k}
\underline{\Rm^{a,b,c}_{i,j+1,k-1}}\bigr)
\Rm^{a',b'+1,k-1}_{i',j',k'}
+q^{b-b'}[a'\!+\!1]
\Rm^{a-1,b,c}_{i,j,k}
\Rm^{a'+1,b',k}_{i',j',k'}\\
&-[i']\Rm^{a, b,c}_{i+1,j,k}
\Rm^{a',b',k}_{i'-1,j',k'}
-q^{i-i'}\Rm^{a,b,c}_{i,j+1,k-1}
\bigl(q^{a'-j'+1}[a'\!+\!1](1\!-\!q^{2k})\Rm^{a'+1,b',k}_{i',j',k'}
-q^{i'+k}[b'\!+\!1]\underline{\Rm^{a',b'+1,k-1}_{i',j',k'}}\bigr).
\end{align*}
The contributions from the underlined terms cancel.
The remaining four terms are grouped as
\begin{equation}\label{app:rei}
\begin{split}
&\Rm^{a,b,c}_{i+1,j,k}\bigl(
[b'+1]\Rm^{a',b'+1,k-1}_{i',j',k'}-[i']\Rm^{a',b',k}_{i'-1,j',k'}\bigr)\\
&+q^{-b'}[a'+1]
\bigl(q^b\Rm^{a-1,b,c}_{i,j,k}
-q^{i+\phi}(1-q^{2k})\Rm^{a,b,c}_{i,j+1,k-1}\bigr)
\Rm^{a'+1,b',k}_{i',j',k'}
\end{split}
\end{equation}
with $\phi=a'+b'-i'-j'+1$ which is zero due the conservation law 
for $\Rm^{a'+1,b',k}_{i',j',k'}$.
The combination in the first parenthesis in (\ref{app:rei})
is equal to $-q^{k-b'}[a'+1]\Rm^{a'+1,b',k}_{i',j',k'}$
due to (\ref{ti21}) and $\phi=0$.
The one in the second parenthesis
is equal to $q^{k}\Rm^{a,b,c}_{i+1,j,k}$
by (\ref{ti32}).
Thus (\ref{app:rei}) vanishes.
\end{proof}

We note that $\Rm$ satisfies 
further relations
\begin{align}
&q^{a+1}(1+q^c)\Rm^{a,b-1,c}_{i,j,k}-
\Rm^{a-1,b,c-1}_{i,j,k}-q^{j+1}\Rm^{a,b,c-1}_{i+1,j,k}
+(1+q^c)\Rm^{a,b,c}_{i,j+1,k}=0,\label{hmk}\\
&\Rm^{a-2,b,c}_{i,j,k}+q^{2a+2}(1-q^{2c+2})\Rm^{a,b-2,c+2}_{i,j,k}
-(1-q^{2c+2})\Rm^{a,b,c+2}_{i,j+2,k}
-q^{2j+2}\Rm^{a,b,c}_{i+2,j,k}=0,\label{hnt}\\
\begin{split}
&[a+1]\Rm^{a+1,b,c}_{i,j,k}
\Rm^{a'-1,b',k}_{i',j',k'}+
q^{-a+a'}[b+1]\Rm^{a,b+1,c}_{i,j,k+1}
\Rm^{a',b'-1,k+1}_{i',j',k'}\\
&-
[j]\Rm^{a,b,c}_{i,j-1,k+1}
\Rm^{a',b',k+1}_{i',j'+1,k'}-
q^{-j+j'}[i]\Rm^{a,b,c}_{i-1,j,k}
\Rm^{a',b',k}_{i'+1,j',k'}=0.
\end{split}\label{szk}
\end{align}
They are proved similarly to Lemma \ref{le:sor} and Lemma \ref{le:air}.

\section{Trace reduction of 
tetrahedron equation and $q$-oscillator representation of $U_q(A^{(1)}_{n-1})$}
\label{app:A}

For comparison
we include an exposition of type $A$ case
which is known to be related to the trace reduction of the tetrahedron equation
to the Yang-Baxter equation \cite{BS}.

Let ${\bf h}$ be the operator on $F$ acting as 
${\bf h}|m\rangle = m|m\rangle$.
Then the conservation law (\ref{cl0}) implies the commutativity
$[\Rm_{1,2,3}, x^{{\bf h}_1}(xy)^{{\bf h}_2} y^{{\bf h}_3}]=0$.
Multiplying  
$\Rm^{-1}_{4,5,6}x^{{\bf h}_4}(xy)^{{\bf h}_5} y^{{\bf h}_6}$
from the left to (\ref{TEn}) and taking the trace over 
$\overset{4}{F}\otimes\overset{5}{F}\otimes\overset{6}{F}$, one finds that 
$S^{\mathrm{tr}}_{\boldsymbol{\alpha, \beta}}(z) 
=\mathrm{Tr}_3(z^{{\bf h}_3}
\Rm_{\alpha_1, \beta_1, 3}
\Rm_{\alpha_2, \beta_2, 3}\cdots
\Rm_{\alpha_n, \beta_n, 3})
\in \mathrm{End}
(\overset{\boldsymbol\alpha}{F}\otimes \overset{\boldsymbol\beta}{F})$
satisfies the Yang-Baxter equation (\ref{sybe}).
The matrix elements are given by
\begin{align*}
&S^{\mathrm{tr}}(z)\bigl(|{\bf i}\rangle \otimes |{\bf j}\rangle\bigr)
= \sum_{{\bf a},{\bf b}}
S^{\mathrm{tr}}(z)^{{\bf a},{\bf b}}_{{\bf i},{\bf j}}
|{\bf a}\rangle \otimes |{\bf b}\rangle,\\
&S^{\mathrm{tr}}(z)^{{\bf a},{\bf b}}_{{\bf i},{\bf j}}
=\sum_{c_0, \ldots, c_{n-1}\ge 0} 
z^{c_0}
\Rm^{a_1, b_1, c_0}_{i_1, j_1, c_1}
\Rm^{a_2, b_2, c_1}_{i_2, j_2, c_2}\cdots
\Rm^{a_{n\!-\!1}, b_{n\!-\!1}, c_{n\!-\!2}}_{i_{n\!-\!1}, j_{n\!-\!1}, c_{n\!-\!1}}
\Rm^{a_n, b_n, c_{n\!-\!1}}_{i_n, j_n, c_0},
\end{align*}
which is the trace version of the formula (\ref{sact})--(\ref{sabij}).
For instance one has
\begin{align}
&S^{\mathrm{tr}}(z)^{{\bf a},{\bf 0}}_{{\bf a},{\bf 0}}
=\frac{1}{1-zq^{|{\bf a}|}},\nonumber\\
&S^{\mathrm{tr}}(z)^{m{\bf e}_k, l{\bf e}_k}_{m{\bf e}_k, l{\bf e}_k}
=(-q)^{l-m}
S^{\mathrm{tr}}(z)^{l{\bf e}_k, m{\bf e}_k}_{l{\bf e}_k, m{\bf e}_k}
=z^l\frac{(q^{m-l+2}z^{-1};q^2)_l}
{(q^{m-l}z;q^2)_{l+1}}\label{slm}
\end{align}
for any $k$.
See Section \ref{ss:ex} for the notation.
The conservation law takes the form 
\begin{align}\label{Aclaw}
S^{\mathrm{tr}}(z)^{{\bf a},{\bf b}}_{{\bf i},{\bf j}} = 0\;\;
\text{unless}\;\; |{\bf a}| = |{\bf i}|,\; |{\bf b}|  = |{\bf j}|,
\end{align}
therefore $S^{\mathrm{tr}}(z)$ splits into infinitely many 
irreducible components.

The $S^{\mathrm{tr}}(z)$ 
stems from the $q$-oscillator representation of 
$U_q(A^{(1)}_{n-1})\, (n\ge 2)$, which we shall now explain.
The algebra $U_q(A^{(1)}_{n-1})$ is defined 
by (\ref{uqdef}) with $n$ replaced by $n-1$, 
$a_{ij}=2\delta_{i,j}-\delta_{|i-j|,1}-\delta_{|i-j|,n}$ 
and $q_i=q$ for $0 \le i \le n-1$.
For $n\ge 3$ the Dynkin diagram has circle shape:
\[
\begin{picture}(126,43)(0,-20)

\multiput(0,10)(0,-20){2}{
\put(40,0){\circle{6}}
\put(100,0){\circle{6}}
\put(43,0){\line(1,0){14}}
\put(83,0){\line(1,0){14}}
\multiput(59,0)(4,0){6}{\line(1,0){2}} 
}

\put(27.8,0){\circle{6}}
\put(30.1,2.2){\line(1,1){7}}
\put(30.1,-2.2){\line(1,-1){7}}

\put(111.8,0){\circle{6}}
\put(110.1,2.2){\line(-1,1){7}}
\put(110.1,-2.2){\line(-1,-1){7}}

\put(18,3){\makebox(0,0)[t]{$0$}}
\put(40, 23){\makebox(0,0)[t]{$1$}}

\put(39, -15){\makebox(0,0)[t]{$n\!\! -\!\! 1$}}

\end{picture}
\]
It is easy to see that the action of the generators  
\begin{equation}\label{haya}
\begin{split}
e_j|{\bf m}\rangle 
&= x^{\delta_{j,0}}[m_j]|{\bf m}-{\bf e}_j+{\bf e}_{j+1}\rangle,\\
f_j|{\bf m}\rangle &= 
x^{-\delta_{j,0}}[m_{j+1}]|{\bf m}+{\bf e}_j-{\bf e}_{j+1}\rangle,\\
k_j|{\bf m}\rangle &= q^{-m_{j}+m_{j+1}}|{\bf m}\rangle
\end{split}
\end{equation}
defines a $U_q(A^{(1)}_{n-1})$ module structure on 
$F^{\otimes n}$.
Here the indices are to be understood mod $n$
and $x$ is a nonzero parameter.
The representation (\ref{haya}) essentially goes back to \cite{Ha}.

Denote the representation space by $V_x=F^{\otimes n}[x,x^{-1}]$.
It decomposes as
\begin{align*}
V_x = \bigoplus_{l\ge 0} V_{x,l},\qquad
V_{x,l} = \bigoplus_{{\bf m}\in (\Z_{\ge 0})^n,
\; |{\bf m}|=l}\Q(q)|{\bf m}\rangle,
\end{align*}
where the symbol $ |{\bf m}|$ is defined under (\ref{d22}).
The component $V_{x,l}$ is isomorphic, as a module over the 
classical subalgebra $U_q(A_{n-1}) 
= \langle e_i,f_i, k^{\pm 1}_i \rangle_{1 \le i <n}$, to 
the highest weight representation with  
highest weight $l\varpi_{n-1}$\footnote{$\varpi_j$ denotes the $j$ th 
fundamental weight.
By a conventional reason the highest weight here 
is a dual of  $l\varpi_1$ in \cite{BS}.
}.
Its highest weight vector is $|l{\bf e}_n\rangle$.
Let 
$R = R_{l,m}(z) \in 
\mathrm{End}(V_{x,m} \otimes V_{y,l})$  $(z=x/y)$
be the quantum $R$ matrix.
Namely $R$ satisfies (\ref{drrd}) for $U_q=U_q(A^{(1)}_{n-1})$.
We normalize it by
$R_{m,l}(z)(|m{\bf e}_n\rangle \otimes |l{\bf e}_n\rangle)
=z^l\frac{(q^{m-l+2}z^{-1};q^2)_l}
{(q^{m-l}z;q^2)_{l+1}}
|m{\bf e}_n\rangle \otimes |l{\bf e}_n\rangle$.
Up to the normalization of $R$ matrices and 
conventional difference, the following equality was announced in \cite{BS}.
\begin{proposition}
\begin{align*}
S^{\mathrm{tr}}(z) = \bigoplus_{m,l \ge 0} R_{m,l}(z).
\end{align*}
\end{proposition}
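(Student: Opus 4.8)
The plan is to mirror the proof of Theorem~\ref{th:main}: pin down each block of $R_{m,l}(z)$ by the intertwining relations together with a normalization, and check that the corresponding block of $S^{\mathrm{tr}}(z)$ satisfies the same conditions. First I would observe that the conservation law (\ref{cl0}), applied to every factor in the definition of $S^{\mathrm{tr}}(z)$, forces ${\bf a}+{\bf b}={\bf i}+{\bf j}$ componentwise; summing the auxiliary constraints around the closed trace then yields the coarser law (\ref{Aclaw}). Hence $S^{\mathrm{tr}}(z)$ is block diagonal and restricts to an endomorphism of each $V_{x,m}\otimes V_{y,l}$, while the componentwise law is precisely equivalent to commutativity with $k_r\otimes k_r$ for the representation (\ref{haya}), since $k_r$ acts by $q^{-m_r+m_{r+1}}$ for every $r$ (indices mod $n$). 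I would emphasize here that, unlike in Theorem~\ref{th:main}, no gauge transformation is needed: every generator in (\ref{haya}), including $e_0$ and $f_0$, preserves the total occupation $|{\bf m}|$, so no factors of $i$ or $q^{1/2}$ arise and $S^{\mathrm{tr}}(z)$ is to be compared with $R_{m,l}(z)$ directly.

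Next I would invoke uniqueness of the quantum $R$ matrix on each block. Each $V_{x,m}$ is a finite-dimensional irreducible $U_q(A^{(1)}_{n-1})$-module, the affinization of the symmetric tensor representation of highest weight $m\varpi_{n-1}$, and for generic $z=x/y$ the product $V_{x,m}\otimes V_{y,l}$ is irreducible (as in Proposition~\ref{prop:irred} and the standard theory of such tensor products); thus $R_{m,l}(z)$ is determined up to scalar by (\ref{kR})--(\ref{fR}). The scalar is then fixed by comparing the action on the tensor product of highest weight vectors $|m{\bf e}_n\rangle\otimes|l{\bf e}_n\rangle$, which spans the one-dimensional top weight space: the prescribed eigenvalue of $R_{m,l}(z)$ equals the explicit diagonal element $S^{\mathrm{tr}}(z)^{m{\bf e}_n,l{\bf e}_n}_{m{\bf e}_n,l{\bf e}_n}$ recorded in (\ref{slm}). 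It therefore remains to verify that $S^{\mathrm{tr}}(z)$ obeys (\ref{eR}) and (\ref{fR}) for $0\le r\le n-1$; as in Theorem~\ref{th:main} I would carry out (\ref{fR}) in detail and let (\ref{eR}) follow by the companion relations of Appendix~\ref{app:3dR}.

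For the interior generators $1\le r\le n-1$ the action in (\ref{haya}) is identical to that in Propositions~\ref{pr:repD}--\ref{pr:repC}, so the computation is word for word case (v) of the proof of Theorem~\ref{th:main}: writing out the analogues of (\ref{Ah}) and (\ref{Bh}), one matches all auxiliary summation indices and reduces the desired equality to the local quadratic relation (\ref{app:air}) between the two consecutive factors sharing the index $c_r$, with the spectral weight $z^{c_0}$ merely a spectator. Nothing new occurs here.

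The main obstacle is the remaining case $r=0$, which has no interior counterpart. Because the trace identifies $c_n=c_0$, the generators $e_0,f_0$ act cyclically, simultaneously touching the first factor $\Rm^{a_1,b_1,c_0}_{i_1,j_1,c_1}$ and the last factor $\Rm^{a_n,b_n,c_{n-1}}_{i_n,j_n,c_0}$, which are coupled precisely through the traced index $c_0$ that also carries the weight $z^{c_0}$. Writing out the $r=0$ analogues of (\ref{Ah}), (\ref{Bh}) and cancelling the common middle factors, the identity $A^{{\bf a},{\bf b}}_{{\bf i},{\bf j}}=B^{{\bf a},{\bf b}}_{{\bf i},{\bf j}}$ again collapses to (\ref{app:air}) (and, for $e_0$, to its companion (\ref{szk})), now with factor $n$ and factor $1$ playing the roles of the two neighbouring $\Rm$'s. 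The delicate point, and the reason this case genuinely differs from the interior one, is the $z$-bookkeeping: the two pairs of terms enter with prefactors $x^{-1}$ and $y^{-1}$ coming from $f_0$ acting on $V_x$ versus $V_y$, and the shared index $c_0$ appears shifted by one between them. Aligning them by the substitution $c_0\to c_0-1$ in the appropriate terms produces exactly one factor $z=x/y$, converting $y^{-1}$ into $x^{-1}$ so that all four terms acquire a common prefactor; only then does (\ref{app:air}) apply pointwise and force the difference to vanish. Once this matching is confirmed, the $r=0$ case closes and the identification $S^{\mathrm{tr}}(z)=\bigoplus_{m,l\ge0}R_{m,l}(z)$ follows.
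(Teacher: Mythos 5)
Your proposal is correct and follows essentially the same route as the paper: the conservation law yields the block decomposition and the $k_r$-relation, (\ref{slm}) fixes the normalization on the highest weight vectors $|m{\bf e}_n\rangle\otimes|l{\bf e}_n\rangle$, and the intertwining relations collapse to the local quadratic relation (\ref{app:air}) of Lemma \ref{le:air} --- your careful $z$-bookkeeping at $r=0$ (shifting $c_0$ by one to trade $y^{-1}$ for $x^{-1}$, with factor $n$ and factor $1$ as the neighbouring $\Rm$'s coupled through the traced index) is precisely the content compressed in the paper's remark that the reduction works ``including the $r=0$ case.'' The only inessential difference is that you also verify the $e_r$-relations via (\ref{szk}), whereas the paper asserts that the $f_r$-relations alone suffice.
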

\begin{proof}
The conservation law (\ref{Aclaw}) tells that 
$S^{\mathrm{tr}}(z)$ satisfies (\ref{kR})
and splits in the same pattern as the RHS.
By (\ref{slm}) the both sides have the same normalization on 
$|m{\bf e}_n\rangle \otimes |l{\bf e}_n\rangle$.
Thus it suffices to show 
$(1\otimes f_r +f_r \otimes k^{-1}_r) S^{\mathrm{tr}}(z) = 
S^{\mathrm{tr}}(z) (f_r\otimes 1 + k^{-1}_r\otimes f_r)$
for $0 \le r \le n-1$.
As the case (v) in the proof of Theorem \ref{th:main}, 
this reduces exactly to Lemma \ref{app:air} including $r=0$ case.
\end{proof}

\section*{Acknowledgments}
The authors thank Takahiro Hayashi for informing the references 
\cite{MMNNSU,P}.
This work is supported by Grants-in-Aid for
Scientific Research No.~23340007, No.~24540203 and
No.~23654007 from JSPS.

\end{document}